\documentclass[journal,10pt,twocolumn,twoside]{IEEEtran}

\IEEEoverridecommandlockouts
\usepackage{amsmath}
\usepackage{amsfonts}
\usepackage{cases}
\usepackage{setspace}
\usepackage{fancybox}
\usepackage{subfigure}
\usepackage{epsfig}
\usepackage{graphicx}
\usepackage{epstopdf}
\usepackage{float}
\usepackage{multirow}
\usepackage{color}
\usepackage{amsmath}
\usepackage{multirow}
\usepackage{indentfirst}
\usepackage{dsfont}
\usepackage{amsfonts}
\usepackage{times,amsmath,color,amssymb,epsfig,cite,subfigure,algorithm,algorithmic}

\newenvironment{proof}[1][Proof]{\begin{trivlist}
\item[\hskip \labelsep {\bfseries #1}]}{\end{trivlist}}

\newcommand{\qed}{\nobreak \ifvmode \relax \else
      \ifdim\lastskip<1.5em \hskip-\lastskip
      \hskip1.5em plus0em minus0.5em \fi \nobreak
      \vrule height0.40em width0.6em depth0.25em\fi}
\newcommand{\tabincell}[2]{\begin{tabular}{@{}#1@{}}#2\end{tabular}}
\newtheorem{lemma}{Lemma}
\newtheorem{theorem}{Theorem}

\begin{document}

\title{On the Design of Constant Modulus Probing Waveforms with Good Correlation Properties for MIMO Radar via Consensus-ADMM Approach}

\author{Jiangtao~Wang,
        Yongchao~Wang,~\IEEEmembership{Member,~IEEE}
\thanks{Manuscript received November 06, 2018; revised April 15, 2019 and
June 21, 2019; accepted June 25, 2019. Date of publication XX xx,
2019; date of current version July 8, 2019. The associate editor coordinating
the review of this manuscript and approving it for publication was Dr. Ian Clarkson.
This work was supported in part by National Science Foundation of China
under grant 61771356 and the Fundamental Research Funds for the Central Universities.
This work was presented in part at 2018 International Conference on Acoustics, Speech and Signal Processing (ICASSP) in Calgary, Alberta, Canada.
 (Corresponding author: Yongchao Wang.)}
\thanks{J. Wang and Y. Wang are
State Key Lab. of ISN, Xidian University, Xi'An, China, 710071,  (e-mail: jt.wang@stu.xidian.edu.cn, ychwang@mail.xidian.edu.cn).}}

\markboth{}%
{}

\maketitle

\begin{abstract}
In this paper, we design constant modulus probing waveforms with good correlation properties for collocated multi-input multi-output (MIMO) radar systems. The main content is as follows:
first, we formulate the design problem as a fourth order polynomial minimization problem with constant modulus constraints. Then,
by exploiting introduced auxiliary variables and their inherent structures, the polynomial optimization model is equivalent to a non-convex consensus minimization problem.
Second, a customized alternating direction method of multipliers (ADMM) algorithm is proposed to solve the non-convex problem approximately. In the algorithm, all the subproblems can be solved analytically. Moreover, all subproblems except one subproblem can be performed in parallel.
Third, we prove that the customized ADMM algorithm is theoretically-guaranteed convergent if proper parameters are chosen.
Fourth, two variant ADMM algorithms, based on stochastic block coordinate descent and accelerated gradient descent, are proposed to reduce computational complexity and speed up the convergence rate.
Numerical examples show the effectiveness of the proposed consensus-ADMM algorithm and its variants.
\end{abstract}

\begin{IEEEkeywords}
 Constant modulus probing waveform, beampattern design, MIMO radar, auto-/cross-correlation, ADMM.
\end{IEEEkeywords}

\IEEEpeerreviewmaketitle

\section{Introduction}
\IEEEPARstart{M}{\MakeLowercase u}ltiple-input multiple-output (MIMO) radar system is regarded as a promising paradigm for the next generation radar systems.
Unlike the standard phased-array radar to transmit scaled versions of a single waveform, probing signals, transmitted via different antennas in the MIMO radar system,  are independent. Through this additional waveform diversity, MIMO radar owns superior capabilities compared with the traditional phased-array radar, such as higher spatial resolution, more flexible beampattern, and better detection performance \cite{Li_09}\cite{Li_07}.
MIMO radar system can be classified into two categories: distributed and collocated.
In the former, transmitters are widely separated in space and each of them can provide an independent view of the target, which can improve detection performance \cite{Haimovich_08}\cite{He_10}.
In the latter, antennas in the transmitter are placed in close proximity and different probing signals
from various collocated antennas can generate various desired
beampatterns, leading to an improved directional resolution and interference
rejection capability \cite{Lehmann_06}--\hspace{-0.001cm}\cite{Chen_08}.

Probing signal waveforms play a central role in the signal processing
performance of a MIMO radar system.
Specifically, since matching the desired spacial beampatterns and lowering spacial correlations levels can increase spacial directional gain and eliminate clutter interference from other directions, a lot of researchers have been attracted to designing probing signal waveforms to meet these goals in recent years.
Authors in \cite{Stoica_07_probing} and \cite{Li_08} matched the waveform covariance matrix to the desired beampattern through a semidefinite programming method, then exploited the cyclic algorithm to synthesize the constant modulus waveform and pursued good auto-/cross-correlation properties. In \cite{Wang_12}, authors formulated the waveform design problem as a fourth order polynomial minimization problem with constant modulus constraints, then proposed a quasi-Newton solving algorithm to approximate the model's optimal solution. Moreover, the approach can be applied to the scenario of desired low correlation sidelobe levels within certain lag intervals. The authors in \cite{Hua_13} focused on the direct or indirect control of mainlobe ripples in the beampattern design problem. They reformulated the design as a feasibility problem with the lowest system cost. To achieve a high signal to interference plus noise ratio and low sidelobe levels performance, a fixed waveform covariance matrix was proposed in \cite{Ahmed_2014}. However, the matrix does not exploit the full waveform diversity. In \cite{Zhang_15}, the authors proposed a novel transmit beampattern matching design one-step method, which obtains the transmit signal  matrix by unconstrained optimization. The drawback of the waveforms generated by this method is that their envelope is not constant modulus. To reduce the computational complexity, a closed-form covariance matrix design method was proposed in \cite{Lipor_14} based on discrete Fourier transform (DFT). The authors in \cite{Bouchoucha_15} and \cite{Bouchoucha_17} also applied the DFT-based technique to a planar-antenna-array, and developed a finite-alphabet constant-envelope waveforms design algorithm for the desired beampattern. However, the performance of the DFT-based method is slightly worse for a small number of antennas.
 The authors in \cite{Aubry_16} studied the robust transmit beampattern design problem and exploited the semidefinite relaxation technique to treat non-convex optimization problems. In \cite{Aldayel_17_conf} \cite{Aldayel_17}, the authors exploited  successive convex relaxation techniques to handle non-convex quadratic equality constraints in the constant modulus waveform design problem. In \cite{Cheng_17}, the authors proposed a double cyclic alternating direction method of multipliers (D-ADMM) algorithm to solve the non-convex beampattern design problem and in \cite{Cheng_18}, they considered the joint optimization problem of the covariance matrix and antenna position. In \cite{Zhao_18}, the authors applied the majorization-minimization technique to match the desired transmit beampattern, which enjoys faster convergence than D-ADMM.
The authors in \cite{Yu_18} focused on MIMO radar waveform design under the constant modulus and similarity constraints. They proposed a sequential iterative algorithm based on the block coordinate descent (BCD) framework, which has shown its superiority compared with the CA approach in \cite{Li_08}.
In \cite{Yu_19}, the authors considered the constant modulus waveform design to achieve a desired wideband MIMO radar beampattern with space-frequency nulling.
In each algorithm iteration, the authors optimized the original non-convex problem's approximation version meaning that the proposed algorithm can be executed in parallel.
Besides the above works, some researchers synthesized transmit waveforms under some practical constraints, such as mainlobe ripple constraints \cite{Wen_18},
spectral shape constraints \cite{Liang_15}, constant modulus constraints \cite{Liang_16}, similarity constraints \cite{Cheng_14} \cite{Liang_18}, and transmitted power constraints \cite{Li_17}. However, these works only focus on the synthesized beampattern design problem and pay little attention to the correlation properties of the waveforms.

 In this paper, we extend our previous work in \cite{ADMM_con} and propose a consensus-ADMM approach to design constant modulus probing waveforms, which can match the desired spacial beampatterns while suppressing the spacial auto-correlation and cross-correlation levels in the collocated MIMO radar system. Its main contributions are as follows.

\begin{itemize}
\item \emph{Consensus problem formulation}:
  the design problem is formulated as a fourth order polynomial minimization problem with constant modulus constraints. Then, by introducing  auxiliary variables, it is further equivalent to a non-convex consensus minimization problem.
\item \emph{Parallel solving algorithm}: consensus-ADMM is customized to solve the non-convex consensus problem approximately. In the implementation, all the subproblems can be solved analytically. Moreover, except one subproblem, all subproblems can be performed in parallel. This  favourable execution architecture is the main advantage of the proposed consensus-ADMM over state-of-the-art techniques, which is very suitable for practical implementation.
\item \emph{Theoretically-guaranteed performance}: we prove that the solving algorithm is guaranteed convergent to a stationary point of the non-convex optimization problem if proper parameters are chosen.
\item \emph{Improvement strategies}: two variant ADMM algorithms, based on stochastic block coordinate descent (SBCD) and accelerated gradient descent (AGD), are proposed to reduce computational complexity and speed up the convergence rate.
\end{itemize}

The rest of the paper is organized as follows. In Section II, we formulate the beampattern design problem to a non-convex consensus minimization problem. In Section III, consensus-ADMM is customized to solve the non-convex minimization problem. The performance analysis, including convergence and computational complexity of the proposed consensus-ADMM algorithm, are presented in Section IV. Two variant algorithms, named by consensus-ADMM-SBCD and consensus-ADMM-AGD, are given to improve computational complexity and convergence performance of the solving algorithm  respectively in Section V. Finally, Section VI demonstrates the effectiveness of the proposed consensus-ADMM algorithms and the conclusions are given in Section VII.

\emph{Notation}: bold lowercase and uppercase letters denote column vectors and matrices and italics denote scalars. $\mathds{R}$ and $\mathds{C}$ denote the real field and the complex field respectively.
The superscripts $(\cdot)^*$, $(\cdot)^T$ and $(\cdot)^H$ denote  conjugate operator, transpose operator and conjugate transpose operator respectively.
$x_i$  denotes the $i$-th element of vector $\mathbf{x}$.
$|\cdot|$ denotes the absolute value. The subscript $\|\cdot\|_2$ denotes the Euclidean vector norm and $\|\cdot\|_F$ denotes the Frobenius matrix norm.
$\nabla(\cdot)$ represents the gradient of a function. $\rm{Re}(\cdot)$ takes the real part of the complex variable and ${\rm Tr}(\cdot)$ denotes the trace of a matrix.
$\langle \cdot,\cdot\rangle$ and $\otimes$ are the dot product operator and convolution operator respectively.
${\rm vec}(\cdot)$ vectorizes a matrix by stacking its columns on top of one another and ${\rm mat}(\cdot,N,M)$ reshapes a vector to an $N\times M$ matrix. $\Pi(\cdot)$ denotes the projection operator. $\mathbb{E}[\cdot]$ performs the expectation of the variables and ${\mathbf{I}}$ denotes an identity matrix.

\section{System Model And Problem Formulation}
\label{sec:format}
\subsection{System Model}
\begin{figure}[htb]
  \centering
\centerline{\psfig{figure=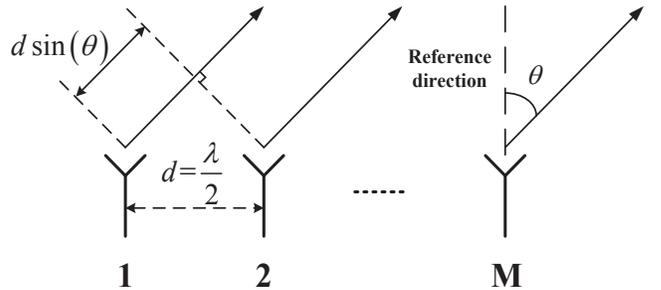,width=8.5cm,height=4.0cm}}
\caption{MIMO radar transmitter equipped with $M$ antennas.}
\end{figure}
Consider a MIMO radar system equipped with $M$ antennas in a uniform linear array as shown in Figure 1.
In the system, we set the inter-element spacing $d = \frac{\lambda}{2}$, where $\lambda$ is the signal wavelength. The spacial direction $\theta$ belongs to angle set $\Theta=(-90^\circ,90^\circ)$, which represents the antenna scanning scope.
The steering vector $\mathbf{a}\in\mathds{C}^M$ at direction $\theta$ is given by
\begin{equation}\label{definiton steer vector}
\mathbf{a}_\theta = \left[1,e^{j\pi \sin\theta},\cdots,e^{j\pi(M-1) \sin\theta}\right]^T.
\end{equation}
The probing waveform transmitted by the $m$-th antenna is denoted by $\mathbf{x}_m=[x_{1,m}, \cdots ,x_{N,m}]^T,m=1,\cdots,M$.
Then, the waveforms transmitted by the MIMO radar system can be expressed by the following $N$-by-$M$ matrix
\begin{equation}\label{definintion X}
\mathbf{X}=[\mathbf{x}_1, \cdots ,\mathbf{x}_M].
\end{equation}
The synthesized signal at direction $\theta$ (far field) is
\begin{equation}\label{synthesis signal}
 \mathbf{s}_\theta=\mathbf{X}\mathbf{a}_\theta.
\end{equation}
The beampattern, which describes the power distribution at direction $\theta$, is defined as
\begin{equation}\label{definintion beampattern}
P_{\theta} = \mathbf{a}_\theta^{H}\mathbf{X}^H\mathbf{X}\mathbf{a}_\theta.
\end{equation}
To describe the correlation properties of the probing waveforms at time slot $n$, we define a $N$-by-$N$ off-line diagonal matrix $\mathbf{S}_n$ as follows
\[
\begin{split}
 &\qquad\qquad \quad n~\text{\rm zeros}\\
&{\bf{S}}_n=
\left[
  \begin{array}{cccccc}
    ~ & \overbrace{0~ \cdots ~0} & 1 & ~ & \mathbf{\scalebox{1.5}0} & ~ \\
    ~ & ~ & ~ &  \ddots &~ & ~  \\
    ~ & ~ & ~ & ~ & 1 & ~ \\
    ~ & \mathbf{\scalebox{2.5}0} & ~ & ~ & ~ & ~  \\
  \end{array}
\right].
\end{split}
\]
Through $\mathbf{S}_n$, the time-delayed signal can be expressed by $\mathbf{S}_n\mathbf{X}\mathbf{a}_\theta$.
Then, the spacial correlation of the probing waveforms and its delayed version can be obtained by
\begin{equation}\label{definition correlation}
P_{\theta_i,\theta_j,n} = \mathbf{a}_{\theta_i}^{H}\mathbf{X}^H\mathbf{S}_n\mathbf{X}\mathbf{a}_{\theta_j},
\end{equation}
where $\theta_i,\theta_j\in \hat{\Theta}\subset\Theta$ and $\hat{\Theta}=\{\theta_1,\cdots,\theta_K\}$ is the considered angle set of spacial directions.
Specifically, when $\theta_i=\theta_j$, $P_{\theta_i,\theta_i,n}$ denotes the spacial auto-correlation, otherwise $P_{\theta_i,\theta_j,n}$ means the spacial cross-correlation.

\subsection{Problem Formulation}
We optimize MIMO radar probing waveforms based on the following considerations: first, as mentioned in \eqref{definintion beampattern}, since the beampattern describes the spacial power distribution, we desire that it can match the directions of interest, which can decrease clutter components and extend the probing distance; second, since low auto-correlation sidelobes can increase spacial resolution and low cross-correlation levels can reduce  interferences from other directions, we desire that the optimized probing waveforms have low auto-correlation sidelobes and low cross-correlation levels; third,  in order to maximize the efficiency of the power amplifier in the MIMO radar transmitter, the probing waveforms  should be constant modulus,  i.e., $|x_{i,m}|=1,\ i = 1, \cdots ,N,\ m = 1, \cdots M$.

Based on the above considerations, we formulate the following optimization model to design MIMO radar probing waveforms
\begin{equation}\label{problem_model}
\begin{split}
&\hspace{0.5cm}\min_{\alpha,\mathbf{X}} \hspace{0.5cm} e(\alpha,\mathbf{X}) + P_c(\mathbf{X}),\\
&{\rm  subject\ to} \ |x_{i,m}|=1,\ i=1,\cdots,N,\ m=1,\cdots,M, \\
& \hspace{1.7cm}  \alpha\in(0,\alpha_{\rm max}],
\end{split}
\end{equation}
where
\begin{subequations}
\begin{equation}\label{problem e}
\begin{split}
&\hspace{-1.3cm}e(\alpha,\mathbf{X}) = \sum_{\theta\in \Theta}|\alpha \bar{P}_{\theta} - \mathbf{a}_\theta^{H}\mathbf{X}^H\mathbf{X}\mathbf{a}_\theta|^2, \\
\end{split}
\end{equation}
\begin{equation}\label{problem p_c}
\begin{split}
 \hspace{-0.3cm}P_c(\mathbf{X})\!\!=\!\!\!\!\!\!\sum_{n\in \mathcal{T}\backslash0}\sum_{\theta_i \in \hat{\Theta}}\!\!w_{\rm ac}^2|P_{\theta_i,\theta_i,n}|^2
 \!\!+\!\!\sum_{n\in \mathcal{T}}\!\!\!\! \sum\limits_{{\theta_i}\neq{\theta_j}\atop\theta_i,\theta_j \in \hat{\Theta}} \!\!\!\! w_{\rm cc}^2|P_{\theta_i,\theta_j,n}|^2,
\end{split}
\end{equation}
\end{subequations}
and $w_{\rm ac}$ and $w_{\rm cc}$ are preset positive weights and $\mathcal{T}$ is the time delay parameter set of interest.
 In the objective function of model \eqref{problem_model}, the first term $e(\alpha,\mathbf{X}) $ represents the mismatching square error between the designed beampattern and the desired beampattern $\bar{P}_{\theta}$ and $\alpha$ is a scaling factor that needs to be optimized. The second term $P_c(\mathbf{X})$ relates to the auto-correlation sidelobes and cross-correlation levels at the considered spacial directions. Because $P_{\theta_i,\theta_j,-n}^* = P_{\theta_i,\theta_j,n}$, correlation levels for $n<0$ are not included.
It is difficult to solve \eqref{problem_model} directly since its quartic objective function and constant modulus constraints are non-convex. In the following, by exploiting its inherent structure, we show how to design an efficient solving algorithm to pursue theoretically-guaranteed solutions.

First, let $\mathbf{X}$'s phase be new variable. Since  $x_{i,m}=e^{j\phi_{i,m}}$, we can drop constant modulus constraints and rewrite \eqref{problem_model} as the following minimization problem
\begin{equation}\label{problem_without_constraint}
\begin{split}
&\hspace{0.4cm} \min_{\alpha,\mathbf{\Phi}} \hspace{0.45cm} ~~e\left(\alpha,\mathbf{X}(\mathbf{\Phi})\right) + P_c\left(\mathbf{X}(\mathbf{\Phi})\right),\\
&{\rm subject\ to}\ \ \alpha\in(0, \alpha_{\max}], 0\preceq\mathbf{\Phi}\prec2\pi,
\end{split}
\end{equation}
where the constraint $0\preceq\mathbf{\Phi}\prec2\pi$ means all the elements in $\mathbf{\Phi}$ belong to $[0,2\pi)$.

Second, we define the following quantities
\begin{equation}\label{quantities}
\begin{split}
&\mathbf{a}_{\theta,\theta}= {\rm vec}(\mathbf{a}_{\theta}\mathbf{a}_{\theta}^{H}),\ p = \sum_{\theta\in \Theta} \bar{P}_{\theta},\\
&\mathbf{q} = -\sum_{\theta\in \Theta} \bar{P}_{\theta}\mathbf{a}_{\theta,\theta}, \ \mathbf{A} = \sum_{\theta\in \Theta} \mathbf{a}_{\theta,\theta} \mathbf{a}^H_{\theta,\theta}.
\end{split}
\end{equation}
Then, the first term $e(\alpha,\mathbf{X}(\mathbf{\Phi}))$ in \eqref{problem_without_constraint} can be rewritten as
 \begin{equation}\label{definition_new_e}
  e(\alpha,\mathbf{X}(\mathbf{\Phi})) = \mathbf{v}^H(\alpha,\mathbf{\Phi})\mathbf{Q}\mathbf{v}(\alpha,\mathbf{\Phi}),
\end{equation}
where
\begin{subequations}\label{Qv}
\begin{align}
\mathbf{v}(\alpha,\mathbf{\Phi}) &= \left[
                       \begin{array}{l}
                        ~~~~~~~~~~\alpha\\
                        {\rm vec}\left(\mathbf{X}^H(\mathbf{\Phi})\mathbf{X}(\mathbf{\Phi})\right)
                       \end{array}
                       \right], \label{v}\\
\mathbf{Q}& =  \left[
                       \begin{array}{ll}
                        p & \mathbf{q}^H\\
                        \mathbf{q} & \mathbf{A}
                       \end{array}
                       \right]. \label{Q}
\end{align}
\end{subequations}
Third, to let $P_c(\mathbf{X})$ be in a compact expression, we define $K$-by-$K$ matrices set $\{\mathbf{B}_n(\mathbf{\Phi})|n\!\in\!\mathcal{T}\}$, where $K$ is spacial directions of interest, i.e., set $\hat\Theta$'s size. Specifically, when $n=0$,
\[
  \mathbf{B}_n(\mathbf{\Phi})\!=\!\!\!
  \begin{bmatrix} \!\!\!&0&w_{\rm cc}P_{\theta_1,\theta_2,n} &\dotsb &w_{\rm cc}P_{\theta_1,\theta_K,n} \\
  &\!w_{\rm cc}P_{\theta_2,\theta_1,n}&0 &\dotsb &w_{\rm cc}P_{\theta_2,\theta_K,n} \\
  &\!\vdots\! &\!\vdots\! &\!\ddots\! &\vdots \\
  \!&\!w_{\rm cc}P_{\theta_K,\theta_1,n} &w_{\rm cc}P_{\theta_K,\theta_2,n} &\dotsb &0
  \end{bmatrix},
\]
and when $n\neq0$,
\[
  \mathbf{B}_n(\mathbf{\Phi})\!=\!\!\!
  \begin{bmatrix} \!&\!w_{\rm ac}P_{\theta_1,\theta_1,n} &w_{\rm cc}P_{\theta_1,\theta_2,n} &\dotsb &w_{\rm cc}P_{\theta_1,\theta_K,n} \\
  \!&\!\vdots \!&\!\vdots\! &\!\vdots\! &\!\vdots\! \\
  \!&\!w_{\rm cc}P_{\theta_K,\theta_1,n} &\!w_{\rm cc}P_{\theta_K,\theta_2,n} &\!\dotsb\! &\!w_{\rm ac}P_{\theta_K,\theta_K,n}
  \end{bmatrix}.
\]
 Then, $P_c(\mathbf{X}(\mathbf{\Phi}))$ in \eqref{problem_without_constraint} can be rewritten as
\begin{equation}\label{definition_new_p_c}
  P_c(\mathbf{X}(\mathbf{\Phi}))=\sum_{n\in \mathcal{T}} \|\mathbf{B}_n(\mathbf{\Phi})\|_F^2.
\end{equation}
To facilitate the subsequent derivations, we further define
\begin{subequations}
\begin{align}
&h(\alpha,\mathbf{\Phi}) = \mathbf{v}^H(\alpha,\mathbf{\Phi})\mathbf{Q}\mathbf{v}(\alpha,\mathbf{\mathbf{\Phi}}),\label{f_0} \\
&f_n(\mathbf{\Phi}) = \|\mathbf{B}_n(\mathbf{\Phi})\|_F^2, \label{f_n}
\end{align}
\end{subequations}
and introduce a set of auxiliary variables $\{\mathbf{\Phi}_n| n\in \mathcal{T}\}$.
Then, problem \eqref{problem_without_constraint} can be formulated as the following consensus-like problem \cite{Hong_16}
\begin{equation}\label{Consensus}
\begin{split}
&\min_{\alpha\in\mathds{R},\left\{\mathbf{\Phi},\mathbf{\Phi}_n\right\} \in \mathds{R}^{N \times M}} h(\alpha,\mathbf{\Phi}) + \sum_{n\in \mathcal{T}} f_n(\mathbf{\Phi}_n),\\
&\hspace{0.8cm} {\rm subject\ to}\hspace{0.55cm}\mathbf{\Phi} = \mathbf{\Phi}_n,\ \ n\in \mathcal{T}, \\
&\hspace{2.8cm} \alpha\in(0,\alpha_{\rm max}], \ {0\preceq\mathbf{\Phi}\prec2\pi}.
\end{split}
\end{equation}
In comparison with \eqref{problem_without_constraint}, model \eqref{Consensus} allows subfunction $h(\alpha,\mathbf{\Phi})$ or $f_n(\mathbf{\Phi}_n)$ to handle its local variable independently when $\mathbf{\Phi}$ or $\mathbf{\Phi}_n$ is fixed. In the next section, an efficient algorithm, named by consensus-ADMM, is proposed to solve \eqref{Consensus} approximately. Moreover, we prove that consensus-ADMM converges to a stationary point of model \eqref{problem_without_constraint}\footnotemark.
\footnotetext{Several state-of-the-art algorithms \cite{Kerahroodi_17}--\hspace{-0.01cm}\cite{Aubry_18}, can be customized to handle problem \eqref{problem_model} (not direct). The algorithms keep objective function's value monotonic decreasing or increasing in the iteration procedure. This monotonic property along with some mild conditions is exploited to prove the convergence of the algorithms. Specifically, in \cite{Aubry_18}, the presented algorithm, similar to our proposed consensus-ADMM approach, can also be implemented in parallel. Applying these algorithms to solving \eqref{problem_model} can be  interesting research directions.}
To the best of our knowledge, it is the first time that a parallel algorithm structure is introduced to match the desired beampattern for the MIMO radar system, which means that the proposed consensus-ADMM algorithm is more suitable for the large scale MIMO radar waveforms design problem.
Moreover, convergence analysis and improved variants of the proposed consensus-ADMM algorithm are also considered.

\section{Consensus-ADMM SOLVING ALGORITHM}
\label{sec:pagestyle}

 The augmented Lagrangian function of problem \eqref{Consensus} can be written as
\begin{equation}\label{AL}
\begin{split}
\!\!\!&\mathcal{L}(\alpha,\mathbf{\Phi},\{\mathbf{\Phi}_n,\mathbf{\Lambda}_n,n\in\mathcal{T}\}) \\
\!\!\!=& h(\alpha,\mathbf{\Phi}) \!+\!\!
 \!\sum_{n\in \mathcal{T} } \!\! \left(\!f_n({\mathbf{\Phi}_n})
\! +\! \langle \mathbf{\Lambda}_n,\!\mathbf{\Phi}_n\!\!-\!\mathbf{\Phi}\rangle\!+\! \!\frac{\rho_n}{2}\|\mathbf{\Phi}_n\!\!-\!\mathbf{\Phi}\|_F^2 \!\right),
\end{split}
\end{equation}
where $\mathbf{\Lambda}_n$ and $\rho_n$ are the Lagrangian multiplier and penalty parameters respectively. To facilitate discussions later, we define the following functions
\begin{equation}\label{AL n}
\begin{split}
&\mathcal{L}_n\left(\mathbf{\Phi},\mathbf{\Phi}_n,\mathbf{\Lambda}_n\right) \\
=& f_n({\mathbf{\Phi}_n})
 + \langle\mathbf{\Lambda}_n,\mathbf{\Phi}_n-\mathbf{\Phi}\rangle+ \frac{\rho_n}{2}\|\mathbf{\Phi}_n-\mathbf{\Phi}\|_F^2, \ n\in\mathcal{T}.
\end{split}
 \end{equation}

 Based on \eqref{AL} and \eqref{AL n}, the proposed consensus-ADMM algorithm \cite{Hong_16} can be described as
\begin{subequations}\label{ADMM ori}
\begin{align}
&\{\alpha^{k+1}\!,\mathbf{\Phi}^{k+1}\}\!=\!\!\!\!\underset{\alpha\in(0, \alpha_{\max}],\atop{{0\preceq\mathbf{\Phi}\prec2\pi}}} {\arg \min}\!\! \mathcal{L}\!\left(\alpha,\!\mathbf{\Phi}, \! \{\mathbf{\Phi}_n^k,\mathbf{\Lambda}_n^k,n\!\in\!\mathcal{T}\}\right),\label{step1 ADMM ori}\\
&\mathbf{\Phi}_n^{k+1}=  \underset{\mathbf{\Phi}_n} {\arg \min}\ \ \mathcal{L}_n\left(\mathbf{\Phi}^{k+1}, \mathbf{\Phi}_n,\mathbf{\Lambda}_n^k\right),\  n\in\mathcal{T}, \label{step2 ADMM ori}\\
&\mathbf{\Lambda}_n^{k+1}= \mathbf{\Lambda}_n^{k} + \rho_n(\mathbf{\Phi}_n^{k+1} -\mathbf{\Phi}^{k+1}), \ \ \ \ \ \ \ n\in\mathcal{T},\label{step3 ADMM ori}
\end{align}
\end{subequations}
where $k$ is the iteration number.

{\it Remarks on \eqref{ADMM ori}:} since  $\mathcal{L}\left(\alpha,\mathbf{\Phi}, \{\mathbf{\Phi}_n^k,\mathbf{\Lambda}_n^k, n\in\mathcal{T}\}\right)$ and $\mathcal{L}_n\left(\mathbf{\Phi}^{k+1}, \mathbf{\Phi}_n,\mathbf{\Lambda}_n^k\right)$ are non-convex, it is difficult to implement \eqref{step1 ADMM ori} and \eqref{step2 ADMM ori}. However, we have the following lemma to characterize Lipschitz properties of $\nabla h(\alpha,\mathbf{\Phi})$ and $\nabla f_n(\mathbf{\Phi})$ (see proof in Appendix A).
\begin{lemma}\label{Lipschtiz continuous}
 gradients $\nabla h(\alpha,\mathbf{\Phi})$ and $\nabla f_n(\mathbf{\Phi})$ are Lipschitz continuous, i.e.,
\begin{subequations}\label{Lipschitiz}
\begin{align}
& \|\nabla_{\alpha} h(\alpha, \mathbf{\Phi})\!-\!\nabla_{\alpha} h(\hat{\alpha}, \mathbf{\Phi})\|_F\! \leq\! L_{\alpha} |\alpha\!-\!\hat{\alpha}|,\label{Lip alpha} \\
& \|\nabla_{\mathbf{\Phi}} h(\alpha, \mathbf{\Phi})\!-\!\nabla_{\mathbf{\Phi}} h(\alpha, \hat{\mathbf{\Phi}})\|_F\! \leq\! L \|\mathbf{\Phi}\!-\!\hat{\mathbf{\Phi}}\|_F\!, \label{Lip phi}\\
&\|\nabla f_n({\mathbf{\Phi}})\!-\!\nabla f_n( \hat{\mathbf{\Phi}})\|_F\!\leq \! L_n\|{\mathbf{\Phi}}\!-\!\hat{\mathbf{\Phi}}\|_F, \ n\in \mathcal{T},\label{Lip phi n}
  \end{align}
\end{subequations}
where constants
\begin{subequations}\label{L_ineqs}
\begin{align}
&L_{\alpha}\!\geq\! 2p, \label{L_ineqs_a}\\
&L\geq4(M\!-\!1)(\alpha_{\rm max}\bar{P}_{\rm max}\!+\!M^2N\!+\!2M\!-\!2)|\Theta|, \label{L_ineqs_b} \\
&L_n \geq  2w_{\rm c}^2(2M\!-\!1)(\!M^2N\!+\!2M\!-\!1)K^2. \label{L_ineqs_c}
\end{align}
\end{subequations}
Here, $\bar{P}_{\rm max}=\max\limits_{\theta\in\Theta}\{\bar{P}_{\theta}\}$, $w_{\rm c}= \max\{w_{\rm ac},w_{\rm cc}\}$, and $|\Theta|$ denotes the size of set $\Theta$.
\end{lemma}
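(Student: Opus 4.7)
The plan is to reduce all three inequalities to bounds on partial Hessians of the relevant functions, via the standard implication
$\|\nabla g(y)-\nabla g(\hat y)\|_F \le \sup_z\|\nabla^2 g(z)\|_{\mathrm{op}}\,\|y-\hat y\|_F$. The key structural feature that yields uniform (global) Hessian bounds is that under $x_{i,m}=e^{j\phi_{i,m}}$ every partial derivative of $x_{i,m}$ in $\phi_{i,m}$ has unit modulus, so neither $h$ nor $f_n$ can blow up as $\mathbf\Phi$ varies; the Hessian bounds can therefore be evaluated in closed form.

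Inequality \eqref{Lip alpha} follows by direct inspection. Using the block form \eqref{Q}, $h(\alpha,\mathbf\Phi)=p\alpha^2+2\alpha\,\mathrm{Re}(\mathbf q^H\mathbf u)+\mathbf u^H\mathbf A\mathbf u$ where $\mathbf u=\mathrm{vec}(\mathbf X^H(\mathbf\Phi)\mathbf X(\mathbf\Phi))$, so $\nabla_\alpha h=2p\alpha+2\,\mathrm{Re}(\mathbf q^H\mathbf u)$ and subtracting at $\alpha$ and $\hat\alpha$ gives $L_\alpha=2p$.

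For \eqref{Lip phi}, I would compute $\nabla^2_{\mathbf\Phi} h$ entrywise by the chain rule applied to $\alpha\,\mathrm{Re}(\mathbf q^H\mathbf u)$ and $\mathbf u^H\mathbf A\mathbf u$, then bound each entry using three ingredients: (i) $|x_{i,m}|=1$, so $|(\mathbf X^H\mathbf X)_{ij}|\le N$; (ii) each first partial $\partial(\mathbf X^H\mathbf X)_{ij}/\partial\phi_{l,s}$ is either zero or a single unit-modulus complex exponential, and is supported only when $s\in\{i,j\}$ with $i\ne j$, which produces the prefactor $M-1$; (iii) $\|\mathbf a_{\theta,\theta}\|_2^2=M^2$, summed over $|\Theta|$ angles. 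The linear-in-$\alpha$ contribution to the Hessian is then bounded by a constant multiple of $\alpha_{\max}\bar P_{\max}|\Theta|$ and the quartic contribution from $\mathbf u^H\mathbf A\mathbf u$ by $(M^2N+2M-2)|\Theta|$ up to the common prefactor $4(M-1)$, which combine to yield \eqref{L_ineqs_b}.

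For \eqref{Lip phi n}, I would carry out the analogous program on $f_n(\mathbf\Phi)=\sum_{\theta_i,\theta_j}\omega_{ij}^2|P_{\theta_i,\theta_j,n}|^2$ with $\omega_{ij}\in\{w_{\mathrm{ac}},w_{\mathrm{cc}}\}$, so $\omega_{ij}\le w_c$. Using $|(\mathbf X^H\mathbf S_n\mathbf X)_{ij}|\le N$ (since $\mathbf S_n$ has at most $N$ unit entries), $\|\mathbf a_\theta\|_2=\sqrt M$, and the same unit-modulus derivative estimates, one obtains $|P_{\theta_i,\theta_j,n}|\le M^2N$ and bounds each Hessian entry by a constant multiple of $w_c^2(M^2N+2M-1)K^2$. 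Here $2M-1$ comes from the two-term chain rule on $|P_{\theta_i,\theta_j,n}|^2$ and $K^2$ from summing over indices $(i,j)$. The main obstacle is not conceptual but the tedious index bookkeeping in these quartic Hessians: one must track exactly which mixed partials $\partial^2/\partial\phi_{l,s}\partial\phi_{l',s'}$ are nonzero, and verify that the combinatorial prefactors in \eqref{L_ineqs_b}--\eqref{L_ineqs_c} are large enough to swallow every contribution.
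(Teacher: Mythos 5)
Your handling of \eqref{Lip alpha} is correct and coincides with the paper's: $\nabla_{\alpha}h(\alpha,\mathbf{\Phi})=2p\alpha+2\,{\rm Re}(\mathbf{q}^H\mathbf{z}(\mathbf{\Phi}))$ is affine in $\alpha$ with $\mathbf{\Phi}$ held fixed, so $L_{\alpha}=2p$ falls out exactly. For \eqref{Lip phi} and \eqref{Lip phi n}, however, you take a genuinely different reduction from the paper. The paper bounds the ratio $\|\nabla_{\mathbf{\Phi}}h(\alpha,\mathbf{\Phi})-\nabla_{\mathbf{\Phi}}h(\alpha,\hat{\mathbf{\Phi}})\|_F/\|\mathbf{\Phi}-\hat{\mathbf{\Phi}}\|_F$ by a mediant inequality followed by a coordinatewise Lagrange mean value theorem, which reduces everything to uniform bounds on the \emph{diagonal} second partials $|\partial^2 h/\partial\bar{\phi}_{i,m}^2|$ and $|\partial^2 f_n/\partial\bar{\phi}_{i,m}^2|$; the constants \eqref{L_ineqs_b} and \eqref{L_ineqs_c} are precisely those diagonal bounds. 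You instead invoke $\|\nabla g(y)-\nabla g(\hat y)\|\leq\sup_z\|\nabla^2 g(z)\|_{\rm op}\|y-\hat y\|$, which requires controlling the \emph{full} Hessian operator norm, mixed partials included. That is the textbook-correct reduction (indeed more defensible than the paper's single-variable mean value step, since $\partial h/\partial\phi_{i,m}$ depends on every coordinate of $\mathbf{\Phi}$), but it changes what must be proved quantitatively.

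That quantitative step is where your proposal has a genuine gap: a bound on the largest diagonal Hessian entry does not bound the operator norm, and the off-diagonal structure here is not benign. For the quartic part of $h$, the mixed partials $\partial^2 h/\partial\phi_{i,m}\partial\phi_{l,s}$ contain cross terms of the form $2\,{\rm Re}\left(\frac{\partial\mathbf{z}^H}{\partial\phi_{i,m}}\mathbf{A}\frac{\partial\mathbf{z}}{\partial\phi_{l,s}}\right)$, which are nonzero for essentially every index pair and each of magnitude up to $8(M-1)^2|\Theta|$; a Gershgorin row-sum estimate of the resulting dense $NM\times NM$ Hessian therefore gives a contribution of order $8NM(M-1)^2|\Theta|$, already roughly twice the dominant term $4(M-1)M^2N|\Theta|$ of the stated constant \eqref{L_ineqs_b}. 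A similar issue arises for $f_n$. So the index bookkeeping you defer as "tedious but not conceptual" is in fact the crux: with your reduction you must either produce a sharper operator-norm estimate (e.g., bounding $\|J^H\mathbf{A}J\|_{\rm op}$ through the Jacobian $J=\partial\mathbf{z}/\partial\,{\rm vec}(\mathbf{\Phi})$ rather than entrywise) and verify it fits under \eqref{L_ineqs_b}--\eqref{L_ineqs_c}, or concede larger constants than the lemma asserts. As written, the proof of the stated inequalities with the stated constants is not complete.
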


Based on the above Lemma, $\mathcal{L}\left(\alpha,\mathbf{\Phi}, \{\mathbf{\Phi}_n^k,\mathbf{\Lambda}_n^k, n\in\mathcal{T}\}\right)$ and $\mathcal{L}_n\left(\mathbf{\Phi}^{k+1}, \mathbf{\Phi}_n,\mathbf{\Lambda}_n^k\right)$ can be upper-bounded by the following strongly convex functions \cite{Bertsekas_99}
\begin{subequations}
  \begin{align}
    \!\!\!\!\!\!\!\!&\mathcal{L}(\alpha,\!\mathbf{\Phi}\!,\! \{\mathbf{\Phi}_n^k,\!\mathbf{\Lambda}_n^k,\!n\in\mathcal{T}\})\!\leq\! \mathcal{U}(\alpha,\!\mathbf{\Phi},\!\{\mathbf{\Phi}_n^k,\!\mathbf{\Lambda}_n^k,\!n\in\mathcal{T}\}),\\
    &\mathcal{L}_n\left(\mathbf{\Phi}^{k+1}, \mathbf{\Phi}_n,\mathbf{\Lambda}_n^k\right)
\leq\mathcal{U}_n(\mathbf{\Phi}^{k+1}, \mathbf{\Phi}_n, \mathbf{\Lambda}_n^k),\ n\in\mathcal{T},
  \end{align}
\end{subequations}
where
 \begin{equation}\label{upperbound U0}
\begin{split}
 \!\!\!&\hspace{0.5cm}\mathcal{U}(\alpha,\mathbf{\Phi}, \{\mathbf{\Phi}_n^k, \mathbf{\Lambda}_n^k, n\!\in\!\mathcal{T}\}) \\
 \!\!\!&\triangleq h(\alpha^k,{\mathbf{\Phi}}^{k})+\langle \nabla_{\mathbf\Phi} h(\alpha^k, {\mathbf{\Phi}}^{k}),\mathbf\Phi-\mathbf\Phi^k\rangle \\
 \!\!\!\hspace{-0.2cm}&+\!\langle \nabla_{\alpha} h(\alpha^k\!, {\mathbf{\Phi}}^{k}),\alpha\!-\!\alpha^k\rangle\!+\!\frac{L}{2}\|{\mathbf{\Phi}}\!-\!{\mathbf{\Phi}}^{k}\|_F^2\!+\!\frac{L_{\alpha}}{2}|\alpha\!-\!\alpha^{k}|^2\\
\!\!\! &+ \sum_{n\in \mathcal{T}}\mathcal{L}_n(\mathbf{\Phi},\mathbf{\Phi}_n^k,\mathbf{\Lambda}_n^k),
\end{split}
\end{equation}
and
\begin{equation}\label{upperbound Un}
\begin{split}
  &\hspace{0.5cm}\mathcal{U}_n(\mathbf{\Phi}^{k+1}, \mathbf{\Phi}_n, \mathbf{\Lambda}_n^k)\\
&\triangleq f_n({\mathbf{\Phi}}^{k+1})+\langle \nabla f_n({\mathbf{\Phi}}^{k+1})+\mathbf{\Lambda}_n^k,{\mathbf{\Phi}_n}-{\mathbf{\Phi}}^{k+1}\rangle \\
&\ \ + \frac{\rho_n+L_n}{2}\|\mathbf{\Phi}_n-\mathbf{\Phi}^{k+1}\|_F^2.
\end{split}
\end{equation}
Then, instead of solving \eqref{step1 ADMM ori} and \eqref{step2 ADMM ori} directly, we propose the following customized consensus-ADMM algorithm \eqref{relax ADMM}.
\begin{subequations}\label{relax ADMM}
\begin{align}
&\{\alpha^{k+1}\!,\mathbf{\Phi}^{k+1}\}\!=\!\!\!\!\underset{\alpha\in(0, \alpha_{\max}],\atop{{0\preceq\mathbf{\Phi}\prec2\pi}}} {\arg \min}\! \mathcal{U}\!\left(\alpha,\!\mathbf{\Phi}, \! \{\mathbf{\Phi}_n^k,\mathbf{\Lambda}_n^k,n\in\mathcal{T}\}\right),\label{relax ADMM a}\\
&\mathbf{\Phi}_n^{k+1}=  \underset{\mathbf{\Phi}_n} {\arg \min} \ \ \mathcal{U}_n\left(\mathbf{\Phi}^{k+1},\mathbf{\Phi}_n,\mathbf{\Lambda}_n^k\right),\label{relax ADMM b}\\
&\mathbf{\Lambda}_n^{k+1} = \mathbf{\Lambda}_n^{k} + \rho_n(\mathbf{\Phi}_n^{k+1} -\mathbf{\Phi}^{k+1}).\label{relax ADMM c}
\end{align}
\end{subequations}

Notice that $\mathcal{U}(\alpha,\mathbf{\Phi}, \{\mathbf{\Phi}_n^k, \mathbf{\Lambda}_n^k, n\!\in\!\mathcal{T}\})$ is a strongly convex quadratic function with respect to $\alpha$ and $\mathbf{\Phi}$ respectively and $\mathcal{U}_n(\mathbf{\Phi}^{k+1}, \mathbf{\Phi}_n, \mathbf{\Lambda}_n^k)$ is also some strongly convex quadratic function with respect to $\mathbf{\Phi}_n$. Therefore, the minimizers of $\mathcal{U}(\alpha,\mathbf{\Phi}, \{\mathbf{\Phi}_n^k, \mathbf{\Lambda}_n^k, n\!\in\!\mathcal{T}\})$ and $\mathcal{U}_n(\mathbf{\Phi}^{k+1}, \mathbf{\Phi}_n, \mathbf{\Lambda}_n^k)$ can be determined through the following procedures:

Set the gradients of the functions $\mathcal{U}(\alpha,\mathbf{\Phi}, \{\mathbf{\Phi}_n^k, \mathbf{\Lambda}_n^k, n\!\in\!\mathcal{T}\})$ and $\mathcal{U}_n(\mathbf{\Phi}^{k+1}, \mathbf{\Phi}_n, \mathbf{\Lambda}_n^k)$ to be zeros
\begin{subequations}
\begin{align}
& \nabla_{\alpha} \mathcal{U}(\alpha,\mathbf{\Phi}, \{\mathbf{\Phi}_n^k, \mathbf{\Lambda}_n^k, n\!\in\!\mathcal{T}\}) \!= \!0,\\
&  \nabla_{\mathbf{\Phi}} \mathcal{U}(\alpha,\mathbf{\Phi}, \{\mathbf{\Phi}_n^k, \mathbf{\Lambda}_n^k, n\!\in\!\mathcal{T}\}) =\!0,\\
 &\nabla_{\mathbf{\Phi}_n} \mathcal{U}_n(\mathbf{\Phi}^{k+1}, \mathbf{\Phi}_n, \mathbf{\Lambda}_n^k)=\!0,
\end{align}
\end{subequations}
which lead to the following linear equations
\begin{subequations}\label{solutions}
\begin{align}
&\nabla_{\alpha} h(\alpha^k,{\mathbf{\Phi}}^{k})\!+ \! L_{\alpha}(\alpha\!-\!\alpha^{k})\!=\!0,\\
& \nabla_{\mathbf\Phi} h(\!\alpha^k, {\mathbf{\Phi}}^{k}\!)\!+\!L(\!{\mathbf{\Phi}}\!-\!{\mathbf{\Phi}}^{k}\!)\!-\!\!\!\sum_{n\in \mathcal{T}}\!\!\left(\!\mathbf{\Lambda}_n^k\!+\!\rho_n({\mathbf{\Phi}}_n^k\!-\!{\mathbf{\Phi}}\!)\right)\!\!=\!0,\\
&\nabla f_n({\mathbf{\Phi}}^{k+1})\!+\!\mathbf{\Lambda}_n^k  \!+\! (\rho_n+L_n)(\mathbf{\Phi}_n-\mathbf{\Phi}^{k+1})\!=\!0.
\end{align}
\end{subequations}
Then, by solving the above linear equations and projecting the solutions onto the corresponding feasible regions, we can obtain
\begin{subequations}\label{solutions}
\begin{align}
&\alpha^{k+1} = \underset{(0,\alpha_{
\rm max]}}\Pi \left(\alpha^k-\frac{\nabla_{\alpha}h(\alpha^k,\mathbf\Phi^k)}{L_\alpha}\right), \label{solution alpha}\\
&\mathbf\Phi^{k+1}\!\!=\!\!\!\underset{[0,2\pi)}\Pi\!\! \left(\! \frac{L\mathbf\Phi^k\!\!-\!\!\nabla_{\mathbf\Phi}h(\alpha^k\!,\!\mathbf\Phi^k) \!\!+\!\!\!\!\underset{{n\in\mathcal{T}}}\sum\!(\mathbf\Lambda_n^k\!+\!\rho_n\mathbf\Phi_n^k)}{L+\underset{{n\in\mathcal{T}}}\sum\rho_n}\!\right)\!, \label{solution phi}\\
&\mathbf\Phi_n^{k+1} = \mathbf\Phi^{k+1} -\frac{\nabla f_n(\mathbf\Phi^{k+1})+\mathbf{\Lambda}_n^k}{\rho_n+L_n}. \label{solution phi n}
\end{align}
\end{subequations}

Combining \eqref{relax ADMM c} and \eqref{solutions}, we summarize the proposed consensus-ADMM algorithm in Table \ref{proposed ADMM algorithm}.
\begin{table}[t]
\renewcommand \arraystretch{1.2}
\caption{The proposed consensus-ADMM algorithm }
\label{proposed ADMM algorithm}
\centering
\begin{tabular}{l}
 \hline\hline
 {\bf Initialize:} compute Lipschitz constants $\{L_n,\!n\!\in\!\mathcal{T}\}$ and \\\hspace{0.2cm} $L$. Set iteration index $k=1$, initialize $\mathbf{\Phi}^1$ and $\{\mathbf\Lambda_n^1\}$, \\ \hspace{0.2cm} and let $\{\mathbf{\Phi}^1 = \mathbf{\Phi}_n^1, n\in \mathcal{T}\}$. \\
  {\bf Repeat} \\
  \hspace{0.2cm} Step 1: compute $\alpha^{k+1}$ and $\mathbf\Phi^{k+1}$ via \eqref{solution alpha} and \eqref{solution phi}. \\
  \hspace{0.2cm} Step 2: update $\{\mathbf\Phi_n^{k+1}, n\in\mathcal{T}\}$ and $\{\mathbf{\Lambda}_n^{k+1}, n\in\mathcal{T}\}$\\
  \hspace{1.4cm}  via \eqref{solution phi n} and \eqref{relax ADMM c} respectively in parallel. \\
 {\bf Until} some preset termination criterion is satisfied.\\
 \hline\hline
\end{tabular}
\end{table}

\section{Analysis}

\subsection{Convergence Issue}

We have the following theorem to show convergence properties of the proposed consensus-ADMM algorithm in Table I.
\begin{theorem}\label{convergence}
Let $(\alpha^*,\mathbf{\Phi}^{*},\{\mathbf{\Phi}_n^*, \mathbf{\Lambda}_n^{*},n\in\mathcal{T}\})$ denote some limit point of the sequence $\left(\alpha^k,\mathbf{\Phi}^k, \{\mathbf{\Phi}_n^k, \mathbf{\Lambda}_n^k, n\in\mathcal{T}\}\right)$. $\forall n\in \mathcal{T}$, if the penalty parameters $\rho_n$ and Lipschitz constants $L_n$ satisfy $\rho_n\geq 9L_n$, the proposed consensus-ADMM algorithm is convergent, i.e.,
\begin{equation}\label{convergence variables}
\begin{split}
&\lim\limits_{k\rightarrow+\infty}\alpha^{k}=\alpha^{*}, \ \ \lim\limits_{k\rightarrow+\infty}\mathbf{\Phi}^{k}=\mathbf{\Phi}^{*}, \lim\limits_{k\rightarrow+\infty}\mathbf{\Phi}_n^{k}=\mathbf{\Phi}_n^{*},\\
& \lim\limits_{k\rightarrow+\infty}\mathbf{\Lambda}_n^{k}=\mathbf{\Lambda}_n^{*},\ \mathbf{\Phi}^*=\mathbf{\Phi}_n^*,\ \  \forall~n\in \mathcal{T}.
\end{split}
\end{equation}
Moreover, $(\alpha^*,\mathbf{\Phi}^{*})$ is a stationary point of problem \eqref{problem_without_constraint}, i.e., it satisfies the following inequalities
\begin{equation}\label{stationary point}
\begin{split}
\hspace{-0.2cm}&\langle\nabla_{\alpha} e\left(\alpha^*\!,\mathbf{X}(\mathbf{\Phi}^*)\!\right),{\alpha}\!-\!{\alpha^*}\rangle\geq0,\ \alpha\!\in\!(0,\alpha_{\max}],\\
\hspace{-0.2cm}&\langle \nabla_{\!\mathbf{\Phi}} e\!\left(\!\alpha^*\!\!,\mathbf{X}(\!\mathbf{\Phi}^*)\!\right)\!\!+\!\!\nabla_{\!\mathbf{\Phi}}  P_c\!\left(\!\mathbf{X}(\mathbf{\Phi}^*)\!\right)\!,\mathbf{\Phi}\!-\!\mathbf{\Phi}^*\rangle\!\!\geq\!\!0,{0\!\!\preceq\!\mathbf{\Phi}\!\!\prec\!2\pi}.
\end{split}
\end{equation}
\end{theorem}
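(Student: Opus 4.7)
The plan follows the standard three-step analysis for nonconvex linearized ADMM: control the dual ascent by primal differences, establish a sufficient-decrease property for the augmented Lagrangian $\mathcal{L}$ in \eqref{AL}, combine these with a lower-boundedness argument to obtain summable successive differences, and pass to the limit in the subproblem optimality conditions. The first step would exploit the first-order optimality of the strongly convex subproblem \eqref{relax ADMM b} together with the multiplier update \eqref{relax ADMM c} to obtain the closed-form identity $\mathbf{\Lambda}_n^{k+1}=-\nabla f_n(\mathbf{\Phi}^{k+1})-L_n(\mathbf{\Phi}_n^{k+1}-\mathbf{\Phi}^{k+1})$. Differencing in $k$, applying the Lipschitz bound \eqref{Lip phi n}, and using $\|a+b+c\|^2\le 3(\|a\|^2+\|b\|^2+\|c\|^2)$ yields $\|\mathbf{\Lambda}_n^{k+1}-\mathbf{\Lambda}_n^k\|_F^2\le 6L_n^2\|\mathbf{\Phi}^{k+1}-\mathbf{\Phi}^k\|_F^2+3L_n^2\|\mathbf{\Phi}_n^{k+1}-\mathbf{\Phi}_n^k\|_F^2$, which controls the only piece of $\mathcal{L}^{k+1}-\mathcal{L}^k$ that can be positive.

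Next, split $\mathcal{L}^{k+1}-\mathcal{L}^k$ into three telescoping pieces corresponding to the $(\alpha,\mathbf{\Phi})$-update, the $\mathbf{\Phi}_n$-updates, and the multiplier updates. The first two are upper-bounded by strictly negative quadratics in the corresponding successive differences: for the first, $\mathcal{U}$ in \eqref{upperbound U0} is a strongly convex majorant of $\mathcal{L}$ that coincides with it at $(\alpha^k,\mathbf{\Phi}^k)$, with moduli $L_\alpha$ and $L+\sum_n\rho_n$ by Lemma \ref{Lipschtiz continuous}; for the second, the analogous argument uses $\mathcal{U}_n$ in \eqref{upperbound Un} combined with the descent lemma for $f_n$ to relate $\mathcal{L}_n$ and $\mathcal{U}_n$ at the shifted linearization point. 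The third piece equals $\sum_n\rho_n^{-1}\|\mathbf{\Lambda}_n^{k+1}-\mathbf{\Lambda}_n^k\|_F^2$, which by the bound from the first step is controlled by $\sum_n(6L_n^2/\rho_n)\|\mathbf{\Phi}^{k+1}-\mathbf{\Phi}^k\|_F^2+\sum_n(3L_n^2/\rho_n)\|\mathbf{\Phi}_n^{k+1}-\mathbf{\Phi}_n^k\|_F^2$. The threshold $\rho_n\ge 9L_n$ is exactly what keeps the net coefficient of each primal difference strictly negative, producing the sufficient-decrease inequality $\mathcal{L}^{k+1}-\mathcal{L}^k\le -c_\alpha|\alpha^{k+1}-\alpha^k|^2-c_\Phi\|\mathbf{\Phi}^{k+1}-\mathbf{\Phi}^k\|_F^2-\sum_n c_n\|\mathbf{\Phi}_n^{k+1}-\mathbf{\Phi}_n^k\|_F^2$ with positive constants.

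Lower-boundedness of $\mathcal{L}^k$ follows by substituting the identity of the first step into \eqref{AL} and invoking the descent-lemma inequality $f_n(\mathbf{\Phi}_n^k)-f_n(\mathbf{\Phi}^k)-\langle\nabla f_n(\mathbf{\Phi}^k),\mathbf{\Phi}_n^k-\mathbf{\Phi}^k\rangle\ge-\tfrac{L_n}{2}\|\mathbf{\Phi}_n^k-\mathbf{\Phi}^k\|_F^2$, which leaves $\mathcal{L}^k\ge h(\alpha^k,\mathbf{\Phi}^k)+\sum_n f_n(\mathbf{\Phi}^k)+\sum_n\tfrac{\rho_n-3L_n}{2}\|\mathbf{\Phi}_n^k-\mathbf{\Phi}^k\|_F^2\ge 0$ once $\rho_n\ge 3L_n$; the nonnegativity of $h$ and $f_n$ comes from their sum-of-squared-magnitudes form. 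Telescoping the sufficient-decrease inequality then gives $\sum_k(|\alpha^{k+1}-\alpha^k|^2+\|\mathbf{\Phi}^{k+1}-\mathbf{\Phi}^k\|_F^2+\sum_n\|\mathbf{\Phi}_n^{k+1}-\mathbf{\Phi}_n^k\|_F^2)<\infty$, so all successive primal differences vanish and, via the first step, the multiplier differences do too. Boundedness of the iterates follows from the box constraints on $(\alpha,\mathbf{\Phi})$ and from the first-step identity for $(\mathbf{\Phi}_n,\mathbf{\Lambda}_n)$, so Bolzano--Weierstrass supplies limit points; combined with the vanishing successive differences the whole sequence converges as in \eqref{convergence variables}, and the identity $\rho_n(\mathbf{\Phi}_n^{k+1}-\mathbf{\Phi}^{k+1})=\mathbf{\Lambda}_n^{k+1}-\mathbf{\Lambda}_n^k\to 0$ forces $\mathbf{\Phi}^*=\mathbf{\Phi}_n^*$.

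Finally, to derive stationarity \eqref{stationary point}, pass $k\to\infty$ in the variational forms of the $\alpha$- and $\mathbf{\Phi}$-optimality conditions of \eqref{relax ADMM a}, add the $\mathbf{\Phi}_n$-condition of \eqref{relax ADMM b} after substituting $\mathbf{\Phi}^*=\mathbf{\Phi}_n^*$, and use $\mathbf{\Lambda}_n^{k+1}-\mathbf{\Lambda}_n^k\to 0$ together with the vanishing of successive differences to eliminate the multiplier and proximal-linearization remainders; the surviving terms are exactly $\nabla_{\mathbf{\Phi}}e(\alpha^*,\mathbf{X}(\mathbf{\Phi}^*))+\nabla_{\mathbf{\Phi}}P_c(\mathbf{X}(\mathbf{\Phi}^*))$ tested against arbitrary feasible $\mathbf{\Phi}-\mathbf{\Phi}^*$, which is the second inequality of \eqref{stationary point}; the $\alpha$-inequality follows analogously. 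The main obstacle is the sufficient-decrease step: the dual contribution couples $\|\mathbf{\Phi}^{k+1}-\mathbf{\Phi}^k\|_F^2$ and $\|\mathbf{\Phi}_n^{k+1}-\mathbf{\Phi}_n^k\|_F^2$ through the squared triangle inequality, so balancing it against the strong-convexity moduli of $\mathcal{U}$ and $\mathcal{U}_n$ must be done with care, and this is precisely where the threshold $\rho_n\ge 9L_n$ emerges.
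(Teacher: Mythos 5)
Your proposal is correct and follows essentially the same route as the paper: the closed-form multiplier identity $\mathbf{\Lambda}_n^{k+1}=-\nabla f_n(\mathbf{\Phi}^{k+1})-L_n(\mathbf{\Phi}_n^{k+1}-\mathbf{\Phi}^{k+1})$ to bound dual by primal differences, sufficient decrease of the augmented Lagrangian via the majorants $\mathcal{U}$ and $\mathcal{U}_n$, lower-boundedness of $\mathcal{L}$, telescoping, and passage to the limit in the variational optimality conditions (Lemmas 2--5 and Appendix C of the paper). The only differences are in non-essential constants — your three-term splitting and single application of the descent lemma give slightly tighter bounds ($\rho_n\geq 3L_n$ for lower-boundedness versus the paper's $\rho_n>5L_n$) — but the binding condition $\rho_n\geq 9L_n$ is the same.
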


{\it Remarks:}
   Theorem \ref{convergence} indicates that the proposed consensus-ADMM algorithm is theoretically-guaranteed to be convergent to a  stationary point of model \eqref{problem_without_constraint} under the  conditions $\rho_n\geq 9L_n, n\in\mathcal{T}$. Here, we should note that these conditions are easily satisfied since we can choose $L_n$'s value according to \eqref{L_ineqs_c} in Lemma 1 and penalty parameters $\rho_n$ can be set accordingly to satisfy $\rho_n\geq 9L_n$. The key idea of proving Theorem \ref{convergence} is to find out that potential function $\mathcal{L}(\alpha,\mathbf{\Phi},\{\mathbf{\Phi}_n,\mathbf{\Lambda}_n,n\in\mathcal{T}\})$ {\it decreases sufficiently} in every ADMM iteration and is lower-bounded. To reach this goal, we first prove several related lemmas in Appendix B. Then, we give the detailed proof of Theorem 1 in Appendix C.

\subsection{Implementation Analysis}

In the following, we show how to compute $\nabla_{\alpha} h(\alpha,{\mathbf{\Phi}})$,  $\nabla_{\mathbf\Phi} h(\alpha,{\mathbf{\Phi}})$ and $\nabla f_n({\mathbf{\Phi}})$ efficiently by exploiting their inside structures.

 1) $\nabla_{\alpha} h(\alpha,{\mathbf{\Phi}})$ and $\nabla_{\mathbf\Phi} h(\alpha,{\mathbf{\Phi}})$. They can be expressed as follows respectively
\begin{subequations}\label{gradient h}
 \begin{align}
&\hspace{-0.2cm}\nabla_{\alpha} h(\alpha,\mathbf{\Phi}) = 2{\rm Re}\bigg(\frac{\partial\mathbf{v}^H(\alpha,\mathbf{\Phi})}{\partial \alpha}\mathbf{Q}\mathbf{v}(\alpha,\mathbf{\Phi})\bigg), \label{gradient alpha} \\
&\hspace{-0.2cm}\nabla_{\mathbf{\Phi}} h(\alpha,\mathbf{\Phi})\! \!=\! {\rm mat}\!\!\left(\!\!2{\rm Re}\bigg(\!\frac{\partial\mathbf{v}^H(\alpha,\mathbf{\Phi})}{\partial {\rm vec}(\mathbf{\Phi})}\!\mathbf{Q}\mathbf{v}(\alpha,\!\mathbf{\Phi})\!\!\bigg),\!N,\!M\!\!\right),\label{gradient phi_0}
\end{align}
\end{subequations}
where $\frac{\partial\mathbf{v}(\alpha,\mathbf{\Phi})}{\partial \alpha}=\left[1;\mathbf{0}\right]$ and
\begin{equation}\label{patial phi}
 \frac{\partial\mathbf{v}^H(\alpha,\mathbf{\Phi})}{\partial {\rm vec}(\mathbf{\Phi})}\!\!=\!\!\displaystyle\left[\! \frac{\partial\mathbf{v}(\alpha,\mathbf{\Phi})}{\partial \phi_{1,1}},\! \frac{\partial\mathbf{v}(\alpha,\mathbf{\Phi})}{\partial \phi_{2,1}},\! \dotsb,\! \frac{\partial\mathbf{v}(\alpha,\mathbf{\Phi})}{\partial \phi_{N,M}}\!\right]^H.
\end{equation}
In \eqref{patial phi}, $\frac{\partial\mathbf{v}(\alpha,\mathbf{\Phi})}{\partial \phi_{i,m}}$ can be calculated through
 \begin{equation}\label{gradient phi_m}
\begin{split}
\frac{\partial\mathbf{v}(\alpha,\mathbf{\Phi})}{\partial \phi_{i,m}} = \left[
\begin{array}{l}
~~~~~~~~~~~~~~0\\ {\rm vec}\left(\displaystyle\frac{\partial\mathbf{X}^H(\mathbf{\Phi})\mathbf{X}(\mathbf{\Phi}) }{\partial \phi_{i,m}}\right)
\end{array}
 \right],
\end{split}
\end{equation}
where $i=1,\dotsb,N$, $m=1,\dotsb,M$, and $\frac{\partial (\mathbf{X}^H(\mathbf{\Phi})\mathbf{X}(\mathbf{\Phi}))}{\partial\phi_{i,m}}$ can be computed through \eqref{partial derivative}.
\begin{figure*}

\begin{equation}\label{partial derivative}
\begin{split}
 \frac{\partial \mathbf{X}^{\!H}(\mathbf{\Phi})\mathbf{X}(\mathbf{\Phi})}{\partial\phi_{i,m}}\!\!=\!\!
\left[\!\!
  \begin{array}{ccccccc}
    ~ &  ~& ~ & je^{j(\!\phi_{i,m}\!-\phi_{i,1}\!)} & ~ & ~ & ~ \\
    ~ &  \! \! \mathbf{\scalebox{3.5}0}\!\! & ~& \vdots & ~  & \!\! \mathbf{\scalebox{3.5}0}\!\!  & ~ \\
   ~  &~ &~ & je^{j(\!\phi_{i,m}\!-\phi_{i,m\!-\!1}\!)}&~ &~  \\
    \!-\!je^{j(\!\phi_{i,1}\!-\phi_{i,m}\!)} &\!\!\cdots\!\! &\!-\!je^{j(\phi_{i,m\!-\!1}\!-\phi_{i,m}\!)} & 0 & \!-\!je^{j(\!\phi_{i,m\!+\!1}\!-\phi_{i,m}\!)}&  \!\!\cdots\!\! & \!-\!je^{j(\!\phi_{i,M}\!-\phi_{i,m}\!)}  \\
    ~ & ~ &~ &je^{j(\phi_{i,m}\!-\phi_{i,m\!+\!1}\!)}  &~ & ~& ~  \\
    ~ &  \!\!\mathbf{\scalebox{3.5}0} \!\! & ~ & \vdots & ~ & \!\! \mathbf{\scalebox{3.5}0} \!\! & ~  \\
    ~ & ~ &~ & je^{j(\!\phi_{i,m}\!-\phi_{i,M}\!)} &~ & ~& ~  \\
  \end{array}
\!\!\right].
\end{split}
\end{equation}
\hrulefill
\vspace*{4pt}
\end{figure*}

Since $\mathbf{Q}\in\mathds{C}^{(M^2+1)\times (M^2+1)}$,  $\mathbf{v}\in\mathds{C}^{M^2+1}$, and $\frac{\partial\mathbf{v}(\alpha,\mathbf{\Phi})}{\partial \alpha}=\left[1;\mathbf{0}\right]$, we can obtain $\frac{\partial\mathbf{v}^H(\alpha,\mathbf{\Phi})}{\partial \alpha)}$ through no more than $(M^2+1)^2$ complex multiplications. Moreover, since $\frac{\partial\mathbf{X}^H(\mathbf{\Phi})\mathbf{X}(\mathbf{\Phi})}{\partial \phi_{i,m}}$ involves $2(M-1)$ nonzero elements, then there are $2(M-1)MN$ nonzero elements in $\frac{\partial\mathbf{v}^H(\alpha,\mathbf{\Phi})}{\partial {\rm vec}(\mathbf{\Phi})}$. It means that it takes $(M^2+1)^2+(2M-1)MN$ complex multiplications to compute $\nabla_{\mathbf{\Phi}} h(\alpha,\mathbf{\Phi})$. Then, we can see that the total computation cost on $\nabla_{\alpha} h(\alpha,{\mathbf{\Phi}})$ and $\nabla_{\mathbf\Phi} h(\alpha,{\mathbf{\Phi}})$ is roughly $\mathcal{O}(M^4+2M^2N)$.

2) $\nabla f_n({\mathbf{\Phi}})$. The elements in $\nabla f_n({\mathbf{\Phi}})$ can be obtained through
 \begin{equation}\label{gradient phi_n}
\frac{\partial f_n({\mathbf{\Phi}})}{\partial \phi_{i,m}}\!\! =\!\! \left\{\!\!\!\!
\begin{array}{l}
\displaystyle\sum\limits_{{\theta_i}\neq{\theta_j}\atop\theta_i,\theta_j \in \hat{\Theta}}\!\!\!2w_{\rm cc}{\rm Re}\!\left(\!\!
P_{\theta_i,\theta_j,n}^*\frac{\partial P_{\theta_i,\theta_j,n} }{\partial \phi_{i,m}}\!\!\right)\!, n=0,\\
\displaystyle\sum \limits_{\theta_i,\theta_j\in \hat{\Theta}}\!\!2\bigg(\!\!w_{\rm ac}{\rm Re}\!\left(\!\!
P_{\theta_i,\theta_i,n}^*\!\!\frac{\partial P_{\theta_i,\theta_i,n} }{\partial \phi_{i,m}}\!\!\right)\!\\
\hspace{0.7cm} + w_{\rm cc}{\rm Re}\!\left(\!\!
P_{\theta_i,\theta_j,n}^*\!\!\frac{\partial P_{\theta_i,\theta_j,n} }{\partial \phi_{i,m}}\!\!\right)\!\!\bigg),n\in\!\! \mathcal{T}\backslash0.
\end{array}
\right.
\end{equation}

To compute $P_{\theta_i,\theta_j,n}$ and $\frac{\partial P_{\theta_i,\theta_j,n} }{\partial \phi_{i,m}}$ for every $\frac{\partial f_n({\mathbf{\Phi}})}{\partial \phi_{i,m}}$ efficiently, we define $\mathbf{s}_{\theta_i}=\mathbf{X}\mathbf{a}_{\theta_i}$ and $\mathbf{\bar{s}}_{\theta_i}$ denoting $\mathbf{s}_{\theta_i}$'s reversing vector. Then, $\forall n\in\mathcal{T}$, since $P_{\theta_i,\theta_j,n} = \mathbf{a}_{\theta_i}^{H}\mathbf{X}^H\mathbf{S}_n\mathbf{X}\mathbf{a}_{\theta_j}$, it can be obtained through convolution operation
$
  \mathbf{s}_{\theta_i}^*\otimes\mathbf{\bar{s}}_{\theta_j}.
$
It means that the cost to obtain all the $P_{\theta_i,\theta_j,n}$ is roughly $|\mathcal{T}|K^2 MN $ complex multiplications.
Moreover, corresponding gradients $\frac{\partial P_{\theta_i,\theta_j,n} }{\partial \phi_{i,m}}$ can be obtained through
$
\mathbf{s}_{\theta_i}^*\otimes\frac{\partial \bar{\mathbf{s}}_{\theta_j}}{\partial \phi_{i,m}} + \bar{\mathbf{s}}_{\theta_j}\otimes\frac{\partial \mathbf{s}_{\theta_i}^*}{\partial \phi_{i,m}}.
$
Since there is only one nonzero element in either $\frac{\partial \bar{\mathbf{s}}_{\theta_j}}{\partial \phi_{i,m}}$ or $\frac{\partial \mathbf{s}_{\theta_i}^*}{\partial \phi_{i,m}}$, it takes only two complex multiplications to compute $\frac{\partial P_{\theta_i,\theta_j,n} }{\partial \phi_{i,m}}$. Then, we can obtain all of them through $2|\mathcal{T}|K^2MN$ complex multiplications.
Taking complex multiplication operations to obtain $P_{\theta_i,\theta_j,n}^*\frac{\partial P_{\theta_i,\theta_j,n} }{\partial \phi_{i,m}}$ into account, we can see that the total cost of computing $\frac{\partial f_n({\mathbf{\Phi}})}{\partial \phi_{i,m}}$ $\forall n\in\mathcal{T}$ is roughly $\mathcal{O}(3|\mathcal{T}|K^2MN)$.

Observing the proposed consensus-ADMM algorithm in Table I and the corresponding \eqref{solutions} and \eqref{relax ADMM c}, we can see that the main computational cost lies in computing $\nabla_{\alpha} h(\alpha,{\mathbf{\Phi}})$, $\nabla_{\mathbf\Phi} h(\alpha,{\mathbf{\Phi}})$ and $\nabla f_n({\mathbf{\Phi}})$, which are much larger than other terms. Therefore, we conclude that the total cost in each ADMM iteration is roughly $\mathcal{O}(M^4+2M^2N+3|\mathcal{T}|K^2MN)$.

\section{Improvements}
\subsection{Reduce Complexity}
In the proposed consensus-ADMM algorithm, $\mathbf{\Phi}_{n}$, $\forall n\in \mathcal{T}$, are updated independently (or in parallel) as are the Lagrangian multipliers $\mathbf{\Lambda}_n$.
This fact admits us to update only a part of the variables  $\{\mathbf{\Phi}_{n}, \mathbf{\Lambda}_n, n\in\mathcal{T}\}$ in each ADMM iteration to reduce computational complexity.

Specifically, consider a randomized updating strategy called stochastic block coordinate descent (SBCD) \cite{Tseng_01}. In the $k$-th iteration, let $\mathcal{N}^{k}$ denote some $\mathcal{T}$'s subset. We choose elements from $\mathcal{T}$ to construct $\mathcal{N}^{k}$ with the probability
\begin{equation}\label{update probability}
\begin{split}
{\rm Pr}(n\in\mathcal{N}^{k}) = p_n \geq p_{\rm min}>0. \footnotemark
\end{split}
\end{equation}
If some $n\in \mathcal{N}^k$, the corresponding variables $\mathbf{\Phi}_n^k$ and $\mathbf{\Lambda}_n^k$ are updated using \eqref{solution phi n} and \eqref{relax ADMM c} respectively.
Otherwise, we just set
 $
 \mathbf{\Phi}_{n}^{k+1} = \mathbf{\Phi}_{n}^{k},
\mathbf{\Lambda}_n^{k+1} = \mathbf{\Lambda}_n^{k}.
$

\footnotetext{Usually, criteria of selecting $\mathcal{N}^k$ is to guarantee every element in $\mathcal{T}$ is implemented equally in probability.}

In this way, it is obvious that computational complexity in each ADMM iteration can be reduced significantly.
The approach provides an option for some practical systems when their computation resources are very limited.
Moreover, this kind of implementation strategy can still guarantee the algorithm converges with high probability to a stationary point of problem \eqref{problem_without_constraint} under some wild conditions (we provide a sketch of the proof in Appendix D).

\subsection{Speed Up Convergence}
Besides the computational complexity in each iteration, convergence speed is another concern from a practical viewpoint. In this paper,
inspired by Nesterov's accelerated gradient descent method (AGD) \cite{Nesterov_83}, which is originally applied to a convex problem, we develop its variant in the following
\begin{subequations}\label{Nesterov-acc}
\begin{align}
&\mathbf{\hat{\Phi}}_{n}^{k+1} = \underset{\mathbf{\Phi}_n} {\arg \min} \ \ \mathcal{U}_n\left(\mathbf{\Phi}^{k+1},\mathbf{\Phi}_n,\mathbf{\Lambda}_n^k\right), \label{Nesterov-acc-1} \\
&\mathbf{\Phi}_{n}^{k+1}=\mathbf{\hat{\Phi}}_{n}^{k+1}+\gamma^k\left(\mathbf{\hat{\Phi}}_{n}^{k+1}-\mathbf{\hat{\Phi}}_{n}^{k}\right), \label{Nesterov-acc-2}
\end{align}
\end{subequations}
where $\gamma^k =\frac{k-1}{k+t-1}$ and $t \geq 3$ is some preset constant.
The algorithm starts from $\mathbf{\hat{\Phi}}_{n}^{1} = \mathbf{\Phi}_{n}^{1}$.
Here, we should note that AGD method's convergence can be proved under strong assumptions, such as solving convex or strongly convex optimization problem \cite {Nesterov_04}--\hspace{-0.001cm}\cite{Goldstein_14}. Here, the direct combination of consensus-ADMM and AGD can be cast as a heuristic method. It is difficult to prove that it can improve convergence rate theoretically for the considered non-convex probing waveform design problem \cite{Goldstein_14}--\hspace{-0.001cm}\cite{botao_global_convergence}.
 However, simulation results,  presented in the following section, show that its practical performance is superior to the original one in Table I.

\section{Simulation results}\label{Simulation results}
\begin{figure}[htbp]
\centering
  \centerline{\includegraphics[scale=0.65]{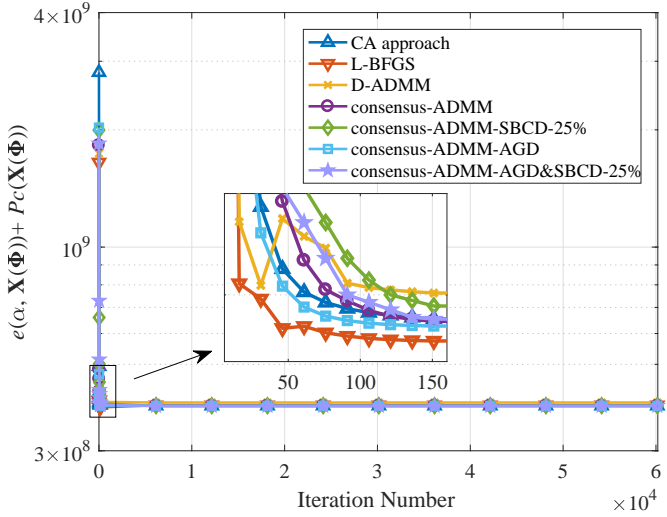}}
\centerline{(a) $e(\alpha,\mathbf{X}(\mathbf{\Phi}))+P_c(\mathbf{X}(\mathbf{\Phi}))$ vs. iteration number. }
  \centering
  \centerline{\includegraphics[scale=0.65]{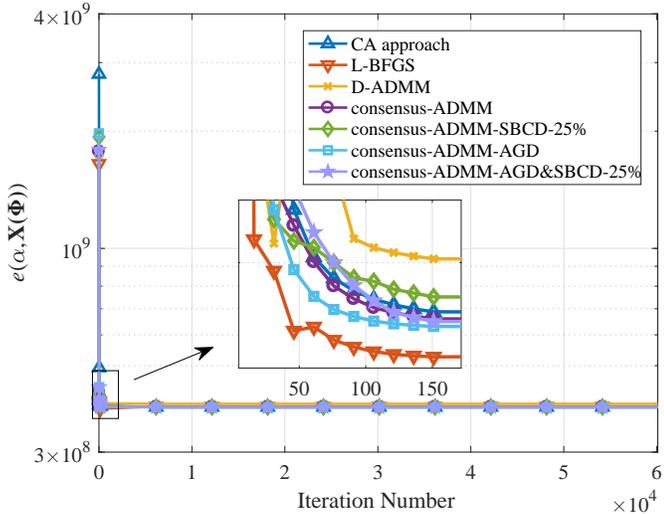}}
  \centerline{(b) $e(\alpha,\mathbf{X}(\mathbf{\Phi}))$ vs. iteration number.}
    \centering
  \centerline{\includegraphics[scale=0.65]{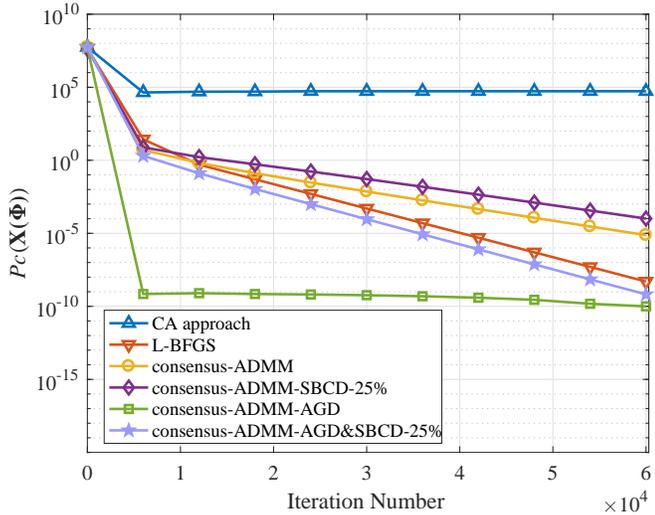}}
  \centerline{(c) $ P_c(\mathbf{X}(\mathbf{\Phi}))$ vs. iteration number.} \medskip
\caption{ Comparisons of convergence performance with $M=8,N=128,\mathcal{T}=[0,16]$, SBCD-$25\%$ means that one fourth of the elements in set $\mathcal{T}$ are updated.}
\label{convergence fig}
\end{figure}
In this section, numerical results are presented to illustrate the performance of the proposed MIMO radar beampattern design algorithm.
We consider the number of antennas as $M = 8,16,128$ with the length of each sequence $N = 64,128,1024$ respectively.
The set of the spacial angles covers $(-90^\circ,90^\circ)$ with spacing $0.1^\circ$.
The residuals of the proposed consensus-ADMM algorithm in the $k$-th iteration are defined as
$
\displaystyle\sum_{n\in \mathcal{T} }\left\lVert  \mathbf{\Phi}_n^{k+1} \!-\!\mathbf{\Phi}^{k+1}\right\rVert_F$ and
$\displaystyle\sum_{n\in \mathcal{T} }\left\lVert \mathbf{\Phi}_n^{k+1} \!-\!\mathbf{\Phi}_n^{k}\right\rVert_F.
$
The termination criteria is set as both of the residuals are less than $10^{-4}$ or the maximum iteration number 60000 is reached.
The weights $(w_{\rm ac},w_{\rm cc})$ are  $(10,10)$.
The desired beampattern is
\begin{equation}\label{desied beampattern}
\begin{split}
\bar{P}(\theta)=\left\{
           \begin{array}{ll}
             1, & \theta\in[\theta_i-10^\circ,\theta_i+10^\circ],~i=1,2, \\
             0, & {\rm otherwise},
           \end{array}
         \right.
\end{split}
\end{equation}
where $\theta_1=-40^\circ$ and $\theta_2=30^\circ$.
The parameter $t$ in the AGD method is $3$.
The penalty parameter $\rho_n$ can affect ADMM algorithm's convergence rate. For example, much larger $\rho_n$ will let the optimization problem become singular and slow it down. Here, we recommend its value as $\rho_n=\|{\rm vec}(\nabla f_n(\mathbf\Phi^{1}))\|_{\infty}$\footnotemark. For D-ADMM, the penalty parameters are set as 20.
The random phase sequence is chosen to initialize all approaches.
All experiments are performed in MATLAB 2016b/Windows 7 environment on a computer with 2.1GHz Intel 4110$\times$2 CPU and 64GB RAM.
\footnotetext{Here, $\mathbf{\Phi}^{1}$ is initial value of the phase variable.  Using this setting, simulation results are pretty good and convergence can always be observed.}

Figure \ref{convergence fig} plots the performance curves of objective function versus the iteration number for our proposed consensus-ADMM algorithms and other three state-of-the-art approaches: CA approach \cite{Li_08}, L-BFGS method \cite{Wang_12}, and D-ADMM \cite{Cheng_17}. Besides the objective function in \eqref{problem_model}, its two parts: $e(\alpha, \mathbf{X}(\mathbf{\Phi}))$ and $P_c(\mathbf{X}(\mathbf{\Phi}))$ are also presented in the figures since both of them have obvious physical significance. It should be noted that the D-ADMM method only focuses on beampattern matching problem. Its spacial correlation characteristics are not included in the corresponding figure.
 From Figure \ref{convergence fig}, we can see that the resulting $e(\alpha, \mathbf{X}(\mathbf{\Phi}))$ is much larger than $P_c(\mathbf{X}(\mathbf{\Phi}))$.
 Different approaches for $e(\alpha, \mathbf{X}(\mathbf{\Phi}))$ have similar convergence performance. However, convergence results for $P_c(\mathbf{X}(\mathbf{\Phi}))$ are quite different.
 Specifically, the AGD strategy \eqref{Nesterov-acc} can speed up convergence very well. Other algorithms tend to achieve the similar value to AGD method with relatively large number of iterations.
In comparison, consensus-ADMM-SBCD-$25\%$'s convergence is a little bit slow.
However, we should note that it has lower computational complexity. In practice, parameter $25\%$ can be adjusted to make a tradeoff between convergence rate and computational complexity.

\begin{figure*}[htbp]
\centering
\subfigure[ Comparison of the  synthesized beampattern with $M=16, N=64$.]{
\begin{minipage}[t]{0.5\linewidth}
\centering
\centerline{\includegraphics[scale=0.34]{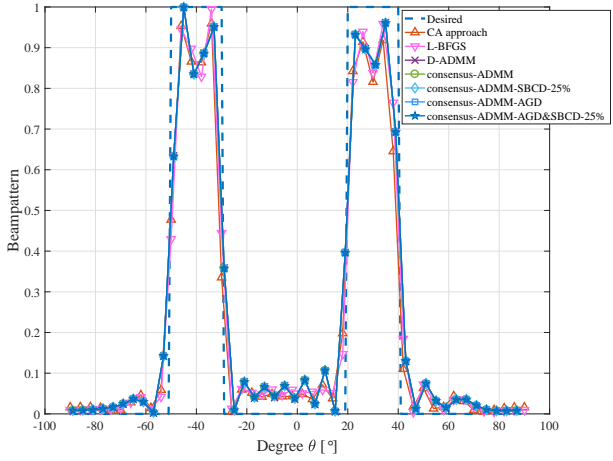}}
\end{minipage}%
}%
\subfigure[ Comparison of the  synthesized beampattern with $M=8, N=128$.]{
\begin{minipage}[t]{0.5\linewidth}
\centering
\centerline{\includegraphics[scale=0.34]{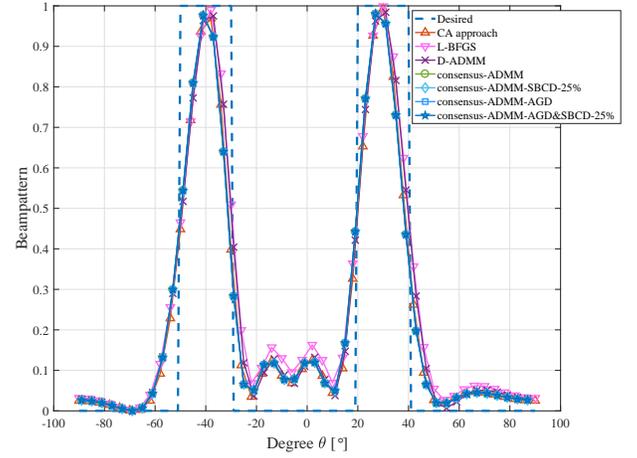}}
\end{minipage}
}
\caption{Comparison of the synthesized beampattern.}
\label{beampattern}
\end{figure*}

Figure \ref{beampattern} shows synthesized spacial beampatterns by our proposed consensus-ADMM approaches and three other approaches.
From the figure, it can be observed that all of the approaches can match the desired spacial beampattern very well at different antenna numbers and waveform lengths.

Figures \ref{correlation_a} -- \ref{correlation_c} show the normalized spacial correlation level $C_{\theta_i,\theta_j,n}$ with different simulation parameters. Here, the normalized spacial correlation function $C_{\theta_i,\theta_j,n}$ for a certain interval in dB is defined as
\begin{equation*}\label{eq_autocorr}
\begin{split}
C_{\theta_i,\theta_j,n}=~&10\log_{10}\frac{|P_{\theta_i,\theta_j,n} |}{\max\{|P_{\theta_i,\theta_i,0}|,|P_{\theta_j,\theta_j,0}|\}},
\end{split}
\end{equation*}
where $n\in \mathcal{T}$ and $\theta_i,\theta_j\in\hat{\Theta}$. It is obvious that for $i = j$ and $i \neq j$, $C_{\theta_i,\theta_i,n}$ indicates spacial auto-/cross-correlation characteristics of the designed MIMO radar probing waveforms respectively. From the figures, we can see that the normalized spacial auto-correlation functions are symmetric and the normalized cross-correlation functions of $(-40^\circ,30^\circ)$ are symmetric to that of $(30^\circ,-40^\circ)$.
The figures also indicate that either increasing $N$ and $M$ or decreasing $\mathcal{T}$ can lower the correlation levels. These facts are reasonable since larger $N$ and $M$ or smaller $\mathcal{T}$ indicate more degrees of freedom in designing probing waveforms. Moreover, we can also see that the consensus-ADMM-AGD approach enjoys the best auto-/cross-correlation characteristics. This fact is in accordance with the simulation result in Figure \ref{convergence fig}.

\begin{figure}[htb]
  \centering
  \centerline{\includegraphics[scale=0.35]{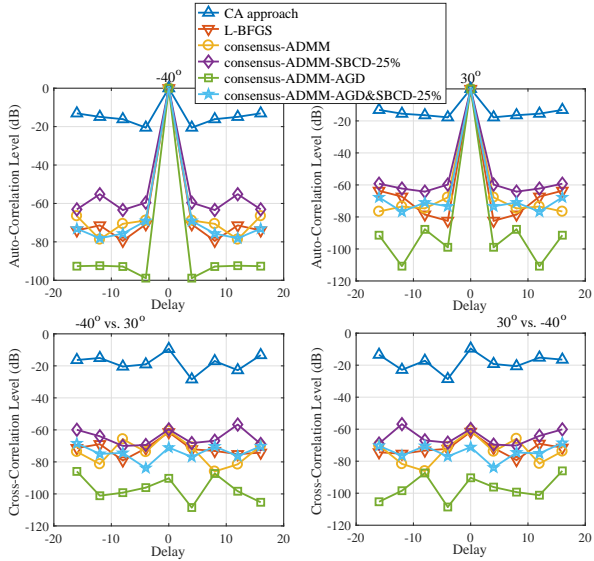}}
\caption{Comparison of correlation characteristics for interval $[0,16]$ with $M=8,N=64$.}
\label{correlation_a}
\end{figure}
\begin{figure}[htb]
  \centering
  \centerline{\includegraphics[scale=0.35]{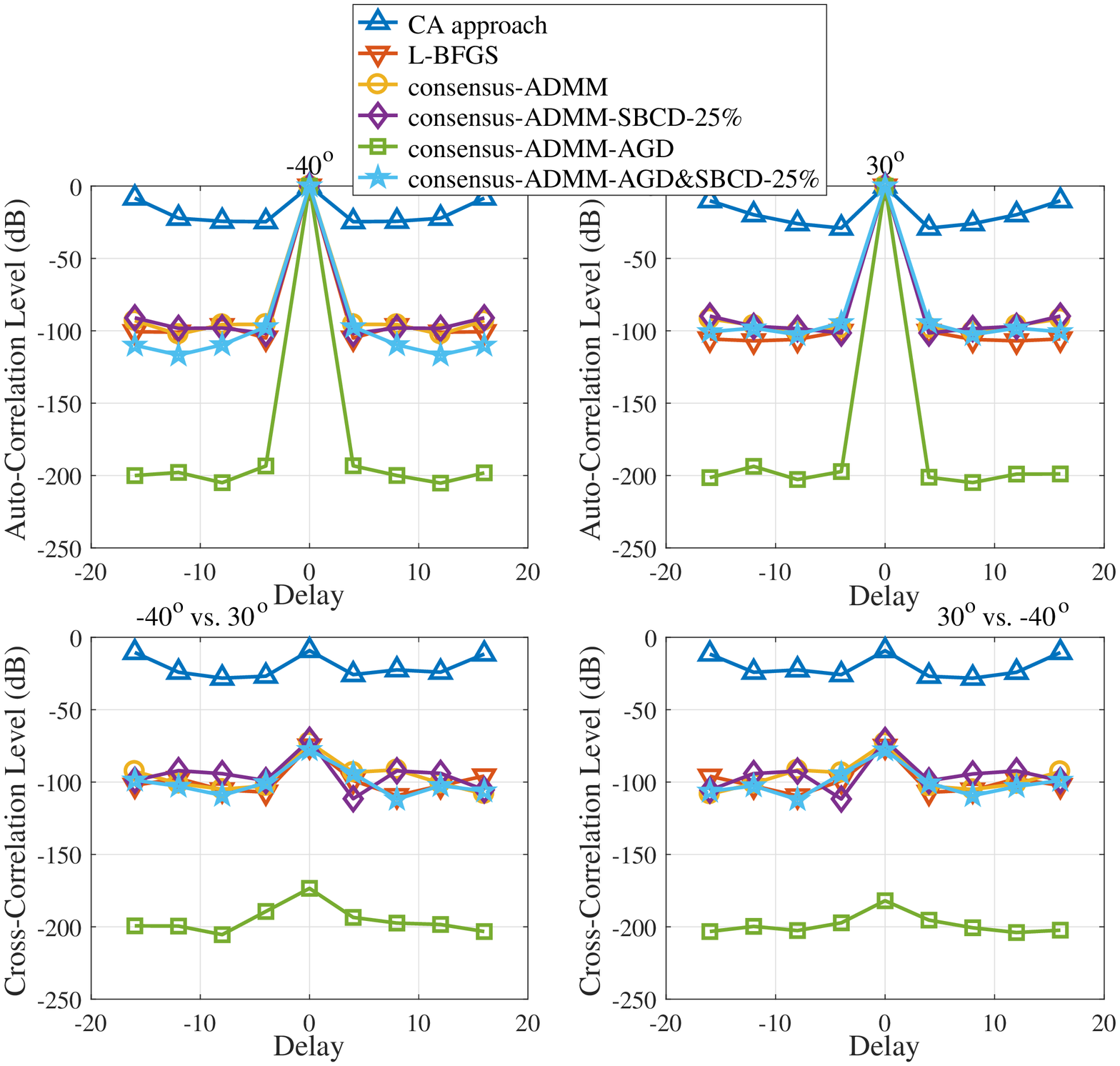}}
\caption{Comparison of correlation characteristics for interval $[0,16]$ with $M=8,N=128$.}
\label{correlation_b}
\end{figure}
\begin{figure}[htb]
  \centering
  \centerline{\includegraphics[scale=0.35]{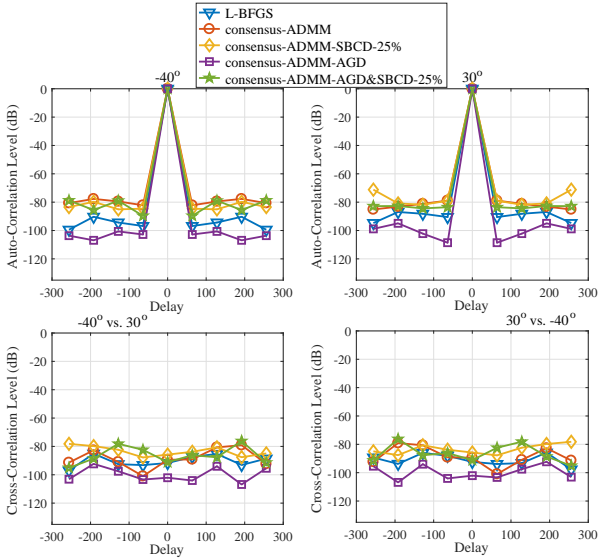}}
\caption{Comparison of correlation characteristics for interval $[0,256]$ with $M=128,N=1024$.}
\label{correlation_c}
\end{figure}
\begin{table*}[htbp]
\caption{Comparison of execution time (second per iteration).}
\label{table time}
\centering
\begin{tabular}{|c||c|c|c|c|c|c|}
\hline
\hline
~                         $(N,M,\mathcal{T})$
                         & CA           & D-ADMM
                         & L-BFGS               &   \tabincell{c}{consensus \\-ADMM} &  \tabincell{c}{consensus-ADMM\\  -SBCD-$25\%$} & \tabincell{c}{consensus\\  -ADMM-AGD} \\
\hline
                         (64,8,[0,16])
                         &  0.34s           & 316s
                         &  0.12s           & 0.08s  & 0.02s & 0.09s  \\
                         \hline
                         (128,8,[0,16])
                         &  0.67s           & 950s
                         &  0.14s           & 0.09s  & 0.02s & 0.10s   \\
                         \hline
                         (128,16,[0,32])
                         &  3.4s            & N/A
                         &  0.24s           & 0.26s & 0.08s &  0.28s \\
                         \hline
                         (\!1024,128,[0,256])
                         &  N/A             & N/A
                         &  5.1s            & 9.1s  &  2.3s & 11.7s   \\
\hline
\hline
\end{tabular}
\end{table*}

Table \ref{table time} shows averaged execution time (per iteration) of the proposed consensus-ADMM approach and three other state-of-the-art approaches \cite{Li_08}\cite{Wang_12}\cite{Cheng_17}. Here, N/A means that iteration operations cannot be finished in reasonable time.
 From the table, we can see that when $M$ and $N$ are small, the proposed consensus-ADMM algorithm and its variants (SBCD-$25\%$, AGD) have less execution time. When $N$ and $M$ are increased, for example, from (64,8) to (1024,128), execution time of the consensus-ADMM approach becomes comparable to L-BFGS approach. Combining these execution time per iteration with the curves in Figure \ref{convergence fig}, the total computational time can be computed easily. Moreover, we should note that in the proposed consensus-ADMM algorithm or its variants, parallel execution architecture plays an essential role leading to better implementation efficiency than state-of-the-art methods including L-BFGS, D-ADMM, and MM, which means that it is more suitable for large-scale applications from a practical viewpoint of implementation\footnotemark.

\footnotetext{In the real radar system, the algorithm is usually implemented using Field Programmable Gate Array (FPGA). This kind of integrated chip is very suitable for implementing an algorithm with a parallel structure.}

\section{Conclusion}
\label{sec:Conclusion}

In this paper, we focus on designing constant modulus probing waveforms with good correlation properties for the collocated MIMO radar system. By introducing auxiliary variables and exploiting the designing problem's inherent structures, we formulate a consensus-like optimization model. Then, the ADMM technique is customized to solve the corresponding non-convex problem approximately. We prove that the proposed ADMM approach is theoretically-guaranteed convergent if proper parameters are chosen. Simulation results show the effectiveness of the proposed consensus-ADMM algorithm and its variants, especially suitable for large-scale MIMO radar systems.

\section*{Acknowledgment}
The authors appreciate  Dr. Ian Clarkson, associate editor of this paper, and the anonymous reviewers who help the authors to elevate the quality of this paper.
\appendices
\section{Proof of Lemma 1}
In the following, we prove that both $\nabla h({\alpha,\mathbf{\Phi}})$ and $\nabla f_n(\mathbf{\Phi})$ are Lipschitz continuous via the definition of Lipschitz continuity. To state the proof clearly, we rewrite \eqref{quantities} and \eqref{Qv} in the following
\begin{equation}\label{def_quantities}
\begin{split}
&\mathbf{a}_{\theta,\theta}= {\rm vec}(\mathbf{a}_{\theta}\mathbf{a}_{\theta}^{H}),\ p = \sum_{\theta\in \Theta} \bar{P}_{\theta}^2,\\
&\mathbf{q} = -\sum_{\theta\in \Theta} \bar{P}_{\theta}\mathbf{a}_{\theta,\theta}, \ \mathbf{A} = \sum_{\theta\in \Theta} \mathbf{a}_{\theta,\theta} \mathbf{a}^H_{\theta,\theta}, \\
&\mathbf{v}(\alpha,\mathbf{\Phi}) = \left[
                       \begin{array}{l}
                        ~~~~~~~~~~\alpha\\
                        {\rm vec}\left(\mathbf{X}^H(\mathbf{\Phi})\mathbf{X}(\mathbf{\Phi})\right)
                       \end{array}
                       \right],\\
&\mathbf{Q} =  \left[
                       \begin{array}{ll}
                        p & \mathbf{q}^H\\
                        \mathbf{q} & \mathbf{A}
                       \end{array}
                       \right].
\end{split}
\end{equation}

To facilitate the subsequent derivations, we denote
\begin{equation}\label{z}
\begin{split}
\mathbf{z}(\mathbf{\Phi}) = {\rm vec}\left(\mathbf{X}^H(\mathbf{\Phi})\mathbf{X}(\mathbf{\Phi})\right).
\end{split}
\end{equation}
Then, $\mathbf{Qv}(\alpha,\mathbf{\Phi})$ can be expressed by
\begin{equation}\label{definintion Qv}
\begin{split}
\mathbf{Qv}(\alpha,\mathbf{\Phi}) =
\left[
\begin{array}{l}
                        p\alpha+\mathbf{q}^H\mathbf{z}(\mathbf{\Phi})  \\
                        \alpha \mathbf{q}+\mathbf{Az}(\mathbf{\Phi})
                       \end{array}
\right].
\end{split}
\end{equation}

First, we can obtain $\frac{\partial\mathbf{v}(\alpha,\mathbf{\Phi})}{\partial \alpha}=\left[1;\mathbf{0}\right]$. Plugging it and \eqref{definintion Qv} into \eqref{gradient alpha}, we can have
\begin{equation}\label{proof lipschitz alpha}
\begin{split}
&\frac{|\nabla_{\alpha} h({\alpha,\mathbf{\Phi}})-\nabla_{\alpha} h(\hat{\alpha},{\mathbf{\Phi}})|}{|\alpha-\hat{\alpha}|}\\
&=\frac{\big|2{\rm Re}\big(p\alpha+\mathbf{q}^H\mathbf{z}(\mathbf{\Phi}) -p\hat{\alpha}-\mathbf{q}^H\mathbf{z}(\mathbf{\Phi}) \big)\big|}{|\alpha-\hat{\alpha}|}= 2p,
\end{split}
\end{equation}
where $\alpha,\hat{\alpha}\in(0,\alpha_{\rm max}]$.
From \eqref{proof lipschitz alpha}, we can see that $\nabla_{\alpha} h({\alpha,\mathbf{\Phi}})$ is Lipschitz continuous with constant $L_{\alpha}\geq2p$.

Second, for $\nabla_{\mathbf{\Phi}} h({\alpha,\mathbf{\Phi}})$, we have the following derivations
\begin{equation}\label{proof lipschitz}
\begin{split}
& \frac{\|\nabla_{\mathbf{\Phi}} h({\alpha,\mathbf{\Phi}})-\nabla_{\mathbf{\Phi}} h(\alpha,\mathbf{\hat{\Phi}})\|^2_F}{\|{\mathbf{\Phi}}-\mathbf{\hat{\Phi}}\|^2_F}\\
 =& \frac{\displaystyle\sum_{i=1}^{N}\sum_{m=1}^M\left|\frac{\partial h(\alpha,\mathbf{\Phi})}{\partial \phi_{i,m}}-\frac{\partial h(\alpha,\mathbf{\hat{\Phi}})}{\partial \hat{\phi}_{i,m}}\right|^2}{\displaystyle\sum_{i=1}^{N}\sum_{m=1}^M|\phi_{i,m}-\hat{\phi}_{i,m}|^2}\\
 \leq& \max_{i,m}\left\{\left|\frac{\frac{\partial h(\alpha,\mathbf{\Phi})}{\partial \phi_{i,m}}-\frac{\partial h(\alpha,\mathbf{\hat{\Phi}})}{\partial \hat{\phi}_{i,m}}}{\phi_{i,m}-\hat{\phi}_{i,m}}\right|^2\right\}.
\end{split}
\end{equation}
According to Lagrange's mean value theorem, since $h(\alpha,\mathbf{\Phi})$ is continuous and differentiable, there exists some point $\bar{\phi}_{i,m}$ between $\phi_{i,m}$ and $\hat\phi_{i,m}$ which satisfies
\begin{equation}\label{mean value theorem}
\frac{\frac{\partial h(\alpha,\mathbf{\Phi})}{\partial \phi_{i,m}}-\frac{\partial h(\alpha,\mathbf{\hat{\Phi}})}{\partial \hat{\phi}_{i,m}}}{\phi_{i,m}-\hat{\phi}_{i,m}}= \frac{\partial^2 h(\alpha,\bar{\mathbf{\Phi}})}{\partial \bar{\phi}_{i,m}^2}.
\end{equation}
Combining \eqref{proof lipschitz} and \eqref{mean value theorem}, we obtain
\begin{equation}\label{lipschitz phi}
\begin{split}
\!\!\!\!\!\!\frac{\|\nabla\!_{\mathbf{\Phi}} h({\alpha,\!\mathbf{\Phi}})\!\!-\!\!\nabla\!_{\mathbf{\Phi}} h(\alpha,\!\mathbf{\hat{\Phi}})\|_F}{\|{\mathbf{\Phi}}-\mathbf{\hat{\Phi}}\|_F}\!\leq \max_{i,m}\!\left\{\!\left|\frac{\partial^2 h(\alpha,\bar{\mathbf{\Phi}}))}{\partial \bar{\phi}_{i,m}^2}\right|\!\right\}\!.
\end{split}
\end{equation}
From \eqref{gradient phi_0}, we can obtain
\[
\begin{split}
\hspace{-0.6cm}&\left|\frac{\partial^2 h(\alpha,\bar{\mathbf{\Phi}})}{\partial \bar{\phi}_{i,m}^2}\right|\\
\hspace{-0.5cm}\!=&\!\left|2{\rm Re}\!\left(\!\!\frac{\partial\mathbf{v}^H\!(\!\alpha,\bar{\mathbf{\Phi}}\!) }{\partial \bar{\phi}_{i,m}}\mathbf{Q}\frac{\partial\mathbf{v}(\!\alpha,\bar{\mathbf{\Phi}}\!) }{\partial \bar{\phi}_{i,m}}\!+\!\frac{\partial^2\mathbf{v}^H(\!\alpha,\bar{\mathbf{\Phi}}\!) }{\partial\bar{\phi}_{i,m}^2}\mathbf{Q}\mathbf{v}(\alpha,\bar{\mathbf{\Phi}})\!\!\right)\!\right|,
\end{split}
\]
which can be further derived as
\begin{equation}\label{max second derivative}
\begin{split}
\hspace{-0.6cm}&\left|\frac{\partial^2 h(\alpha,\bar{\mathbf{\Phi}})}{\partial \bar{\phi}_{i,m}^2}\right|\\
\hspace{-0.5cm}\leq&2\left|\frac{\partial\mathbf{v}^H(\!\alpha,\bar{\mathbf{\Phi}}\!) }{\partial \bar{\phi}_{i,m}}\mathbf{Q}\frac{\partial\mathbf{v}(\!\alpha,\bar{\mathbf{\Phi}}\!) }{\partial \bar{\phi}_{i,m}}\right|+ 2\left|\frac{\partial^2\mathbf{v}^H\!(\!\alpha,\bar{\mathbf{\Phi}}) }{\partial\bar{\phi}_{i,m}^2}\mathbf{Q}\mathbf{v}(\!\alpha,\bar{\mathbf{\Phi}}\!)\right|.
\end{split}
\end{equation}
Since $\frac{\partial\mathbf{v}(\alpha,\bar{\mathbf{\Phi}})}{\partial{\bar\phi_{i,m}}} = \left[0; \frac{\partial\mathbf{z}(\bar{\mathbf{\Phi}}) }{\partial \bar{\phi}_{i,m}}\right]$, the first term in \eqref{max second derivative}'s right side can be derived as (see \eqref{def_quantities})
\begin{equation}\label{AQ}
\begin{split}
\left|\!\frac{\partial\mathbf{v}^H{(\alpha,\bar{\mathbf{\Phi}})} }{\partial \bar{\phi}_{i,m}}\mathbf{Q}\frac{\partial\mathbf{v}{(\alpha,\bar{\mathbf{\Phi}})} }{\partial \bar{\phi}_{i,m}}\right|=\left|\!\frac{\partial\mathbf{z}^H{(\bar{\mathbf{\Phi}})} }{\partial \bar{\phi}_{i,m}}\mathbf{A}\frac{\partial\mathbf{z}{(\bar{\mathbf{\Phi}})} }{\partial \bar{\phi}_{i,m}}\right|.
\end{split}
\end{equation}
\begin{figure*}

\begin{equation}\label{second partial derivative}
\begin{split}
 \frac{\partial^2 \mathbf{X}^{\!H}(\mathbf{\Phi})\mathbf{X}(\mathbf{\Phi})}{\partial\phi_{i,m}^2}\!\!=\!\!\!
\left[\!\!
  \begin{array}{ccccccc}
    ~ &  ~& ~ & \!-e^{j(\phi_{i,m}\!-\phi_{i,1})} & ~ & ~ & ~ \\
    ~ &  \! \! \mathbf{\scalebox{3.5}0}\!\! & ~& \vdots & ~  & \!\! \mathbf{\scalebox{3.5}0}\!\!  & ~ \\
   ~  &~ &~ & \!-e^{j(\phi_{i,m}\!-\phi_{i,m\!-\!1})}&~ &~  \\
    \!\!-e^{j(\phi_{i,1}\!-\!\phi_{i,m})} &\!\!\cdots\!\! &\!-e^{j(\phi_{i,m\!-\!1}\!-\phi_{i,m})} & 0 & \!-e^{j(\phi_{i,m+\!1}\!-\phi_{i,m})}&  \!\!\cdots\!\! & \!-\!e^{j(\phi_{i,M}\!-\phi_{i,m})}\!\!  \\
    ~ & ~ &~ &\!-e^{j(\phi_{i,m}\!-\phi_{i,m+\!1})}  &~ & ~& ~  \\
    ~ &  \!\!\mathbf{\scalebox{3.5}0} \!\! & ~ & \vdots & ~ & \!\! \mathbf{\scalebox{3.5}0} \!\! & ~  \\
    ~ & ~ &~ & \!-e^{j(\phi_{i,m}\!-\phi_{i,M})} &~ & ~& ~  \\
  \end{array}
\!\!\right].
\end{split}
\end{equation}
\hrulefill
\vspace*{4pt}
\end{figure*}
Since $\mathbf{z}(\bar{\mathbf{\Phi}}) = {\rm vec}(\mathbf{X(\bar{\mathbf{\Phi}})}^H\mathbf{X}(\bar{\mathbf{\Phi}}))$ and either $\frac{\partial \mathbf{X}^{\!H}(\bar{\mathbf{\Phi}})\mathbf{X}({\bar{\mathbf{\Phi}}})}{\partial{\bar{\phi}_{i,m}}}$ or $\frac{\partial^2 \mathbf{X}^{\!H}(\bar{\mathbf{\Phi}})\mathbf{X}({\bar{\mathbf{\Phi}}})}{\partial{\bar{\phi}_{i,m}}^2}$ is an $M$-by-$M$ matrix (see them in \eqref{partial derivative} and \eqref{second partial derivative} respectively), we can see that $\frac{\partial\mathbf{z}(\mathbf{\Phi})}{\partial \phi_{i,m}}$ and $\frac{\partial^2\mathbf{z}(\mathbf{\Phi})}{\partial \phi_{i,m}^2}$ has $2(M-1)$ nonzero constant modulus elements respectively.
Since the maximum modulus of elements (MME) in $\bf{A}$ is  $|\Theta|$, \eqref{AQ} can be further derived as
\begin{equation}\label{bound first}
\begin{split}
\left|\!\frac{\partial\mathbf{v}^H{(\alpha,\bar{\mathbf{\Phi}})} }{\partial \bar{\phi}_{i,m}}\mathbf{Q}\frac{\partial\mathbf{v}{(\alpha,\bar{\mathbf{\Phi}})} }{\partial \bar{\phi}_{i,m}}\right| \leq{4(M-1)^2|\Theta|}.
\end{split}
\end{equation}

Similarly, since $\frac{\partial^2\mathbf{v} {(\alpha,\bar{\mathbf{\Phi}})}}{\partial \phi_{i,m}^2} = \left[0;\frac{\partial^2\mathbf{z} {(\bar{\mathbf{\Phi}})} }{\partial \bar{\phi}_{i,m}^2}\right]$  {and \eqref{definintion Qv} holds, the second term in \eqref{max second derivative}'s right side can be denoted as}
\begin{equation}\label{vQv}
\begin{split}
\left|\!\frac{\partial^2\mathbf{v}^H {(\alpha,\bar{\mathbf{\Phi}})} }{\partial\bar{\phi}_{i,m}^2}\mathbf{Q}\mathbf{v} {(\alpha,\bar{\mathbf{\Phi}})}\right| \!\!= \!\!\left|\!\frac{\partial^2\mathbf{z}^H {(\bar{\mathbf{\Phi}})} }{\partial\bar{\phi}_{i,m}^2}
\left[ \alpha\mathbf{q} + \mathbf{Az {(\bar{\mathbf{\Phi}})}}
\right]\right|.
\end{split}
\end{equation}
 {Since MME in $\mathbf{a}_{\theta,\theta}$ is 1, we can see that MMEs in $\mathbf{q}$ and $\mathbf{A} $ are $\bar{P}_{\rm max}|\Theta|$ and $|\Theta|$ respectively, where $\bar{P}_{\rm max}=\max\limits_{\theta\in\Theta}\{\bar{P}_{\theta}\}$, i.e.,  $\bar{P}_{\rm max}$ is the maximum element of the desired beampattern.
Since $\mathbf{X}(\mathbf{\Phi})\in\mathds{C}^{N\times M}$ and $x_{i,m}=e^{j\phi_{i,m}}$, MME in $\mathbf{z}$ is $N$.}
Then, \eqref{vQv} can be derived as
\begin{equation}\label{bound second}
\begin{split}
\hspace{-0.3cm}\left|\!\frac{\partial^2\mathbf{v}^H {(\alpha,\bar{\mathbf{\Phi}})} }{\partial\bar{\phi}_{i,m}^2}\mathbf{Q}\mathbf{v} {(\alpha,\bar{\mathbf{\Phi}})}\right| \!\!\leq\! \! {2(M\!-\!1)(\alpha\bar{P}_{\max}\!+\!M^2N)|\Theta|}.
\end{split}
\end{equation}
 {Plugging \eqref{bound first} and \eqref{bound second} into \eqref{max second derivative}, we have}
\begin{equation}\label{bound sec h}
\begin{split}
\hspace{-0.3cm}\left|\frac{\partial^2 h(\alpha, {\bar{\mathbf{\Phi}}})}{\partial \bar{\phi}_{i,m}^2}\right|
\leq\! { 4(\!M\!-\!1\!)(\!\alpha_{\rm max}\bar{P}_{\rm max}\!+\!M^2N\!+\!2M\!-\!2\!)|\Theta|}.
\end{split}
\end{equation}
 {Moreover, plugging \eqref{bound sec h} into \eqref{lipschitz phi}, we can obtain
\begin{equation*}\label{Lip defi}
\begin{split}
&\frac{\|\nabla\!_{\mathbf{\Phi}} h({\alpha,\!\mathbf{\Phi}})\!\!-\!\!\nabla\!_{\mathbf{\Phi}} h(\alpha,\!\mathbf{\hat{\Phi}})\|_F}{\|{\mathbf{\Phi}}-\mathbf{\hat{\Phi}}\|_F}\\
\leq& 4(M\!-\!1)(\alpha_{\rm max}\bar{P}_{\rm max}\!+\!M^2N\!+\!2M\!-\!2)|\Theta|.
\end{split}
\end{equation*}
Therefore,$\nabla_{\mathbf{\Phi}} h(\alpha,\mathbf{\Phi})$ is Lipschitz continuous with the constant $L\geq4(M\!-\!1)(\alpha_{\rm max}\bar{P}_{\rm max}\!+\!M^2N\!+\!2M\!-\!2)|\Theta|$.}

Third, for $\nabla f_n(\mathbf{\Phi})$, there exists
\begin{equation}\label{lipschitz phi n}
\begin{split}
\frac{\|\nabla f_n({\mathbf{\Phi}})-\nabla f_n(\mathbf{\hat{\Phi}})\|_F}{\|{\mathbf{\Phi}}\!-\!\mathbf{\hat{\Phi}}\|_F} \!\leq \max_{i,m}\left\{\left|\frac{\partial^2 f_n( {\bar{\mathbf{\Phi}}})}{\partial\bar{\phi}_{i,m}^2} \right|\right\}.
\end{split}
\end{equation}
For $\frac{\partial^2 f_n( {\bar{\mathbf{\Phi}}})}{\partial\bar{\phi}_{i,m}^2} $, we have
\begin{equation}\label{second derivative phi n}
\begin{split}
&\left|\frac{\partial^2 f_n( {\bar{\mathbf{\Phi}}})}{\partial\bar{\phi}_{i,m}^2} \right|\\
\hspace{-0.3cm}=&\!\left|2{\rm Re}\!\left(\!{\rm Tr}\bigg(\!
\frac{\partial\mathbf{B}_n^H( {\bar{\mathbf{\Phi}}}) }{\partial \bar{\phi}_{i,m}}\!\frac{\partial\mathbf{B}_n( {\bar{\mathbf{\Phi}}}) }{\partial \bar{\phi}_{i,m}}\!+\!
\mathbf{B}_n^H( {\bar{\mathbf{\Phi}}})\frac{\partial^2\mathbf{B}_n( {\bar{\mathbf{\Phi}}}) }{\partial \bar{\phi}_{i,m}^2}  \!\bigg)\!\!\right)\right|\\
\hspace{-0.3cm}\leq& 2\left|{\rm Tr}\bigg(\!\!
\frac{\partial\mathbf{B}_n^H( {\bar{\mathbf{\Phi}}}) }{\partial \bar{\phi}_{i,m}}\!\frac{\partial\mathbf{B}_n( {\bar{\mathbf{\Phi}}}) }{\partial \bar{\phi}_{i,m}}\!\!\bigg)\!\right|\!\!+\!2\!\left|{\rm Tr}\bigg(\!\!\mathbf{B}_n^H( {\bar{\mathbf{\Phi}}})\frac{\partial^2\mathbf{B}_n( {\bar{\mathbf{\Phi}}}) }{\partial \bar{\phi}_{i,m}^2} \! \!\bigg)\!\right|.
\end{split}
\end{equation}

\begin{figure*}
\normalsize
 {
\begin{equation}\label{XSX}
\begin{split}
  &\mathbf{X(\Phi)}^H\mathbf{S}_n\mathbf{X(\Phi)}\!\! =\!\!\!
  \begin{bmatrix}
     \displaystyle\sum_{i=1}^{N-n}e^{j\left(\phi_{i+n,1}-\phi_{i,1}\right)}  & \displaystyle\sum_{i=1}^{N-n}e^{j\left(\phi_{i+n,2}-\phi_{i,1}\right)} & \dotsb &\displaystyle\sum_{i=1}^{N-n}e^{j\left(\phi_{i+n,m}-\phi_{i,1}\right)} & \dotsb &  \displaystyle\sum_{i=1}^{N-n}e^{j\left(\phi_{i+n,M}-\phi_{i,1}\right)} \\
     \displaystyle\sum_{i=1}^{N-n}e^{j\left(\phi_{{i+n,1}}-\phi_{i,2}\right)} & \displaystyle\sum_{i=1}^{N-n}e^{j\left(\phi_{{i+n,2}}-\phi_{i,2}\right)} & \dotsb &\displaystyle\sum_{i=1}^{N-n}e^{j\left(\phi_{i+n,m}-\phi_{i,2}\right)}& \dotsb &  \displaystyle\sum_{i=1}^{N-n}e^{j\left(\phi_{{i+n,M}}-\phi_{i,2}\right)}\\
      \vdots & \vdots & \vdots & \vdots & \vdots & \vdots \\
     \displaystyle\sum\limits_{i=1}^{N-n}e^{j\left(\phi_{i+n,1}-\phi_{i,M}\right)} & \displaystyle\sum_{i=1}^{N-n}e^{j\left(\phi_{i+n,2}-\phi_{i,M}\right)} & \dotsb &\displaystyle\sum_{i=1}^{N-n} e^{j(\phi_{i+n,m}-\phi_{i,M})} & \dotsb &  \displaystyle\sum_{i=1}^{N-n}e^{j\left(\phi_{{i+n,M}}-\phi_{i,M}\right)}
  \end{bmatrix}.
\end{split}
\end{equation}}
\hrulefill
\vspace*{4pt}
\end{figure*}
 {The elements in  $\mathbf{B}_n({\bar{\mathbf{\Phi}}})$ are $P_{\theta_i,\theta_j,n}$ or $0$, where $P_{\theta_i,\theta_j,n} = \mathbf{a}_{\theta_i}^{H}\mathbf{X}^H\mathbf{S}_n\mathbf{X}\mathbf{a}_{\theta_j}$. Since $\mathbf{a}_\theta = \left[1,e^{j\pi \sin\theta},\cdots,e^{j\pi(M-1) \sin\theta}\right]^T$ and \eqref{XSX} holds, we can get
\begin{equation}\label{pn}
\begin{split}
P_{\theta_i,\theta_j,n}\!=\!\!\!\displaystyle\sum_{q=1}^M\!\!\sum_{m=1}^M\!\!\sum_{i=1}^{N-n}\!e^{j\pi (q-\!1)(\sin\theta_i+\sin\theta_j\!)}\!e^{j(\phi_{{i+n,q}}\!-\!\phi_{i,m})}.
\end{split}
\end{equation}
From \eqref{pn}, it is easy to see that MME in all the $P_{\theta_i,\theta_j,n}$ is no larger than $M^2N$.
Therefore, MME in  $\mathbf{B}_n( {\bar{\mathbf{\Phi}}})$ is no larger than $w_{\rm c}M^2N$, where $w_{\rm c}= \max\{w_{\rm ac},w_{\rm cc}\}$.}

 {Next, we consider MMEs in $\frac{\partial\mathbf{B}_n({\bar{\mathbf{\Phi}}}) }{\partial \bar{\phi}_{i,m}}$ and $\frac{\partial^2\mathbf{B}_n({\bar{\mathbf{\Phi}}}) }{\partial \bar{\phi}_{i,m}^2}$. From \eqref{XSX}, we can get
\[
\begin{split}
&\mathbf{a}_{\theta_i}^{H}\!\frac{\partial \mathbf{X}^{H}(\mathbf{\Phi})\mathbf{S}_n\mathbf{X}(\mathbf{\Phi})}{\partial\phi_{i,m}}\mathbf{a}_{\theta_j}\\
=&\displaystyle\sum_{t=1\atop{t\neq m}}^M-je^{j\pi (t-1)(\sin\theta_i+\sin\theta_j)}e^{j(\phi_{{i+n,t}}-\phi_{i,m})}\\
&+\sum_{q=1}^Mje^{j(\phi_{{i+n,m}}-\phi_{i,q})}e^{j\pi (m-1)(\sin\theta_i+\sin\theta_j)},\\
&\mathbf{a}_{\theta_i}^{H}\frac{\partial^2 \mathbf{X}^{H}(\mathbf{\Phi})\mathbf{S}_n\mathbf{X}(\mathbf{\Phi})}{\partial\phi_{i,m}^2}\mathbf{a}_{\theta_j}\\
=&\displaystyle\sum_{t=1\atop{t\neq m}}^M-e^{j\pi (t-1)(\sin\theta_i+\sin\theta_j)}e^{j(\phi_{{i+n,t}}-\phi_{i,m})}\\
&+\sum_{q=1}^M-e^{j(\phi_{{i+n,m}}-\phi_{i,q})}e^{j\pi (m-1)(\sin\theta_i+\sin\theta_j)}.
\end{split}
\]
which indicate
\[
 \begin{split}
 &\left|\mathbf{a}_{\theta_i}^{H}\frac{\partial \mathbf{X}^{\!H}(\mathbf{\Phi})\mathbf{S}_n\mathbf{X}(\mathbf{\Phi})}{\partial\phi_{i,m}}\mathbf{a}_{\theta_j}\right|\leq 2M-1,\\
 &\left|\mathbf{a}_{\theta_i}^{H}\frac{\partial^2 \mathbf{X}^{\!H}(\mathbf{\Phi})\mathbf{S}_n\mathbf{X}(\mathbf{\Phi})}{\partial\phi_{i,m}^2}\mathbf{a}_{\theta_j}\right|\leq 2M-1.
 \end{split}
 \]
Therefore, we can conclude that MME in either $\frac{\partial\mathbf{B}_n({\bar{\mathbf{\Phi}}}) }{\partial \bar{\phi}_{i,m}}$ or $\frac{\partial^2\mathbf{B}_n({\bar{\mathbf{\Phi}}}) }{\partial \bar{\phi}_{i,m}^2}$  is no larger than $w_{\rm c}(2M-1)$.}
 {Plugging these results into \eqref{second derivative phi n}, we can obtain
\begin{equation}\label{bound phi n}
\begin{split}
\hspace{-0.3cm}\left|\frac{\partial^2 f_n({\bar {\mathbf{\Phi}}})}{\partial\bar{\phi}_{i,m}^2} \right| \! \leq 2w_{\rm c}^2(2M\!-\!1)(\!M^2N\!+\!2M\!-\!1)K^2.
\end{split}
\end{equation}
where $K$ is the number of considered spacial directions.
Plugging \eqref{bound phi n} into \eqref{lipschitz phi n}, we can obtain
\begin{equation}\label{lip defi2}
\begin{split}
\frac{\|\!\nabla f_n({\mathbf{\Phi}})\!\!-\!\!\nabla f_n(\mathbf{\hat{\Phi}})\!\|_{\!F}}{\|{\mathbf{\Phi}}\!-\!\mathbf{\hat{\Phi}}\|_{F}} \!\!\leq\!\! 2w_{\rm c}^2(2M\!\!-\!\!1)(\!M^2N\!\!+\!\!2M\!\!-\!\!1\!)K^2.
\end{split}
\end{equation}
Therefore, we can see that functions $\nabla f_n(\mathbf{\Phi}), n\in\mathcal{T}$ are Lipschitz continuous with the constant $L_n\! \!\geq\!\!  2w_{\rm c}^2(\!2M\!-\!1\!)(\!M^2N\!\!+\!\!2M\!\!-\!\!1)K^2$.}
This concludes the proof. $\hfill\blacksquare$

\section{Proof of Lemmas \ref{upperbound function}--\ref{lemma lower bound}}
\begin{lemma}\label{upperbound function}
 The following two inequalities exist
\begin{equation}\label{diff U}
\begin{split}
\!&\mathcal{U}(\alpha,\!\mathbf{\Phi},\! \{\mathbf{\Phi}_n^k,\!\mathbf{\Lambda}_n^k,\!n\in\mathcal{T}\})\!-\!\mathcal{L}(\alpha,\!\mathbf{\Phi},\!\{\mathbf{\Phi}_n^k,\!\mathbf{\Lambda}_n^k,n\in\mathcal{T}\}) \\
\!\leq& \!\ 2L_{\alpha}|\alpha-\alpha^k|^2+2L\|{\mathbf{\Phi}}-{\mathbf{\Phi}^{k}}\|_F^2. \end{split}
\end{equation}
\begin{equation}\label{diff Un}
\begin{split}
&\mathcal{U}_n\left(\mathbf{\Phi}^{k+1}, \mathbf{\Phi}_n,\mathbf{\Lambda}_n^k\right)-\mathcal{L}_n(\mathbf{\Phi}^{k+1}, \mathbf{\Phi}_n, \mathbf{\Lambda}_n^k)\\
&\hspace{2.5cm}\leq 2L_n\|{\mathbf{\Phi}_n}-{\mathbf{\Phi}^{k+1}}\|_F^2, \ \ \forall n\in \mathcal{T}.
\end{split}
\end{equation}
\end{lemma}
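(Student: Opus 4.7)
The plan is to deduce both inequalities from the two-sided consequence of Lipschitz-continuous gradients: whenever $\nabla g$ is $L$-Lipschitz, integrating along the segment $[x,y]$ yields $|g(y)-g(x)-\langle\nabla g(x),y-x\rangle|\le\tfrac{L}{2}\|y-x\|^2$. The upper side of this estimate is exactly what was used to construct $\mathcal{U}$ and $\mathcal{U}_n$ in \eqref{upperbound U0}--\eqref{upperbound Un}; the lower side is what will control the slack.

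I would first dispatch \eqref{diff Un}, which is the cleaner case. Subtracting $\mathcal{L}_n(\mathbf{\Phi}^{k+1},\mathbf{\Phi}_n,\mathbf{\Lambda}_n^k)$ from $\mathcal{U}_n(\mathbf{\Phi}^{k+1},\mathbf{\Phi}_n,\mathbf{\Lambda}_n^k)$ cancels both the multiplier inner product $\langle\mathbf{\Lambda}_n^k,\mathbf{\Phi}_n-\mathbf{\Phi}^{k+1}\rangle$ and the penalty $\tfrac{\rho_n}{2}\|\mathbf{\Phi}_n-\mathbf{\Phi}^{k+1}\|_F^2$, reducing the difference to the linear-approximation error of $f_n$ at $\mathbf{\Phi}^{k+1}$ plus $\tfrac{L_n}{2}\|\mathbf{\Phi}_n-\mathbf{\Phi}^{k+1}\|_F^2$. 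Applying the lower side of \eqref{Lip phi n} bounds this linearization error by a second copy of $\tfrac{L_n}{2}\|\mathbf{\Phi}_n-\mathbf{\Phi}^{k+1}\|_F^2$, giving at most $L_n\|\mathbf{\Phi}_n-\mathbf{\Phi}^{k+1}\|_F^2\le 2L_n\|\mathbf{\Phi}_n-\mathbf{\Phi}^{k+1}\|_F^2$, which is \eqref{diff Un} with room to spare.

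For \eqref{diff U}, the whole block $\sum_{n\in\mathcal{T}}\mathcal{L}_n(\mathbf{\Phi},\mathbf{\Phi}_n^k,\mathbf{\Lambda}_n^k)$ appears identically in $\mathcal{U}$ and in $\mathcal{L}$ and cancels, so the claim reduces to bounding the gap between the quadratic majorizer of $h$ used in \eqref{upperbound U0} and $h(\alpha,\mathbf{\Phi})$ itself. My approach is to telescope through the intermediate point $(\alpha,\mathbf{\Phi}^k)$ and apply the lower side of the descent lemma, leg by leg, with the block constants $L_\alpha$ and $L$ from \eqref{Lip alpha} and \eqref{Lip phi}. This recovers \eqref{diff U} up to one residual cross term $\langle\nabla_\mathbf{\Phi} h(\alpha,\mathbf{\Phi}^k)-\nabla_\mathbf{\Phi} h(\alpha^k,\mathbf{\Phi}^k),\mathbf{\Phi}-\mathbf{\Phi}^k\rangle$, which arises because the linearization in $\mathcal{U}$ is taken at $(\alpha^k,\mathbf{\Phi}^k)$ while the descent lemma on the $\mathbf{\Phi}$-leg produces a gradient at $(\alpha,\mathbf{\Phi}^k)$. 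I would bound this cross term by Cauchy-Schwarz and Young's inequality and absorb the two halves into the $|\alpha-\alpha^k|^2$ and $\|\mathbf{\Phi}-\mathbf{\Phi}^k\|_F^2$ buckets respectively; the slack generated by this absorption is precisely what forces the factor $2$ in front of $L_\alpha$ and $L$ on the right-hand side.

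The main obstacle is this cross term. Lemma 1 records only block-wise Lipschitz constants, so no ready-made bound on $\|\nabla_\mathbf{\Phi} h(\alpha,\mathbf{\Phi}^k)-\nabla_\mathbf{\Phi} h(\alpha^k,\mathbf{\Phi}^k)\|_F$ is available. The remedy is the explicit form $h=\mathbf{v}^H\mathbf{Q}\mathbf{v}$ with $\mathbf{v}(\alpha,\mathbf{\Phi})$ affine in $\alpha$ through its first coordinate only (see \eqref{v}): the difference $\nabla_\mathbf{\Phi} h(\alpha,\mathbf{\Phi}^k)-\nabla_\mathbf{\Phi} h(\alpha^k,\mathbf{\Phi}^k)$ is then linear in $\alpha-\alpha^k$ with coefficient controlled by $\mathbf{Q}$ and by the sparsity of $\partial\mathbf{v}/\partial\mathrm{vec}(\mathbf{\Phi})$, all of which is already encoded in the constants $L$ and $L_\alpha$ derived in Appendix A, so Young's inequality can be calibrated to fit within the $2L_\alpha$ and $2L$ budget without introducing any new constant.
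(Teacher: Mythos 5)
Your treatment of \eqref{diff Un} is correct and is essentially the paper's own argument: after the multiplier and $\tfrac{\rho_n}{2}$-penalty terms cancel, the residue is the linearization error of $f_n$ at $\mathbf{\Phi}^{k+1}$ plus $\tfrac{L_n}{2}\|\mathbf{\Phi}_n-\mathbf{\Phi}^{k+1}\|_F^2$, and the two-sided descent lemma closes it (your route even yields the sharper constant $L_n$ in place of $2L_n$). For \eqref{diff U} the two arguments diverge at exactly one point: how the two-block structure of $h$ is handled. The paper first writes a joint descent inequality \eqref{ineq h} and then a reversed inequality \eqref{lipschitz property} in which the $\mathbf{\Phi}$-gradient is evaluated at $(\alpha^k,\mathbf{\Phi})$ and the $\alpha$-gradient at $(\alpha,\mathbf{\Phi}^k)$; with those particular mixed points the two resulting gradient differences in \eqref{difference phi 1} are each purely block-wise and are bounded by \eqref{Lip alpha} and \eqref{Lip phi}, giving the factor $2$. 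No single telescoping order produces gradients at both of those mixed points simultaneously, so \eqref{lipschitz property} as written silently discards precisely the cross term $\langle\nabla_{\mathbf{\Phi}}h(\alpha,\mathbf{\Phi}^k)-\nabla_{\mathbf{\Phi}}h(\alpha^k,\mathbf{\Phi}^k),\mathbf{\Phi}-\mathbf{\Phi}^k\rangle$ that you isolate. Your proposal is therefore the more honest route: you telescope through $(\alpha,\mathbf{\Phi}^k)$, accept the cross term, and propose to control it via the explicit form $h=\mathbf{v}^H\mathbf{Q}\mathbf{v}$ with $\mathbf{v}$ affine in $\alpha$, which does make the gradient difference linear in $\alpha-\alpha^k$. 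The one step you have not actually closed is the calibration of Young's inequality: absorbing the cross term into the remaining $L_\alpha|\alpha-\alpha^k|^2+L\|\mathbf{\Phi}-\mathbf{\Phi}^k\|_F^2$ of slack requires a relation of the form $L_c^2\le 4L_\alpha L$ between the cross-coefficient $L_c$ (which scales like $(M-1)\bar P_{\max}|\Theta|\sqrt{NM}$) and the block constants of Lemma~\ref{Lipschtiz continuous}, and this does not obviously hold for the minimal admissible $L_\alpha=2p$ when $|\Theta|$ is large; you would either need to verify it or enlarge $L_\alpha$ (which is permitted, since \eqref{L_ineqs} only lower-bounds the constants). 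In short: same skeleton, but you surface and must genuinely discharge a cross term that the paper's own proof leaves implicit.
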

\begin{proof} For \eqref{diff U}, its left side can be derived as
\begin{equation}\label{difference phi}
\begin{split}
\!\!\!\! &\!\mathcal{U}(\alpha,\mathbf{\Phi},\! \{\mathbf{\Phi}_n^k,\mathbf{\Lambda}_n^k, n\in\mathcal{T}\}\!)\!-\!\mathcal{L}(\alpha,\mathbf{\Phi}\!,\!\{\mathbf{\Phi}_n^k\!,\mathbf{\Lambda}_n^k, n\in\mathcal{T}\}\!)\\
\!\!=& h(\alpha^k,\mathbf{\Phi}^k)\!-\!h(\alpha,\mathbf{\Phi})\!+\!\langle \nabla_{\mathbf\Phi} h(\alpha^k, {\mathbf{\Phi}}^{k}),\mathbf\Phi\!-\!\mathbf\Phi^k\rangle \\
\!\!+& \langle \nabla_{\!\alpha} h(\alpha^k,\! {\mathbf{\Phi}}^{k}),\!\alpha\!\!-\!\alpha^k\rangle\!\!+\!\!\frac{L}{2}\|{\mathbf{\Phi}}\!-\!\!{\mathbf{\Phi}}^{k}\|_F^2\!+\!\!\frac{L_{\alpha}}{2}|\alpha\!\!-\!\alpha^k|^2.
\end{split}
\end{equation}
 {In Lemma \ref{Lipschtiz continuous}, we show that gradient $\nabla h(\alpha,\mathbf{\Phi})$ is Lipschitz continuous.
According to the Decent Lemma \cite{Bertsekas_99}, we have
\begin{equation}\label{ineq h}
\begin{split}
h(\alpha,\mathbf{\Phi})&\leq h(\alpha^k,\mathbf{\Phi}^k)  +\ \langle \nabla_{\mathbf\Phi} h(\alpha^k,{\mathbf{\Phi}^k}),{\mathbf{\Phi}}-{\mathbf{\Phi}^{k}}\rangle\\
&+\langle \nabla_{\alpha} h(\alpha^k,{\mathbf{\Phi}}^k),\alpha\!-\!\alpha^k\rangle+\frac{L}{2}\|{\mathbf{\Phi}}-{\mathbf{\Phi}^{k}}\|_F^2\\
&+\frac{L_{\alpha}}{2}|\alpha\!-\!\alpha^k|^2,
\end{split}
\end{equation}
which can be further derived to the following inequality}
\begin{equation}\label{lipschitz property}
\begin{split}
&h(\alpha^k,\mathbf{\Phi}^k)-h(\alpha,\mathbf{\Phi})   \\
\leq &\ \langle \nabla_{\mathbf\Phi} h(\alpha^k,{\mathbf{\Phi}}),{\mathbf{\Phi}^{k}}-{\mathbf{\Phi}}\rangle+\frac{L}{2}\|{\mathbf{\Phi}}^{k}-{\mathbf{\Phi}}\|_F^2\\
&+\langle \nabla_{\alpha} h(\alpha,{\mathbf{\Phi}}^k),\alpha^k\!-\!\alpha\rangle+\frac{L_{\alpha}}{2}|\alpha^k\!-\!\alpha|^2.
\end{split}
\end{equation}
Plugging \eqref{lipschitz property} into \eqref{difference phi}, we have the following derivations
\begin{equation}\label{difference phi 1}
\begin{split}
\!\!\!\! &\!\mathcal{U}(\alpha,\mathbf{\Phi}, \{\mathbf{\Phi}_n^k,\mathbf{\Lambda}_n^k, n\in\mathcal{T}\})\!\!-\!\mathcal{L}(\alpha,\mathbf{\Phi}\!,\!\{\mathbf{\Phi}_n^k,\mathbf{\Lambda}_n^k, n\in\mathcal{T}\})\\
\!\!\!\!\leq&\langle \nabla_{\mathbf\Phi} h(\alpha^k\!,\mathbf{\Phi}^k)\!- \!\nabla_{\mathbf\Phi} h(\alpha^k\!,\!\mathbf{\Phi}),{\mathbf{\Phi}}^k\!-\!{\mathbf{\Phi}}\rangle \!+\!L\|{\mathbf{\Phi}}\!-\!{\mathbf{\Phi}^k}\|_F^2\\
\!\!\!\!+&\langle \nabla_{\alpha} h(\alpha^k,\mathbf{\Phi}^k)\!-\!\nabla_{\alpha} h(\alpha,\mathbf{\Phi}^k),{\alpha^k-\alpha}\rangle\!+\!L_{\alpha}|\alpha\!-\!\alpha^k|^2.
\end{split}
\end{equation}
Furthermore, since
\[
 \begin{split}
   &\langle \nabla_{\mathbf\Phi} h(\alpha^k\!,\mathbf{\Phi}^k)\!- \!\nabla_{\mathbf\Phi} h(\alpha^k\,\ \mathbf{\Phi}),{\mathbf{\Phi}}^k\!-\!{\mathbf{\Phi}}\rangle
   \leq L\lVert\mathbf{\Phi}^k\!-\!\mathbf{\Phi}\rVert^2_F,\\
   &\langle \nabla_{\alpha} h(\alpha^k,\mathbf{\Phi}^k)\!-\!\nabla_{\alpha} h(\alpha,\mathbf{\Phi}^k),{\alpha^k-\alpha}\rangle \leq L_{\alpha} |\alpha^k-\alpha|^2,
 \end{split}
\]
we can get \eqref{diff U} from \eqref{difference phi 1}.

Exploiting the Lipschitz continuous property of $\nabla f_n(\mathbf{\Phi})$ and similar derivations, \eqref{diff Un} can also be obtained.
$\hfill\blacksquare$
\end{proof}

\begin{lemma}\label{lemma difference lambda}
in each iteration, $\forall~n\in \mathcal{T} $, $\|{\bf\Lambda}_n^{k+1}-{\bf\Lambda}_n^k\|_F^2$ can be bounded, i.e.,
\begin{equation}\label{difference lambda}
\begin{split}
\hspace{-0.3cm}\|{\mathbf{\Lambda}_n^{k+1}}\!\!-\!\!{\mathbf{\Lambda}_n^k}\|_F^2 \!\!\leq\! 2L_n^2\!\left(2\|{\mathbf{\Phi}_n^{k+1}}\!\!\!-\!{\mathbf{\Phi}_n^k}\|_F^2\!+\!3\|{\mathbf{\Phi}^{k+1}}\!\!-\!\!{\mathbf{\Phi}^{k}}\|_F^2\!\right).
\end{split}
\end{equation}
\end{lemma}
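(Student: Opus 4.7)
The plan is to eliminate the Lagrange multipliers in favor of gradient terms by invoking the first-order optimality condition for the $\mathbf{\Phi}_n$-subproblem, and then to bound the resulting difference using the Lipschitz continuity of $\nabla f_n$ established in Lemma~1.

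First, I would write down the optimality condition for \eqref{relax ADMM b}. Since $\mathcal{U}_n(\mathbf{\Phi}^{k+1},\mathbf{\Phi}_n,\mathbf{\Lambda}_n^k)$ is strongly convex and unconstrained in $\mathbf{\Phi}_n$, its minimizer satisfies
\[
\nabla f_n(\mathbf{\Phi}^{k+1}) + \mathbf{\Lambda}_n^k + (\rho_n+L_n)(\mathbf{\Phi}_n^{k+1}-\mathbf{\Phi}^{k+1}) = \mathbf{0}.
\]
Combining this identity with the dual update $\mathbf{\Lambda}_n^{k+1}=\mathbf{\Lambda}_n^k+\rho_n(\mathbf{\Phi}_n^{k+1}-\mathbf{\Phi}^{k+1})$ to cancel the $\rho_n$-term, I obtain the clean formula
\[
\mathbf{\Lambda}_n^{k+1} = -\nabla f_n(\mathbf{\Phi}^{k+1}) - L_n(\mathbf{\Phi}_n^{k+1}-\mathbf{\Phi}^{k+1}).
\]
An analogous identity holds at the previous iteration, i.e.\ $\mathbf{\Lambda}_n^k=-\nabla f_n(\mathbf{\Phi}^k)-L_n(\mathbf{\Phi}_n^k-\mathbf{\Phi}^k)$.

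Subtracting these two identities yields
\[
\mathbf{\Lambda}_n^{k+1}-\mathbf{\Lambda}_n^k = -\bigl[\nabla f_n(\mathbf{\Phi}^{k+1})-\nabla f_n(\mathbf{\Phi}^k)\bigr] - L_n(\mathbf{\Phi}_n^{k+1}-\mathbf{\Phi}_n^k) + L_n(\mathbf{\Phi}^{k+1}-\mathbf{\Phi}^k).
\]
I would then apply the elementary inequality $\|\mathbf{a}+\mathbf{b}\|_F^2\le 2(\|\mathbf{a}\|_F^2+\|\mathbf{b}\|_F^2)$ twice, grouping the three summands as $((A)+(B))+(C)$ where $A=\nabla f_n(\mathbf{\Phi}^{k+1})-\nabla f_n(\mathbf{\Phi}^k)$, $B=L_n(\mathbf{\Phi}_n^{k+1}-\mathbf{\Phi}_n^k)$, $C=L_n(\mathbf{\Phi}^{k+1}-\mathbf{\Phi}^k)$. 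This produces $\|A+B+C\|_F^2\le 4\|A\|_F^2+4\|B\|_F^2+2\|C\|_F^2$. Finally, invoking \eqref{Lip phi n} from Lemma~1 to bound $\|A\|_F^2\le L_n^2\|\mathbf{\Phi}^{k+1}-\mathbf{\Phi}^k\|_F^2$ and substituting the explicit forms of $\|B\|_F^2$ and $\|C\|_F^2$ gives the stated bound $2L_n^2\bigl(2\|\mathbf{\Phi}_n^{k+1}-\mathbf{\Phi}_n^k\|_F^2+3\|\mathbf{\Phi}^{k+1}-\mathbf{\Phi}^k\|_F^2\bigr)$.

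There is no serious obstacle; the only delicate point is choosing the right pairwise grouping in the double application of $\|\cdot+\cdot\|_F^2\le 2(\|\cdot\|_F^2+\|\cdot\|_F^2)$ so that the final constants line up exactly as $2$ and $3$ inside the parentheses. Grouping the Lipschitz-controlled difference with the $(\mathbf{\Phi}_n^{k+1}-\mathbf{\Phi}_n^k)$ term first, and isolating the $(\mathbf{\Phi}^{k+1}-\mathbf{\Phi}^k)$ term on the outside, achieves this.
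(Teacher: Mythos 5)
Your proposal is correct and follows essentially the same route as the paper: it derives the identity $\mathbf{\Lambda}_n^{k+1}=-\nabla f_n(\mathbf{\Phi}^{k+1})-L_n(\mathbf{\Phi}_n^{k+1}-\mathbf{\Phi}^{k+1})$ from the first-order optimality condition of \eqref{relax ADMM b} combined with the dual update, then bounds the difference via the Lipschitz property \eqref{Lip phi n} and the inequality $\|\mathbf{a}+\mathbf{b}\|_F^2\le 2(\|\mathbf{a}\|_F^2+\|\mathbf{b}\|_F^2)$. The only cosmetic difference is the grouping in the double application of that inequality (the paper keeps the gradient difference separate and splits $L_n((\mathbf{\Phi}_n^{k+1}-\mathbf{\Phi}_n^k)-(\mathbf{\Phi}^{k+1}-\mathbf{\Phi}^k))$, while you group the gradient difference with the $\mathbf{\Phi}_n$-term), and both yield exactly the constants $2$ and $3$.
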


\begin{proof}
The optimal solutions of the problems \eqref{relax ADMM b} can be obtained by solving $\nabla_{\mathbf{\Phi}_n}\mathcal{U}_n\left(\mathbf{\Phi}^{k+1},\mathbf{\Phi}_n,\mathbf{\Lambda}_n^k\right)=0$, $\forall~n\in \mathcal{T}$, i.e.,
\begin{equation}\label{solve pdmm b}
\begin{split}
\nabla f_n({\mathbf{\Phi}}^{k+1})+ \mathbf{\Lambda}_n^k+(\rho_n+L_n)(\mathbf{\Phi}_n^{k+1}-\mathbf{\Phi}^{k+1})=0.
\end{split}
\end{equation}
Combining \eqref{solve pdmm b} and \eqref{relax ADMM c}, we can obtain
\begin{equation}\label{lambda update}
\begin{split}
\!\!\!\!\!\mathbf{\Lambda}_n^{k+1}=-\nabla f_n({\mathbf{\Phi}}^{k+1})- L_n(\mathbf{\Phi}_n^{k+1}-\mathbf{\Phi}^{k+1}).
\end{split}
\end{equation}
Plugging \eqref{lambda update} into  $\|\mathbf{\Lambda}_n^{k+1}-\mathbf{\Lambda}_n^{k}\|_F^2$, we have the following derivations
\[
\begin{split}
\!\!\!&\|\mathbf{\Lambda}_n^{k+1}-\mathbf{\Lambda}_n^{k}\|_F^2\\
\!\!\!=&\|\nabla f_n(\!{\mathbf{\Phi}^{k+1}}\!)\!-\!\nabla f_n({\mathbf{\Phi}^{k}})\!+\!L_n(\mathbf{\Phi}_n^{k+1} \!-\!\mathbf{\Phi}^{k+1}\!-\!\mathbf{\Phi}_n^{k}\!+\!\mathbf{\Phi}^{k})\|_F^2\\
\!\!\!{\leq}&2\|\!\nabla\! f_n(\!{\mathbf{\Phi}^{k+1}}\!)\!-\!\!\nabla \!f_n(\!{\mathbf{\Phi}^{k}})\!\|_F^2\!+\!\!2L_n^2\|\!\mathbf{\Phi}_n^{k+1} \!-\!\mathbf{\Phi}_n^{k}\!-\!\mathbf{\Phi}^{k+1}\!+\!\mathbf{\Phi}^{k}\!\|_F^2\\
\!\!\!{\leq}&\ 2L_n^2(2\|{\mathbf{\Phi}_n^{k+1}}-{\mathbf{\Phi}_n^{k}}\|_F^2+3\|{\mathbf{\Phi}^{k+1}}-{\mathbf{\Phi}^{k}}\|_F^2),
\end{split}
\]
where the second inequality comes from Lemma 1. This completes the proof. $\hfill\blacksquare$
\end{proof}

\begin{lemma}\label{successive difference}
let $\bar{c}_{n}\!=\!\rho_n^3\!-\!7\rho_n^2L_n\!-\! {8\rho_nL_n^2\!-\!32L_n^3}$ and $\tilde{c}_{n}\!=\! \rho_n^3\!-\!12\rho_nL_n^2\!-\!48L_n^3$. If $\bar{c}_{n}>0$ and $\tilde{c}_{n}>0$, in each consensus-ADMM iteration, the augmented Lagrangian function $\mathcal{L}(\cdot)$ {\it decreases sufficiently}, i.e.,
\begin{equation}\label{succ-diff phi}
\begin{split}
\!\!\!\!\!& \mathcal{L}(\alpha^{k},\mathbf{\Phi}^k, \{\mathbf{\Phi}_n^k,\mathbf{\Lambda}_n^k, n\in\mathcal{T}\})\\
&\hspace{1.5cm}-\mathcal{L}(\alpha^{k+1},\mathbf{\Phi}^{k+1},\{\mathbf{\Phi}_n^{k+1},\mathbf{\Lambda}_n^{k+1}, n\in\mathcal{T}\})\\
\!\!\!\!\!\geq& \sum_{n\in \mathcal{T}}\frac{1}{2\rho_n^2}\left( \bar{c}_{n}\|{\mathbf{\Phi}_n^{k+1}}\!-\!{\mathbf{\Phi}_n^k}\|_F^2\!+\!\tilde{c}_{n} \|{\mathbf{\Phi}^{k+1}}\!-\!{\mathbf{\Phi}^{k}}\|_F^2 \right) \\
\!\!\!\!\!&+\frac{L}{2}\|{\mathbf{\Phi}^{k+1}}-{\mathbf{\Phi}^{k}}\|_F^2 +\frac{L_{\alpha}}{2}|\alpha^{k+1}-\alpha^k|^2.
\end{split}
\end{equation}
\end{lemma}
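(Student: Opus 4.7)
The plan is to decompose the one-step change $\mathcal{L}^{k}-\mathcal{L}^{k+1}$ into three pieces obtained by updating $(\alpha,\mathbf{\Phi})$, then $\{\mathbf{\Phi}_n\}$, then $\{\mathbf{\Lambda}_n\}$ one block at a time, and to bound each piece separately. For the $(\alpha,\mathbf{\Phi})$ piece, the surrogate $\mathcal{U}$ is tight at the linearization point $(\alpha^k,\mathbf{\Phi}^k)$, majorizes $\mathcal{L}$ everywhere (by Lemma~\ref{Lipschtiz continuous} and the Descent Lemma), and is strongly convex in $\alpha$ and $\mathbf{\Phi}$ with moduli $L_\alpha$ and $L+\sum_n\rho_n$ respectively (the $\rho_n$ contribution comes from the quadratic penalties in $\sum_n\mathcal{L}_n$). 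Combined with the constrained optimality of $(\alpha^{k+1},\mathbf{\Phi}^{k+1})$ in \eqref{relax ADMM a}, this gives a first contribution bounded below by $\tfrac{L_\alpha}{2}|\alpha^{k+1}-\alpha^k|^2+\tfrac{L+\sum_n\rho_n}{2}\|\mathbf{\Phi}^{k+1}-\mathbf{\Phi}^k\|_F^2$.

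For each block update on $\mathbf{\Phi}_n$, I would telescope the change as $(\mathcal{L}_n^{\mathrm{old}}-\mathcal{U}_n^{\mathrm{old}})+(\mathcal{U}_n^{\mathrm{old}}-\mathcal{U}_n^{\mathrm{new}})+(\mathcal{U}_n^{\mathrm{new}}-\mathcal{L}_n^{\mathrm{new}})$. The middle bracket is at least $\tfrac{\rho_n+L_n}{2}\|\mathbf{\Phi}_n^{k+1}-\mathbf{\Phi}_n^k\|_F^2$ by the strong convexity of $\mathcal{U}_n$ with $\mathbf{\Phi}_n^{k+1}$ as its minimizer; the last bracket is non-negative since $\mathcal{U}_n\geq\mathcal{L}_n$; and the first bracket is lower-bounded by $-2L_n\|\mathbf{\Phi}_n^k-\mathbf{\Phi}^{k+1}\|_F^2$ via Lemma~\ref{upperbound function}. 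For the dual step, the identity $\mathbf{\Lambda}_n^{k+1}-\mathbf{\Lambda}_n^k=\rho_n(\mathbf{\Phi}_n^{k+1}-\mathbf{\Phi}^{k+1})$ yields exactly $\mathcal{L}^{(\mathrm{iii})}-\mathcal{L}^{(\mathrm{iv})}=-\sum_n\tfrac{1}{\rho_n}\|\mathbf{\Lambda}_n^{k+1}-\mathbf{\Lambda}_n^k\|_F^2$, and Lemma~\ref{lemma difference lambda} already rewrites the right-hand side in the two desired quantities $\|\mathbf{\Phi}_n^{k+1}-\mathbf{\Phi}_n^k\|_F^2$ and $\|\mathbf{\Phi}^{k+1}-\mathbf{\Phi}^k\|_F^2$.

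The remaining task is to eliminate the cross-residual $-2L_n\|\mathbf{\Phi}_n^k-\mathbf{\Phi}^{k+1}\|_F^2$. Using $\mathbf{\Phi}_n^{k+1}-\mathbf{\Phi}^{k+1}=\tfrac{1}{\rho_n}(\mathbf{\Lambda}_n^{k+1}-\mathbf{\Lambda}_n^k)$ together with the triangle inequality,
\[
\|\mathbf{\Phi}_n^k-\mathbf{\Phi}^{k+1}\|_F^2\leq 2\|\mathbf{\Phi}_n^{k+1}-\mathbf{\Phi}_n^k\|_F^2+\tfrac{2}{\rho_n^2}\|\mathbf{\Lambda}_n^{k+1}-\mathbf{\Lambda}_n^k\|_F^2,
\]
and another application of Lemma~\ref{lemma difference lambda} eliminates the dual difference on the right. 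Gathering all coefficients over the common denominator $2\rho_n^2$ then reduces the numerator of $\|\mathbf{\Phi}_n^{k+1}-\mathbf{\Phi}_n^k\|_F^2$ to $\rho_n^3-7\rho_n^2 L_n-8\rho_n L_n^2-32L_n^3=\bar c_n$ and the numerator of $\|\mathbf{\Phi}^{k+1}-\mathbf{\Phi}^k\|_F^2$ to $\rho_n^3-12\rho_n L_n^2-48L_n^3=\tilde c_n$, while the extra $\tfrac{L}{2}\|\mathbf{\Phi}^{k+1}-\mathbf{\Phi}^k\|_F^2$ and $\tfrac{L_\alpha}{2}|\alpha^{k+1}-\alpha^k|^2$ survive untouched from the first step, giving \eqref{succ-diff phi}.

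The main obstacle is structural rather than technical: because $\mathcal{U}_n$ linearizes $f_n$ around the newest $\mathbf{\Phi}^{k+1}$ instead of around $\mathbf{\Phi}_n^k$, the surrogate is not tight at $\mathbf{\Phi}_n^k$, so the block step must leave a residual that has to be paid for by borrowing from the strong-convexity excess in Step~1 and from the block's own decrease. The $O(L_n^3/\rho_n^2)$ correction terms appearing in $\bar c_n$ and $\tilde c_n$ are precisely the cost of this substitution via Lemma~\ref{lemma difference lambda}, and ensuring they remain positive is exactly what will ultimately force the hypothesis $\rho_n\geq 9L_n$ used in Theorem~\ref{convergence}.
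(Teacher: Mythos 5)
Your proposal is correct and follows essentially the same route as the paper's own proof: the same three-block decomposition of $\mathcal{L}^{k}-\mathcal{L}^{k+1}$, the same use of tightness, majorization and strong convexity of $\mathcal{U}$ and $\mathcal{U}_n$, the same $-2L_n\|\mathbf{\Phi}_n^k-\mathbf{\Phi}^{k+1}\|_F^2$ residual eliminated via the triangle inequality and Lemma~\ref{lemma difference lambda}, and the coefficient bookkeeping reproduces $\bar c_n$ and $\tilde c_n$ exactly. The only cosmetic difference is that you keep the non-negative bracket $\mathcal{U}_n^{\mathrm{new}}-\mathcal{L}_n^{\mathrm{new}}$ explicit where the paper drops it immediately.
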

\begin{proof}
Define the following quantities
\[
  \begin{split}
    &\Delta_{\alpha,\mathbf{\Phi}}^k\!=\!\mathcal{L}(\alpha^{k}\!,\!\mathbf{\Phi}^{k}\!, \{\mathbf{\Phi}_n^k,\mathbf{\Lambda}_n^k, n\in\mathcal{T}\}) \\
    &\hspace{2.6cm}-\!\mathcal{L}(\alpha^{k+1}\!,\mathbf{\Phi}^{k+1}\!,\!\{\mathbf{\Phi}_n^{k},\!\mathbf{\Lambda}_n^{k},n\in\mathcal{T}\}),\\
    &\Delta_{\mathbf{\Phi}_n}^k\!\!=\!\mathcal{L}_n(\mathbf{\Phi}^{k+1}\!, \mathbf{\Phi}_n^k,\mathbf{\Lambda}_n^k)-\!\mathcal{L}_n(\mathbf{\Phi}^{k+1}\!,\!\mathbf{\Phi}_n^{k+1}\!,\!\mathbf{\Lambda}_n^{k}),\\
    &\Delta_{\mathbf{\Lambda}_n}^k \!\!=\!\mathcal{L}_n(\mathbf{\Phi}^{k\!+\!1}\!, \!\mathbf{\Phi}_n^{k+1}\!,\!\mathbf{\Lambda}_n^k)\!-\!\mathcal{L}_n(\mathbf{\Phi}^{k+1}\!,\mathbf{\Phi}_n^{k+1}\!,\!\mathbf{\Lambda}_n^{k+1}).\\
  \end{split}
\]
Then, we can get
\begin{equation}\label{diff L}
  \begin{split}
  &\mathcal{L}(\alpha^{k},\mathbf{\Phi}^k, \{\mathbf{\Phi}_n^k,\mathbf{\Lambda}_n^k,n\in\mathcal{T}\}) \\
  &\hspace{1.5cm}-\!\mathcal{L}(\alpha^{k+1},\!\mathbf{\Phi}^{k+1},\{\mathbf{\Phi}_n^{k+1},\mathbf{\Lambda}_n^{k+1},n\in\mathcal{T}\}) \\
= & \Delta_{\alpha,\mathbf{\Phi}}^k + \displaystyle\sum_{n\in\mathcal{T}} \left(\Delta_{\mathbf{\Lambda}_n}^k + \Delta_{\mathbf{\Phi}_n}^k\right).
  \end{split}
\end{equation}

For $\Delta_{\alpha,\mathbf{\Phi}}^k$, we have the following inequality
\begin{equation}\label{delta alpha phi}
\begin{split}
&\Delta_{\alpha,\mathbf{\Phi}}^k\geq\mathcal{L}(\alpha^{k}\!,\!\mathbf{\Phi}^{k}\!, \{\mathbf{\Phi}_n^k,\mathbf{\Lambda}_n^k, n\in\mathcal{T}\}) \\
&\hspace{2.3cm}-\!\mathcal{U}(\alpha^{k+1}\!,\mathbf{\Phi}^{k+1}\!,\!\{\mathbf{\Phi}_n^{k},\!\mathbf{\Lambda}_n^{k},n\in\mathcal{T}\}).
\end{split}
\end{equation}
 {
Letting $\alpha = \alpha^k$ and $\mathbf{\Phi}=\mathbf{\Phi}^{k}$ and plugging them into \eqref{diff U}, we can get
\[\mathcal{U}(\alpha^{k}\!,\!\mathbf{\Phi}^{k}\!, \{\mathbf{\Phi}_n^k,\mathbf{\Lambda}_n^k, n\!\in\!\mathcal{T}\})\!\leq\!\mathcal{L}(\alpha^{k}\!,\!\mathbf{\Phi}^{k}\!, \{\mathbf{\Phi}_n^k,\mathbf{\Lambda}_n^k, n\!\in\!\mathcal{T}\}).\]
Moreover, since $\mathcal{U}(\alpha^{k}\!,\!\mathbf{\Phi}^{k}\!, \{\mathbf{\Phi}_n^k,\mathbf{\Lambda}_n^k, n\in\mathcal{T}\})$ is up-bound of function $\mathcal{L}(\alpha^{k}\!,\!\mathbf{\Phi}^{k}\!, \{\mathbf{\Phi}_n^k,\mathbf{\Lambda}_n^k, n\in\mathcal{T}\})$, i.e., ,
\[\mathcal{U}(\alpha^{k}\!,\!\mathbf{\Phi}^{k}\!, \{\mathbf{\Phi}_n^k,\mathbf{\Lambda}_n^k, n\!\in\!\mathcal{T}\})\!\geq\!\mathcal{L}(\alpha^{k}\!,\!\mathbf{\Phi}^{k}\!, \{\mathbf{\Phi}_n^k,\mathbf{\Lambda}_n^k, n\!\in\!\mathcal{T}\}).\]
Combining the above two inequalities, we can conclude
\[\mathcal{U}(\alpha^{k}\!,\!\mathbf{\Phi}^{k}\!, \{\mathbf{\Phi}_n^k,\mathbf{\Lambda}_n^k, n\in\mathcal{T}\})=\mathcal{L}(\alpha^{k}\!,\!\mathbf{\Phi}^{k}\!, \{\mathbf{\Phi}_n^k,\mathbf{\Lambda}_n^k, n\in\mathcal{T}\}).\]
}
Then, inequality \eqref{delta alpha phi} can be further derived as follows
\begin{equation}\label{delta alpha phi 1}
\begin{split}
\Delta_{\alpha,\mathbf{\Phi}}^k&\geq\mathcal{U}(\alpha^{k}\!,\!\mathbf{\Phi}^{k}\!, \{\mathbf{\Phi}_n^k,\mathbf{\Lambda}_n^k, n\in\mathcal{T}\}) \\
&\hspace{1.8cm}-\!\mathcal{U}(\alpha^{k+1}\!,\mathbf{\Phi}^{k+1}\!,\!\{\mathbf{\Phi}_n^{k},\!\mathbf{\Lambda}_n^{k},n\in\mathcal{T}\}).
\end{split}
\end{equation}
 {
Since
 $\mathcal{U}(\alpha,\mathbf{\Phi}, \{\mathbf{\Phi}_n^k, \mathbf{\Lambda}_n^k, n\!\in\!\mathcal{T}\})$  is strongly quadratic convex function with respect to $\alpha$ and $\mathbf{\Phi}$ (see \eqref{upperbound U0}) \cite{boyd_04}, we have
 \begin{equation}\label{diff UU}
\begin{split}
&\mathcal{U}(\!\alpha^{k}\!,\!\mathbf{\Phi}^{k}\!, \{\mathbf{\Phi}_n^k,\mathbf{\Lambda}_n^k, n\!\in\!\mathcal{T}\})\\
&\hspace{1.7cm}-\!\mathcal{U}(\alpha^{k+1}\!,\mathbf{\Phi}^{k+1},\{\mathbf{\Phi}_n^{k},\mathbf{\Lambda}_n^{k},n\!\in\mathcal{T}\})\\
&\geq \!\langle\nabla_{\mathbf{\Phi}}\mathcal{U}(\alpha^{k+1}\!,\mathbf{\Phi}^{k+1}\!,\!\{\mathbf{\Phi}_n^{k},\!\mathbf{\Lambda}_n^{k},n\in\mathcal{T}\}), \mathbf{\Phi}^{k}\!-\!\mathbf{\Phi}^{k+1}\rangle \\
&+ \!\langle\nabla_{\alpha}\mathcal{U}(\alpha^{k+1}\!,\mathbf{\Phi}^{k+1}\!,\!\{\mathbf{\Phi}_n^{k},\!\mathbf{\Lambda}_n^{k},n\in\mathcal{T}\}), \alpha^{k}\!-\!\alpha^{k+1}\rangle \\
&+\frac{1}{2}(L \!+\! \sum\limits_{n\in \mathcal{T}}\rho_n)\|\mathbf{\Phi}^{k+1}\!-\!\mathbf{\Phi}^{k}\|_F^2 \!+\!\frac{L_{\alpha}}{2}|\alpha^{k+1}\!-\!\alpha^k|^2.
\end{split}
\end{equation}
Moreover, since
$\{\!\alpha^{k\!+\!1}\!,\!\mathbf{\Phi}^{k\!+\!1}\!\}\!\!\!=\!\!\!\!\underset{\alpha\!\in\!(0,\alpha_{\!\max}],\atop{{0\preceq\mathbf{\Phi}\prec2\pi}}} {\arg \min} \!\mathcal{U}\!\!\left(\!\alpha,\!\mathbf{\Phi}\!, \! \{\!\mathbf{\Phi}_n^k\!,\!\mathbf{\Lambda}_n^k\!,\!n\!\in\!\mathcal{T}\}\!\right)$,  we can get
 \begin{equation}\label{stationary two}
\begin{split}
  &\left\langle \! \nabla_{\mathbf{\Phi}}\mathcal{U}(\!\alpha^{k+1}\!,\mathbf{\Phi}^{k+1}\!,\{\mathbf{\Phi}_n^{k},\!\mathbf{\Lambda}_n^{k},n\!\in\!\mathcal{T}\}),\mathbf{\Phi}^{k}\!-\!\mathbf{\Phi}^{k+1}\right\rangle\!\geq\!0,\\
  &\langle\nabla_{\alpha}\mathcal{U}(\!\alpha^{k+1}\!,\mathbf{\Phi}^{k+1}\!,\{\mathbf{\Phi}_n^{k},\!\mathbf{\Lambda}_n^{k},n\!\in\!\mathcal{T}\!\}), \alpha^{k}-\alpha^{k+1}\rangle \geq 0.
\end{split}
\end{equation}
Plugging \eqref{diff UU} and \eqref{stationary two} into \eqref{delta alpha phi 1},
we can obtain
 \begin{equation}
\begin{split}
\Delta_{\alpha,\mathbf{\Phi}}^k\!\!
\geq\!\! \frac{1}{2}(L \!+\!\! \sum\limits_{n\in \mathcal{T}}\rho_n)\|\mathbf{\Phi}^{k+1}\!-\!\mathbf{\Phi}^{k}\|_F^2\! +\!\!\frac{L_{\alpha}}{2}|\alpha^{k+1}\!\!-\!\alpha^k|^2.
\end{split}
\end{equation}
}

For $\Delta_{\mathbf{\Phi}_n}^k$, it can be rewritten as
\begin{equation}\label{delta phi n 1}
\begin{split}
\Delta_{\mathbf{\Phi}_n}^k&\geq\mathcal{L}_n(\mathbf{\Phi}^{k+1}\!, \mathbf{\Phi}_n^k,\mathbf{\Lambda}_n^k)-\mathcal{U}_n(\mathbf{\Phi}^{k+1}\!,\mathbf{\Phi}_n^{k+1},\mathbf{\Lambda}_n^{k})\\
&=\mathcal{L}_n(\mathbf{\Phi}^{k+1}\!, \mathbf{\Phi}_n^k,\!\mathbf{\Lambda}_n^k)\!-\!\mathcal{U}_n(\mathbf{\Phi}^{k+1}\!,\!\mathbf{\Phi}_n^{k},\mathbf{\Lambda}_n^{k})\\
&\ \  +\mathcal{U}_n(\mathbf{\Phi}^{k+1}\!\!, \mathbf{\Phi}_n^k,\!\mathbf{\Lambda}_n^k)\!-\!\mathcal{U}_n(\mathbf{\Phi}^{k+1}\!,\!\mathbf{\Phi}_n^{k+1}\!,\!\mathbf{\Lambda}_n^{k}).
\end{split}
\end{equation}
 {
Plugging $\mathbf{\Phi}_n = \mathbf{\Phi}_n^{k}$ into \eqref{diff Un}, we can obtain
\[
  \begin{split}
  \mathcal{L}_n(\!\mathbf{\Phi}^{k+1}\!, \mathbf{\Phi}_n^k,\!\mathbf{\Lambda}_n^k)\!-\!\mathcal{U}_n(\!\mathbf{\Phi}^{k+1}\!,\!\mathbf{\Phi}_n^{k},\mathbf{\Lambda}_n^{k})
  \!\!\geq\!\! -\!2L_n\!\|{\mathbf{\Phi}_n^k}\!-\!{\mathbf{\Phi}^{k+1}}\|_{F}^2.
  \end{split}
\]
Because the functions $\mathcal{U}_n\left(\mathbf{\Phi}^{k+1},\mathbf{\Phi}_n,\mathbf{\Lambda}_n^{k}\right)$, $n\in\mathcal{T}$, are strongly convex with respect to $\mathbf{\Phi}_n$. We have
\[
  \begin{split}
  &\mathcal{U}_n\!(\!\mathbf{\Phi}^{k+1}\!\!, \mathbf{\Phi}_n^k,\mathbf{\Lambda}_n^k\!)\!\!-\!\!\mathcal{U}_n\!(\!\mathbf{\Phi}^{k+1}\!\!,\!\mathbf{\Phi}_n^{k+1}\!\!,\!\mathbf{\Lambda}_n^{k}\!) \!\!\geq\!\!\frac{\rho_n\!\!+\!\!L_n}{2}\|\mathbf{\Phi}_n^{k+1}\!\!\!-\!\!\mathbf{\Phi}_n^{k}\|_F^2,
  \end{split}
\]
Plugging the above two inequations into \eqref{delta phi n 1}, we have}
\begin{equation}
\begin{split}
\!\!\!\!\Delta_{\mathbf{\Phi}_n}^k&\!\!\geq\!-2L_n\|{\mathbf{\Phi}_n^k}\!-\!{\mathbf{\Phi}^{k+1}}\|_F^2\!+\!\frac{\rho_n\!+\!L_n}{2}\|\mathbf{\Phi}_n^{k+1}\!-\!\mathbf{\Phi}_n^{k}\|_F^2,\\
\vspace{-20pt}&\!\geq\!\!-4L_n\!\|{\mathbf{\Phi}_n^{k+1}}\!\!-\!{\mathbf{\Phi}^{k+1}}\|_F^2\!+\!\frac{\rho_n\!\!-\!7\!L_n}{2}\|\mathbf{\Phi}_n^{k+1}\!\!-\!\mathbf{\Phi}_n^{k}\|_F^2.\\
\end{split}
\end{equation}
Plugging \eqref{relax ADMM c} into the above inequality, we can get
\begin{equation}\label{delta phi n}
\!\Delta_{\mathbf{\Phi}_n}^k\!\geq\!\frac{-4L_n}{\rho_n^2}\!\|\mathbf{\Lambda}_n^{k+1}-\mathbf{\Lambda}_n^{k}\|_F^2\!+\!\frac{\rho_n\!-7\!L_n}{2}\|\mathbf{\Phi}_n^{k+1}\!-\!\mathbf{\Phi}_n^{k}\|_F^2.
\end{equation}
Furthermore, plugging \eqref{difference lambda} into \eqref{delta phi n}, we can see that $\Delta_{\mathbf{\Phi}_n}^k$ is lower bounded by
\begin{equation}\label{bound delta phi n}
\!\Delta_{\mathbf{\Phi}_n}^k\!\!\!\geq\!\!\frac{\rho_n^3\!\!-\!\!7\rho_n^2\!L_n\!\!-\! {32L_n^3}}{2\rho_n^2}\!\|\mathbf{\Phi}_n^{k+1}\!\!-\!\mathbf{\Phi}_n^{k}\|_{\!F}^2
\!-\!\frac{24L_n^3}{\rho_n^2}\|\mathbf{\Phi}^{k+1}\!-\!\mathbf{\Phi}^{k}\|_{\!F}^2.
\end{equation}

For $\Delta_{\mathbf{\Lambda}_n}^k$, through similar derivations, we have
\begin{equation}\label{delta lambda}
\Delta_{\mathbf{\Lambda}_n}^k {\geq}-\frac{2L_n^2}{\rho_n}\left(2\|{\mathbf{\Phi}_n^{k+1}}-{\mathbf{\Phi}_n^{k}}\|_F^2+3\|{\mathbf{\Phi}^{k+1}}-{\mathbf{\Phi}^{k}}\|_F^2\right).
\end{equation}

Plugging \eqref{delta alpha phi 1}, \eqref{bound delta phi n}, and \eqref{delta lambda} into \eqref{diff L}, we can obtain
\[
  \begin{split}
 \!\!\!\!\! &\mathcal{L}(\alpha^{k},\mathbf{\Phi}^k, \{\mathbf{\Phi}_n^k,\mathbf{\Lambda}_n^k,n\in\mathcal{T}\}) \\
  &\hspace{1.5cm}-\!\mathcal{L}(\alpha^{k+1},\!\mathbf{\Phi}^{k+1},\{\mathbf{\Phi}_n^{k+1},\mathbf{\Lambda}_n^{k+1},n\in\mathcal{T}\}) \\
\!\!\!\!\!\geq& \sum_{n\in \mathcal{T}}\frac{1}{2\rho_n^2}\left( \bar{c}_{n}\|{\mathbf{\Phi}_n^{k+1}}\!-\!{\mathbf{\Phi}_n^k}\|_F^2\!+\!\tilde{c}_{n} \|{\mathbf{\Phi}^{k+1}}\!-\!{\mathbf{\Phi}^{k}}\|_F^2 \right) \\
\!\!\!\!\!&+\frac{L}{2}\|{\mathbf{\Phi}^{k+1}}-{\mathbf{\Phi}^{k}}\|_F^2 +\frac{L_{\alpha}}{2}|\alpha^{k+1}-\alpha^k|^2,
  \end{split}
\]
where $\bar{c}_{n}\!=\!\rho_n^3\!-\!7\rho_n^2L_n\!-\! {8\rho_nL_n^2\!-\!32L_n^3}$ and $\tilde{c}_{n}\!=\! \rho_n^3\!-\!12\rho_nL_n^2\!-\!48L_n^3$. So, $\forall n \in \mathcal{T}$, if $\bar{c}_{n}>0$ and $\tilde{c}_{n}>0$, then the augmented Lagrangian function $\mathcal{L}(\cdot)$ {\it decreases sufficiently}. This completes the proof. $\hfill\blacksquare$
\end{proof}

\begin{lemma}\label{lemma lower bound}
  {if $\rho_n>5L_n$, augmented Lagrangian function  $\mathcal{L}(\alpha^{k+1},\mathbf{\Phi}^{k+1},\{\mathbf{\Phi}_n^{k+1},\mathbf{\Lambda}_n^{k+1}, n\in\mathcal{T}\})\geq 0$, $\forall k$}.
 \end{lemma}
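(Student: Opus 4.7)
The plan is to show that every summand in $\mathcal{L}(\alpha^{k+1},\mathbf{\Phi}^{k+1},\{\mathbf{\Phi}_n^{k+1},\mathbf{\Lambda}_n^{k+1}\})$ is nonnegative once we substitute for $\mathbf{\Lambda}_n^{k+1}$ using the relation derived in Lemma~\ref{lemma difference lambda}. Two trivial observations will drive the argument: first, $h(\alpha^{k+1},\mathbf{\Phi}^{k+1})=e(\alpha^{k+1},\mathbf{X}(\mathbf{\Phi}^{k+1}))\geq 0$ because it is the sum of squared moduli in \eqref{problem e}; second, $f_n(\mathbf{\Phi})=\|\mathbf{B}_n(\mathbf{\Phi})\|_F^2\geq 0$ by definition. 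Hence the only term in \eqref{AL} that can be negative is the inner product $\langle\mathbf{\Lambda}_n^{k+1},\mathbf{\Phi}_n^{k+1}-\mathbf{\Phi}^{k+1}\rangle$, and the whole proof reduces to absorbing this single term into $f_n(\mathbf{\Phi}_n^{k+1})$ and the quadratic penalty $\frac{\rho_n}{2}\|\mathbf{\Phi}_n^{k+1}-\mathbf{\Phi}^{k+1}\|_F^2$.

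First I would invoke the closed-form expression \eqref{lambda update}, namely $\mathbf{\Lambda}_n^{k+1}=-\nabla f_n(\mathbf{\Phi}^{k+1})-L_n(\mathbf{\Phi}_n^{k+1}-\mathbf{\Phi}^{k+1})$, to rewrite
\[
\langle\mathbf{\Lambda}_n^{k+1},\mathbf{\Phi}_n^{k+1}-\mathbf{\Phi}^{k+1}\rangle=-\langle\nabla f_n(\mathbf{\Phi}^{k+1}),\mathbf{\Phi}_n^{k+1}-\mathbf{\Phi}^{k+1}\rangle-L_n\|\mathbf{\Phi}_n^{k+1}-\mathbf{\Phi}^{k+1}\|_F^2.
\]
Next I would apply the Descent Lemma (a direct consequence of Lemma~\ref{Lipschtiz continuous}, item \eqref{Lip phi n}) to the $L_n$-smooth function $f_n$, which gives
\[
f_n(\mathbf{\Phi}_n^{k+1})\leq f_n(\mathbf{\Phi}^{k+1})+\langle\nabla f_n(\mathbf{\Phi}^{k+1}),\mathbf{\Phi}_n^{k+1}-\mathbf{\Phi}^{k+1}\rangle+\frac{L_n}{2}\|\mathbf{\Phi}_n^{k+1}-\mathbf{\Phi}^{k+1}\|_F^2,
\]
so that $-\langle\nabla f_n(\mathbf{\Phi}^{k+1}),\mathbf{\Phi}_n^{k+1}-\mathbf{\Phi}^{k+1}\rangle\leq f_n(\mathbf{\Phi}^{k+1})-f_n(\mathbf{\Phi}_n^{k+1})+\frac{L_n}{2}\|\mathbf{\Phi}_n^{k+1}-\mathbf{\Phi}^{k+1}\|_F^2$.

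Combining these two ingredients, the $n$-th bracketed term in $\mathcal{L}$ satisfies
\[
f_n(\mathbf{\Phi}_n^{k+1})+\langle\mathbf{\Lambda}_n^{k+1},\mathbf{\Phi}_n^{k+1}-\mathbf{\Phi}^{k+1}\rangle+\frac{\rho_n}{2}\|\mathbf{\Phi}_n^{k+1}-\mathbf{\Phi}^{k+1}\|_F^2\geq f_n(\mathbf{\Phi}^{k+1})+\frac{\rho_n-L_n}{2}\|\mathbf{\Phi}_n^{k+1}-\mathbf{\Phi}^{k+1}\|_F^2,
\]
and under $\rho_n>5L_n$ (in fact $\rho_n\geq L_n$ already suffices here; the stronger condition is inherited from Lemma~\ref{successive difference}) the right-hand side is the sum of two nonnegative quantities. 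Summing over $n\in\mathcal{T}$ and adding $h(\alpha^{k+1},\mathbf{\Phi}^{k+1})\geq 0$ then yields $\mathcal{L}(\alpha^{k+1},\mathbf{\Phi}^{k+1},\{\mathbf{\Phi}_n^{k+1},\mathbf{\Lambda}_n^{k+1}\})\geq 0$, as claimed.

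The only subtle point, and the one I would double-check first, is the use of \eqref{lambda update}: it expresses the updated multiplier entirely in terms of primal quantities at iteration $k+1$, which is precisely what makes the cross term controllable by the $L_n$-smoothness of $f_n$. Without that substitution, $\langle\mathbf{\Lambda}_n^{k+1},\cdot\rangle$ would be uncontrollable and the lower bound would fail. Everything else reduces to the Descent Lemma and nonnegativity, so no further obstacle is expected.
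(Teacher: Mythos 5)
Your overall strategy is exactly the paper's: substitute the closed form $\mathbf{\Lambda}_n^{k+1}=-\nabla f_n(\mathbf{\Phi}^{k+1})-L_n(\mathbf{\Phi}_n^{k+1}-\mathbf{\Phi}^{k+1})$ into $\mathcal{L}_n$, use the $L_n$-smoothness of $f_n$ to absorb the cross term, and finish with $h\geq 0$, $f_n\geq 0$. But your key step has the inequality pointing the wrong way. After the substitution, the quantity you must bound from \emph{below} is
\[
f_n(\mathbf{\Phi}_n^{k+1})-\langle\nabla f_n(\mathbf{\Phi}^{k+1}),\mathbf{\Phi}_n^{k+1}-\mathbf{\Phi}^{k+1}\rangle+\Bigl(\tfrac{\rho_n}{2}-L_n\Bigr)\|\mathbf{\Phi}_n^{k+1}-\mathbf{\Phi}^{k+1}\|_F^2,
\]
so you need a \emph{lower} bound on $-\langle\nabla f_n(\mathbf{\Phi}^{k+1}),\mathbf{\Phi}_n^{k+1}-\mathbf{\Phi}^{k+1}\rangle$. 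The Descent Lemma as you invoke it supplies an \emph{upper} bound on that term, and substituting an upper bound for one summand can only yield $\mathcal{L}_n\leq f_n(\mathbf{\Phi}^{k+1})+\frac{\rho_n-L_n}{2}\|\cdot\|_F^2$, not the ``$\geq$'' you wrote. Your parenthetical claim that $\rho_n\geq L_n$ already suffices is an artifact of this reversal and is unsupported.

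The gap is repairable without changing your architecture: $L_n$-Lipschitz continuity of $\nabla f_n$ gives the two-sided estimate $|f_n(\mathbf{y})-f_n(\mathbf{x})-\langle\nabla f_n(\mathbf{x}),\mathbf{y}-\mathbf{x}\rangle|\leq\frac{L_n}{2}\|\mathbf{y}-\mathbf{x}\|_F^2$, and the side you actually need is $f_n(\mathbf{\Phi}_n^{k+1})\geq f_n(\mathbf{\Phi}^{k+1})+\langle\nabla f_n(\mathbf{\Phi}^{k+1}),\mathbf{\Phi}_n^{k+1}-\mathbf{\Phi}^{k+1}\rangle-\frac{L_n}{2}\|\cdot\|_F^2$. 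With that correction your one-step route gives $\mathcal{L}_n\geq f_n(\mathbf{\Phi}^{k+1})+\frac{\rho_n-3L_n}{2}\|\mathbf{\Phi}_n^{k+1}-\mathbf{\Phi}^{k+1}\|_F^2$, which is nonnegative under $\rho_n>5L_n$ (indeed already under $\rho_n\geq 3L_n$) and is in fact slightly sharper than the paper's bound. The paper instead arrives at $\frac{\rho_n-5L_n}{2}$ by a two-step argument: it first replaces $\nabla f_n(\mathbf{\Phi}^{k+1})$ by $\nabla f_n(\mathbf{\Phi}_n^{k+1})$ at a cost of $L_n\|\cdot\|_F^2$ (Cauchy--Schwarz plus Lipschitz continuity), and only then applies the Descent Lemma centered at $\mathbf{\Phi}_n^{k+1}$, where the upper-bound direction is the correct one because the linearization point has been moved to $\mathbf{\Phi}_n^{k+1}$.
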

\begin{proof}
 First, we consider  $\mathcal{L}_n({\mathbf{\Phi}}^{k+1},\mathbf{\Phi}_n^{k+1},\mathbf{\Lambda}_n^{k+1})$. Plugging \eqref{lambda update} into $\mathcal{L}_n({\mathbf{\Phi}}^{k+1},\mathbf{\Phi}_n^{k+1},\mathbf{\Lambda}_n^{k+1})$,  it can be equivalent to
\begin{equation}\label{lag bound}
\begin{split}
&\mathcal{L}_n\left(\mathbf{\Phi}^{k+1}, \mathbf{\Phi}_n^{k+1},\mathbf{\Lambda}_n^{k+1}\right) \\
=&  f_n({\mathbf{\Phi}_n^{k+1}})+(\frac{\rho_n}{2}-L_n)\|\mathbf{\Phi}_n^{k+1}-\mathbf{\Phi}^{k+1}\|_F^2 \\
&\hspace{2.5cm}+\langle\nabla  f_n({\mathbf{\Phi}}^{k+1}), \mathbf{\Phi}^{k+1}-\mathbf{\Phi}_n^{k+1}\rangle.
\end{split}
\end{equation}
Since $\nabla f_n(\mathbf{\Phi})$ is Lipschitz continuous,  {the last term in \eqref{lag bound} satisfies the following inequality}
\[
  \begin{split}
    &\langle\nabla  f_n({\mathbf{\Phi}}^{k+1}), \mathbf{\Phi}^{k+1}-\mathbf{\Phi}_n^{k+1}\rangle \\
    \geq& \langle\nabla f_n({\mathbf{\Phi}}_n^{k+\!1}), \mathbf{\Phi}^{k+1}-\mathbf{\Phi}_n^{k+1}\rangle -L_n\|\mathbf{\Phi}_n^{k+1}-\mathbf{\Phi}^{k+1}\|^2_F.
  \end{split}
\]
Plugging it into \eqref{lag bound}, we can get
\begin{equation}\label{lag bound 2}
\begin{split}
& \mathcal{L}_n\left(\mathbf{\Phi}^{k+1}, \mathbf{\Phi}_n^{k+1},\mathbf{\Lambda}_n^{k+1}\right) \\
\geq& f_n({\mathbf{\Phi}_n^{k+1}})+\langle\nabla f_n({\mathbf{\Phi}}_n^{k+\!1}), \mathbf{\Phi}^{k+1}-\mathbf{\Phi}_n^{k+1}\rangle \\
&\hspace{1.9cm} +(\frac{\rho_n}{2}-2L_n)\|\mathbf{\Phi}_n^{k+1}-\mathbf{\Phi}^{k+1}\|_F^2.
\end{split}
\end{equation}
Moreover, we can also exploit $\nabla f_n(\mathbf{\Phi}_n)$'s Lipschitz continuous property and get the following inequality
\[
  \begin{split}
  &f_n({\mathbf{\Phi}^{k+1}})\leq f_n({\mathbf{\Phi}_n^{k+1}})+\langle\nabla f_n({\mathbf{\Phi}}_n^{k+\!1}), \mathbf{\Phi}^{k+1}-\mathbf{\Phi}_n^{k+1}\rangle \\
  &\hspace{3.4cm}+ \frac{L_n}{2}\|\mathbf{\Phi}_n^{k+1}-\mathbf{\Phi}^{k+1}\|_F^2.
  \end{split}
\]
Plugging it into \eqref{lag bound 2}, we can obtain
\begin{equation}\label{lag bound 3}
\begin{split}
&\mathcal{L}_n\left(\mathbf{\Phi}^{k+1}, \mathbf{\Phi}_n^{k+1},\mathbf{\Lambda}_n^{k+1}\right)\\
&\hspace{1cm}\geq  f_n({\mathbf{\Phi}^{k+1}})+ \frac{\rho_n-5L_n}{2}\|\mathbf{\Phi}_n^{k+1}-\mathbf{\Phi}^{k+1}\|_F^2.
\end{split}
\end{equation}

Second, plugging \eqref{lag bound 3} into $\mathcal{L}(\alpha^{k+1},\mathbf{\Phi}^{k+1},\!\{\mathbf{\Phi}_n^{k+1},\!\mathbf{\Lambda}_n^{k+1}\!,\! n\!\in\!\mathcal{T}\}\!)$, we can get
\begin{equation}\label{upp func bound}
\begin{split}
&\mathcal{L}(\alpha^{k+1},\mathbf{\Phi}^{k+1},\{\mathbf{\Phi}_n^{k+1},\mathbf{\Lambda}_n^{k+1}, n\in\mathcal{T}\})\\
 \geq& h(\alpha^{k+1},\mathbf{\Phi}^{k+1}) \\
&\hspace{0.15cm}+\sum_{n\in \mathcal{T}}\left(\!f_n(\mathbf{\Phi}^{k+1})+\frac{\rho_n\!-\!5L_n}{2}\|\mathbf{\Phi}_n^{k+1}\!-\!\mathbf{\Phi}^{k+1}\|_F^2\right).
\end{split}
\end{equation}
 {Since $h(\alpha^{k+1},\mathbf{\Phi}^{k+1})\geq0$ and $f_n(\mathbf{\Phi}^{k+1})\geq0$, we can conclude that, if $\rho_n>5L_n$,  $\mathcal{L}(\alpha^{k+1},\mathbf{\Phi}^{k+1},\{\mathbf{\Phi}_n^{k+1},\mathbf{\Lambda}_n^{k+1}, n\in\mathcal{T}\})\geq 0$.} This completes the proof. $\hfill\blacksquare$
\end{proof}

\section{Proof of theorem 1}
First, we prove \eqref{convergence variables} in Theorem 1.

 {The pre-conditions that satisfying Lemma \ref{successive difference} and Lemma \ref{lemma lower bound} hold are
 \begin{equation}\label{ineq_3}
  \begin{split}
  \bar{c}_{n}\!&=\!\rho_n^3\!-\!7\rho_n^2L_n\!-\!8\rho_nL_n^2\!-\!32L_n^3>0, \\
  \tilde{c}_{n}\!&=\! \rho_n^3\!-\!12\rho_nL_n^2\!-\!48L_n^3>0, \\
  \rho_n&>5L_n.
  \end{split}
  \end{equation}
Letting $\beta=\frac{\rho_n}{L_n}$ and plugging $\beta$ into the above inequalities, we can obtain
 \[
  \begin{split}
  &\beta^3-7\beta^2-8\beta-32>0, \\
  &\beta^3-12\beta-48>0, \\
  &\beta>5.
  \end{split}
  \]
The roots of the cubic function in the left side of the first inequality can be determined through the famous Cardano formula \cite{Cardano}, which are $4.72, -2.36+2.15i, -2.36-2.15i$.} Similarly, we can obtain the roots of the cubic function in the second inequality, which are  $8.41,-0.70 + 1.82i,-0.70 - 1.82i$. Combining the above results with $\beta>5$, we can find that when $\beta\geq8.41$, all the inequalities in \eqref{ineq_3} hold simultaneously. To simplify the description, we choose $\rho_n\geq 9L_n$.
Therefore, Lemma \ref{successive difference} and Lemma \ref{lemma lower bound} are tenable when $\rho_n\geq 9L_n, \forall n \in \mathcal{T}$.

Since Lemma \ref{successive difference} holds, we sum both sides of the inequality \eqref{succ-diff phi} when $k=1,2,\dotsb,+\infty$ and obtain
\[
  \begin{split}
  &\mathcal{L}(\alpha^{1},\mathbf{\Phi}^1, \{\mathbf{\Phi}_n^1,\mathbf{\Lambda}_n^1,n\in\mathcal{T}\}) \\
  &\hspace{1.2cm}-\lim_{k\rightarrow+\infty}\!\mathcal{L}(\alpha^{k+1},\!\mathbf{\Phi}^{k+1},\{\mathbf{\Phi}_n^{k+1},\mathbf{\Lambda}_n^{k+1},n\in\mathcal{T}\}) \\
\geq& \sum_{k=1}^{+\infty}\sum_{n\in \mathcal{T}}\frac{1}{2\rho_n^2}\left( \bar{c}_{n}\|{\mathbf{\Phi}_n^{k+1}}\!-\!{\mathbf{\Phi}_n^k}\|_F^2\!+\!\tilde{c}_{n} \|{\mathbf{\Phi}^{k+1}}\!-\!{\mathbf{\Phi}^{k}}\|_F^2 \right) \\
&\ \ +\sum_{k=1}^{+\infty}\frac{L}{2}\|{\mathbf{\Phi}^{k+1}}-{\mathbf{\Phi}^{k}}\|_F^2 +\sum_{k=1}^{+\infty}\frac{L_{\alpha}}{2}|\alpha^{k+1}-\alpha^k|^2.
  \end{split}
\]
 {
Since $\displaystyle\lim_{k\rightarrow+\infty}\!\!\mathcal{L}(\alpha^{k\!+\!1},\!\mathbf{\Phi}^{k\!+\!1},\{\mathbf{\Phi}_n^{k\!+\!1},\mathbf{\Lambda}_n^{k\!+\!1},\!n\!\in\!\mathcal{T}\}) \!\geq\!0$, we can get the following inequality
\begin{equation*}\label{del L}
  \begin{split}
\hspace{-0.1cm}&\mathcal{L}(\alpha^{1},\mathbf{\Phi}^1, \{\mathbf{\Phi}_n^1,\mathbf{\Lambda}_n^1,n\in\mathcal{T}\})\\
  &\geq\! \sum_{k=1}^{+\infty}\sum_{n\in \mathcal{T}}\frac{1}{2\rho_n^2}\left( \bar{c}_{n}\|{\mathbf{\Phi}_n^{k+1}}\!-\!{\mathbf{\Phi}_n^k}\|_F^2\!+\!\tilde{c}_{n} \|{\mathbf{\Phi}^{k+1}}\!-\!{\mathbf{\Phi}^{k}}\|_F^2 \right) \\
&\hspace{0.2cm}\ \ +\sum_{k=1}^{+\infty}\frac{L}{2}\|{\mathbf{\Phi}^{k+1}}-{\mathbf{\Phi}^{k}}\|_F^2 +\sum_{k=1}^{+\infty}\frac{L_{\alpha}}{2}|\alpha^{k+1}-\alpha^k|^2.
  \end{split}
\end{equation*}
 Since $\bar{c}_n\!>\!0$, $\tilde{c}_n\!>\!0$, $L>0$, and $L_{\alpha}>0$, the above inequality indicates that summation of infinite positive terms is less than some constant.} Therefore, we can obtain \eqref{convergence limit PHI n}, \eqref{convergence limit PHI}, and \eqref{convergence limit alpha}.
\begin{equation}
\lim\limits_{k\rightarrow+\infty}\|\mathbf{\Phi}_n^{k+1}-\mathbf{\Phi}_n^{k}\|_F= 0, ~\forall~ n\in \mathcal{T}. \label{convergence limit PHI n}
\end{equation}
\begin{equation} \lim\limits_{k\rightarrow+\infty}\|\mathbf{\Phi}^{k+1}-\mathbf{\Phi}^{k}\|_F= 0. \label{convergence limit PHI}
\end{equation}
\begin{equation}
\lim\limits_{k\rightarrow+\infty}|\alpha^{k+1}-\alpha^{k}|= 0. \label{convergence limit alpha}
\end{equation}
Plugging \eqref{convergence limit PHI n}, \eqref{convergence limit PHI} into \eqref{difference lambda}'s right side, we can get
\begin{equation}\label{conv Lambda}
\lim\limits_{k\rightarrow+\infty}\|\mathbf{\Lambda}_n^{k+1}\!-\!\mathbf{\Lambda}_n^{k}\|_F\! = \!0.
\end{equation}
Combining \eqref{conv Lambda} and \eqref{relax ADMM c}, we further have
\begin{equation}\label{convergence diff}
\lim\limits_{k\rightarrow+\infty}\|\mathbf{\Phi}_n^{k+1}-\mathbf{\Phi}^{k+1}\|_F = 0.
\end{equation}

 { Since $\alpha\in [0, \alpha_{\max})$ and $0\preceq\mathbf{\Phi}\prec2\pi$, we can obtain the following convergence results from \eqref{convergence limit PHI} and \eqref{convergence limit alpha}.
\begin{subequations}\label{convergence_alpha_Phi}
    \begin{align}
    &\lim\limits_{k\rightarrow+\infty}\alpha^{k}\!=\!\alpha^{*}, \label{convergence_alpha_Phi a}\\ &\lim\limits_{k\rightarrow+\infty}\mathbf{\Phi}^{k}\!=\!\mathbf{\Phi}^{*}. \label{convergence_alpha_Phi b}
    \end{align}
\end{subequations}
Plugging \eqref{convergence_alpha_Phi b} into \eqref{convergence diff}, we can conclude
\begin{equation}\label{convergence_Phin}
\lim\limits_{k\rightarrow+\infty}\mathbf{\Phi}_n^{k}=\mathbf{\Phi}_n^{*}=\mathbf{\Phi}^{*}. \end{equation}
Plugging \eqref{convergence diff} into \eqref{lambda update}, we can derive
  \begin{equation}\label{g_lambda}
   \lim\limits_{k\rightarrow+\infty} \mathbf{\Lambda}_n^k = -\nabla f_n({\mathbf{\Phi}^k}).
   \end{equation}
Since $\|\nabla f_n({\mathbf{\Phi}})\!-\!\nabla f_n( \hat{\mathbf{\Phi}})\|_F\!\leq \! L_n\|{\mathbf{\Phi}}\!-\!\hat{\mathbf{\Phi}}\|_F, \ n\in \mathcal{T}$ and $0\preceq \mathbf{\Phi},\hat{\mathbf{\Phi}} \prec2\pi$, we can conclude that all the elements in $\nabla f_n({\mathbf{\Phi}})$ are bounded. From \eqref{g_lambda}, it indicates  that $\mathbf{\Lambda}_n^k$ is also bounded. Combining this result with \eqref{conv Lambda}, we can get
\begin{equation}\label{convergence lambda}
\lim\limits_{k\rightarrow+\infty}\mathbf{\Lambda}_n^{k}\!=\!\mathbf{\Lambda}_n^{*}, \forall n\in \mathcal{T}.
\end{equation}}

Second, we prove $(\alpha^*,\mathbf{\Phi}^{*})$ is a stationary point of problem \eqref{problem_without_constraint}.

 {
Since $\{\!\alpha^{k+1}\!,\mathbf{\Phi}^{k+1}\!\}\!\!=\!\!\!\!\underset{\alpha\in(0, \alpha_{\max}],\atop{{0\preceq\mathbf{\Phi}\prec2\pi}}} {\arg \min}\! \mathcal{U}\!\left(\alpha,\!\mathbf{\Phi}, \! \{\mathbf{\Phi}_n^k,\mathbf{\Lambda}_n^k,n\in\mathcal{T}\}\right)$  and  $\mathcal{U}\left(\alpha,\mathbf{\Phi}, \{\mathbf{\Phi}_n^k, \mathbf{\Lambda}_n^k, n\in\mathcal{T}\}\right)$ is convex quadratic function with respect to $\alpha$ and $\mathbf{\Phi}$, we have the following stationary conditions.
\[
\begin{split}
\hspace{-0.2cm}&\langle\nabla_{\!\alpha}\mathcal{U}\!\left(\!\alpha^{k\!+\!1}\!,\mathbf{\Phi}^{k\!+\!1}\!\!,\{\!\mathbf{\Phi}_n^k, \mathbf{\Lambda}_n^k,\!n\!\in\!\mathcal{T}\!\}\!\right)\!,\!\alpha\!-\!\alpha^{k\!+\!1}\rangle\!\!\geq\!\!0,\!\alpha\!\in\![0,\!\alpha_{\rm max}), \\
\hspace{-0.2cm}&\langle\nabla_{\!\mathbf{\Phi}}\mathcal{U}\!\left(\!\alpha^{k\!+\!1}\!,\mathbf{\Phi}^{k\!+\!1}\!, \{\!\mathbf{\Phi}_n^k, \! \mathbf{\Lambda}_n^k\!,\!n\!\in\!\mathcal{T}\!\}\!\right)\!,\!\mathbf{\Phi}\!-\!\mathbf{\Phi}^{k\!+\!1}\rangle\! \geq \!\!0,{0\!\preceq\!\mathbf{\Phi}\!\prec\!2\pi}.
\end{split}
\]
Plugging
\begin{equation*}
 \begin{split}
   & \nabla_{\alpha} \mathcal{U}(\alpha^{k+1},\mathbf{\Phi}^{k+1}, \{\mathbf{\Phi}_n^k, \mathbf{\Lambda}_n^k, n\!\in\!\mathcal{T}\})\\
    &\hspace{2.8cm}= \nabla_{\alpha} h(\alpha^k,{\mathbf{\Phi}}^{k})\!+ L_{\alpha}(\alpha^{k+1}\!-\!\alpha^{k}),\\
   &  \nabla_{\mathbf{\Phi}} \mathcal{U}(\alpha^{k+1},\mathbf{\Phi}^{k+1}, \{\mathbf{\Phi}_n^k, \mathbf{\Lambda}_n^k, n\!\in\!\mathcal{T}\})\!= \!\nabla_{\mathbf\Phi} h(\alpha^k, {\mathbf{\Phi}}^{k})\!\!\\
    &\hspace{1.8cm} +L({\mathbf{\Phi}^{k+1}}\!-\!{\mathbf{\Phi}}^{k})\! -\!\sum_{n\in \mathcal{T}}\left( \mathbf{\Lambda}_n^k\!+\!\rho_n({\mathbf{\Phi}}^{k+1}\!-\!{\mathbf{\Phi}_n^{k}})\right)
 \end{split}
\end{equation*}
into the above stationary conditions, we can get
\begin{equation}\label{stationary h_ori}
\begin{split}
&\big\langle\!\nabla_{\alpha}\!h\!\left(\!\alpha^k\!,\mathbf{\Phi}^{k}\right)\!\!+\!\!L_{\alpha}\!(\!\alpha^{k\!+\!1}\!\!\!-\!\!\alpha^k),\!\alpha\!\!-\!\alpha^{k\!+\!1}\!\big\rangle\!\geq\!0,\alpha\!\in\![0,\alpha_{\rm max}\!),\\
&\bigg\langle\!\nabla_{\mathbf{\Phi}}h\left(\!\alpha^{k},\mathbf{\Phi}^k\right)\!+\!L(\mathbf{\Phi}^{k\!+\!1}\!-\!\mathbf{\Phi}^k)\\
&-\!\!\!\sum_{n\in\mathcal{T}}\!\!\!\left( \rho_n(\mathbf{\Phi}^{k\!+\!1}\!\!-\!\mathbf{\Phi}_n^k)\!\!+\!\mathbf{\Lambda}_n^k\right)\!,\!\mathbf{\Phi}\!-\!\mathbf{\Phi}^{k\!+\!1}\!\bigg\rangle\!\!\geq\!0,{0\!\preceq\!\mathbf{\Phi}\!\prec\!2\pi}.
\end{split}
\end{equation}}

When $k\rightarrow+\infty$, plugging convergence results \eqref{convergence_alpha_Phi}, \eqref{convergence_Phin}, and \eqref{convergence lambda} into \eqref{stationary h_ori}, we can obtain
\begin{subequations}\label{stationary h}
\begin{align}
&\big\langle\nabla_{\alpha}h\left(\alpha^*,\mathbf{\Phi}^*\right),\alpha-\alpha^*\big\rangle\geq0, \ \alpha\in [0,\alpha_{\max}), \label{stationary h alpha}\\
&\bigg\langle\nabla_{\mathbf{\Phi}}h\!\left(\!\alpha^*,\mathbf{\Phi}^*\right)\!-\!\!\sum_{n\in\mathcal{T}}\mathbf{\Lambda}_n^*,\mathbf{\Phi}\!-\!\mathbf{\Phi}^*\bigg\rangle\geq0, {{0\!\preceq\!\mathbf{\Phi}\!\prec\!2\pi}.\label{stationary h phi}}
\end{align}
\end{subequations}
Since $\nabla f_n({\mathbf{\Phi}}^*)= - \mathbf{\Lambda}_n^*, \forall n\in\mathcal{T}$, then \eqref{stationary h phi} can be further derived as
\begin{equation}\label{stationary phi}
\bigg\langle\!\!\nabla_{\mathbf{\Phi}}h\left(\alpha^*,\mathbf{\Phi}^*\right)\!+\!\!\sum_{n\in\mathcal{T}}\!\nabla f_n({\mathbf{\Phi}}^*),\mathbf{\Phi}\!-\!\mathbf{\Phi}^*\!\!\bigg\rangle\!\!\geq\!0, {0\!\preceq\!\mathbf{\Phi}\!\prec\!2\pi}.
\end{equation}
Moreover, since $e\left(\alpha,\mathbf{X}(\mathbf{\Phi})\right)=h(\alpha,\mathbf{\Phi})$ and $P_c\left(\mathbf{X}(\mathbf{\Phi})\right)=\sum\limits_{n\in\mathcal{T}} f_n({\mathbf{\Phi}})$,
\eqref{stationary h alpha} and \eqref{stationary phi} can be rewritten as
\[
\begin{split}
&\langle\nabla_{\alpha} e\left(\alpha^*,\mathbf{X}(\mathbf{\Phi}^*)\right),{\alpha}-{\alpha^*}\rangle\geq0,\ \alpha\in (0,\alpha_{\max}], \\
&\langle\nabla_{\mathbf{\Phi}} e\left(\alpha^*,\mathbf{X}(\mathbf{\Phi}^*)\right)\!+\!\nabla  P_c\left(\mathbf{X}(\mathbf{\Phi}^*)\right),{\mathbf{\Phi}}\!-\!{\mathbf{\Phi}^{*}}\rangle\!\geq\!0, {0\!\preceq\!\mathbf{\Phi}\!\prec\!2\pi},
\end{split}
\]
which completes the proof. $\hfill\blacksquare$

\section{Proof of the convergence of the proposed algorithm with SBCD method}
 {
 Performing expectation on both sides of \eqref{succ-diff phi}, we can get the following inequality.
\begin{equation}\label{expectation sum diff}
\begin{split}
\!\!\!\!&\mathbb{E}\bigg[\mathcal{L}(\alpha^{k},\mathbf{\Phi}^k, \{\mathbf{\Phi}_n^k,\mathbf{\Lambda}_n^k, n\in\mathcal{T}\})\\
&\hspace{2cm}-\mathcal{L}(\alpha^{k\!+\!1},\mathbf{\Phi}^{k\!+\!1},\{\mathbf{\Phi}_n^{k\!+\!1},\mathbf{\Lambda}_n^{k\!+\!1}, n\in\mathcal{T}\})\bigg]\\
\!\!\!\!\!\geq& \sum_{n\in \mathcal{T}}\frac{p_{\rm min}}{2\rho_n^2}\left( \bar{c}_{n}\|{\mathbf{\Phi}_n^{k+1}}\!-\!{\mathbf{\Phi}_n^k}\|_F^2\!+\!\tilde{c}_{n} \|{\mathbf{\Phi}^{k+1}}\!-\!{\mathbf{\Phi}^{k}}\|_F^2 \right) \\
\!\!\!\!\!&+\frac{L}{2}\|{\mathbf{\Phi}^{k+1}}-{\mathbf{\Phi}^{k}}\|_F^2 +\frac{L_{\alpha}}{2}|\alpha^{k+1}-\alpha^k|^2,
\end{split}
\end{equation}
where $p_{\min}\geq0$ is the probability and $\bar{c}_{n}\!=\!\rho_n^3\!-\!7\rho_n^2L_n\!-\!8\rho_nL_n^2\!-\!32L_n^3$ and $\tilde{c}_{n}\!=\! \rho_n^3\!-\!12\rho_nL_n^2\!-\!48L_n^3$. We set $\rho_n\geq 9L_n,\forall n \in \mathcal{T}$ to guarantee $\bar{c}_{n}>0$ and $\tilde{c}_{n}>0$. Then, the augmented Lagrangian function {\it decreases sufficiently} in each consensus-ADMM-SBCD iteration.}

 {
Performing expectation on both sides of \eqref{upp func bound}, we can get the following inequality.
\begin{equation}\label{pro lower bound}
\begin{split}
\!\!\!\!&\mathbb{E}\bigg[\!\mathcal{L}(\alpha^{k+1}\!\!,\mathbf{\Phi}^{k+1}\!\!,\!\{\mathbf{\Phi}_n^{k+1}\!\!,\mathbf{\Lambda}_n^{k+1}\!,\!n\in\mathcal{T}\})\!\bigg]\!\!\geq \!\!h(\alpha^{k+1},\mathbf{\Phi}^{k+1}) \\
\!\!\!\!&\hspace{0.5cm}+\!\!\sum_{n\in \mathcal{T}}\!p_n\!\left(\!f_n(\mathbf{\Phi}^{k+1})\!+\!\frac{\rho_n\!-\!5L_n}{2}\|\mathbf{\Phi}_n^{k+1}\!-\!\mathbf{\Phi}^{k+1}\|_F^2\!\right),
\end{split}
\end{equation}
where $p_n>0$ is the probability. Since $h(\alpha^{k+1},\mathbf{\Phi}^{k+1})\geq0$ and $\rho_n\geq5L_n, \forall n \in \mathcal{T}$, from \eqref{pro lower bound} we can conclude
\begin{equation}\label{bound L}
\displaystyle\lim_{k\rightarrow+\infty}\mathbb{E}\left[\mathcal{L}(\alpha^{k+1},\!\mathbf{\Phi}^{k+1},\{\mathbf{\Phi}_n^{k+1},
\mathbf{\Lambda}_n^{k+1},n\in\!\mathcal{T}\})\right]\!\geq\!0.
\end{equation}}

 {Summing both sides of the inequality \eqref{expectation sum diff} for $k=1,2,\dotsb,+\infty$, we can obtain
 \[
  \begin{split}
\hspace{-0.1cm}&\mathbb{E}\bigg[\mathcal{L}(\alpha^{1},\mathbf{\Phi}^1, \{\mathbf{\Phi}_n^1,\mathbf{\Lambda}_n^1,n\in\mathcal{T}\})\!\bigg]\\
&\hspace{1cm} -\displaystyle\lim_{k\rightarrow+\infty}\mathbb{E}\left[\mathcal{L}(\alpha^{k\!+\!1},\!\mathbf{\Phi}^{k\!+\!1},\{\mathbf{\Phi}_n^{k\!+\!1},
\mathbf{\Lambda}_n^{k\!+\!1},n\in\!\mathcal{T}\})\right] \\
 &\geq\! \sum_{k=1}^{+\infty}\sum_{n\in \mathcal{T}}\frac{p_{\rm min}}{2\rho_n^2}\left( \bar{c}_{n}\|{\mathbf{\Phi}_n^{k+1}}\!-\!{\mathbf{\Phi}_n^k}\|_F^2\!+\!\tilde{c}_{n} \|{\mathbf{\Phi}^{k+1}}\!-\!{\mathbf{\Phi}^{k}}\|_F^2 \right) \\
&+\sum_{k=1}^{+\infty}\frac{L}{2}\|{\mathbf{\Phi}^{k+1}}-{\mathbf{\Phi}^{k}}\|_F^2 +\sum_{k=1}^{+\infty}\frac{L_{\alpha}}{2}|\alpha^{k+1}-\alpha^k|^2.
  \end{split}
 \]
Plugging \eqref{bound L} into above inequality, it can be simplified as
 \begin{equation*}
  \begin{split}
\hspace{-0.1cm}&\mathbb{E}\bigg[\mathcal{L}(\alpha^{1},\mathbf{\Phi}^1, \{\mathbf{\Phi}_n^1,\mathbf{\Lambda}_n^1,n\in\mathcal{T}\})\!\bigg]\\
 &\geq\! \sum_{k=1}^{+\infty}\sum_{n\in \mathcal{T}}\frac{p_{\rm min}}{2\rho_n^2}\left( \bar{c}_{n}\|{\mathbf{\Phi}_n^{k+1}}\!-\!{\mathbf{\Phi}_n^k}\|_F^2\!+\!\tilde{c}_{n} \|{\mathbf{\Phi}^{k+1}}\!-\!{\mathbf{\Phi}^{k}}\|_F^2 \right) \\
&+\sum_{k=1}^{+\infty}\frac{L}{2}\|{\mathbf{\Phi}^{k+1}}-{\mathbf{\Phi}^{k}}\|_F^2 +\sum_{k=1}^{+\infty}\frac{L_{\alpha}}{2}|\alpha^{k+1}-\alpha^k|^2.
  \end{split}
  \end{equation*}}

 {Since $\bar{c}_n\!>\!0$, $\tilde{c}_n\!>\!0$, $L>0$, and $L_{\alpha}>0$, the above inequality indicates that summation of infinite positive terms is less than some constant. Then, we can conclude the following results
\begin{equation}
\begin{split}
&\lim\limits_{k\rightarrow+\infty}\|\mathbf{\Phi}_n^{k+1}-\mathbf{\Phi}_n^{k}\|_F= 0, ~\forall~ n\in \mathcal{T},\\
& \lim\limits_{k\rightarrow+\infty}\|\mathbf{\Phi}^{k+1}-\mathbf{\Phi}^{k}\|_F= 0,\\
&\lim\limits_{k\rightarrow+\infty}|\alpha^{k+1}-\alpha^{k}|= 0.
\end{split}
\end{equation}
Since \eqref{relax ADMM c} and \eqref{difference lambda} hold, we can further get
\begin{equation}\label{convergence Lambda}
\begin{split}
&\lim\limits_{k\rightarrow+\infty}\|\mathbf{\Lambda}_n^{k+1}\!-\!\mathbf{\Lambda}_n^{k}\|_F\!\! = \!\!0, \\ &\lim\limits_{k\rightarrow+\infty}\|\mathbf{\Phi}_n^{k+1}\!-\!\mathbf{\Phi}^{k+1}\|_F \!= \!0,\forall n\in \mathcal{T}.
\end{split}
\end{equation}
Through similar discussions \eqref{convergence_alpha_Phi} to \eqref{convergence lambda}, we can obtain the following convergence results
\begin{equation}
\begin{split}
&\lim\limits_{k\rightarrow+\infty}\alpha^{k}=\alpha^{*}, \ \ \ \lim\limits_{k\rightarrow+\infty}\mathbf{\Phi}^{k}=\mathbf{\Phi}^{*},\\
&\lim\limits_{k\rightarrow+\infty}\mathbf{\Phi}_n^{k}=\mathbf{\Phi}_n^{*}, \ \
\lim\limits_{k\rightarrow+\infty}\mathbf{\Lambda}_n^{k}=\mathbf{\Lambda}_n^{*},\\
& \ \ \mathbf{\Phi}^{*}=\mathbf{\Phi}_n^{*},\ \ \forall~n\in \mathcal{T},
\end{split}
\end{equation}
which concludes the proof.}
 $\hfill\blacksquare$

\ifCLASSOPTIONcaptionsoff
  \newpage
\fi

\begin{spacing}{1.0}

\begin{IEEEbiography}[{\includegraphics[width=1in,height=1.3in,clip,keepaspectratio]{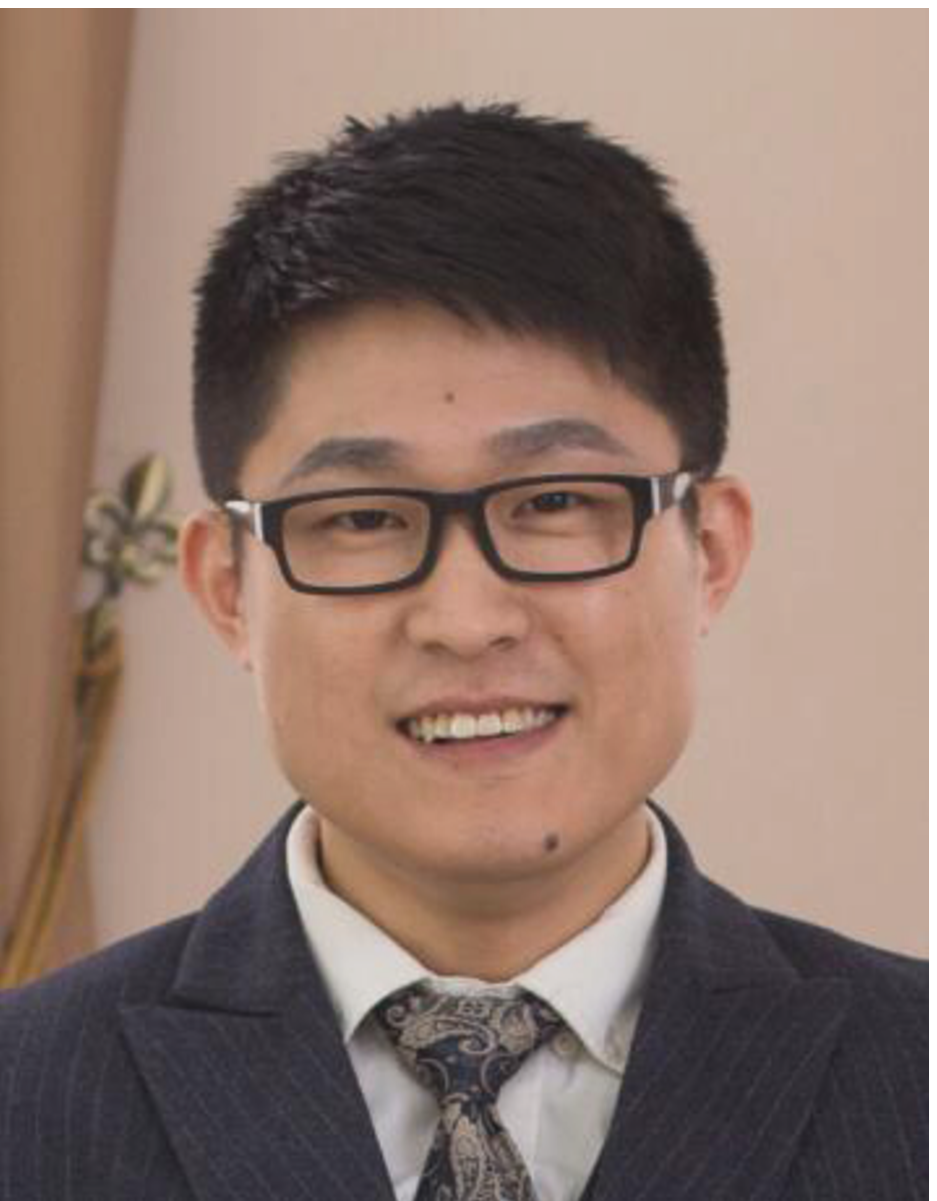}}]
{\bf Jiangtao Wang} received the B.E. degree in electronic information engineering from Shandong Normal University,
Jinan, China, in 2011. He is currently working toward the Ph.D. degree in communication and information systems in Xidian University.
His research interests are convex optimization and efficient algorithms with applications in sequence set design, MIMO radar waveform design and communication signal processing.
\end{IEEEbiography}

\begin{IEEEbiography}[{\includegraphics[width=1in,height=1.3in,clip,keepaspectratio]{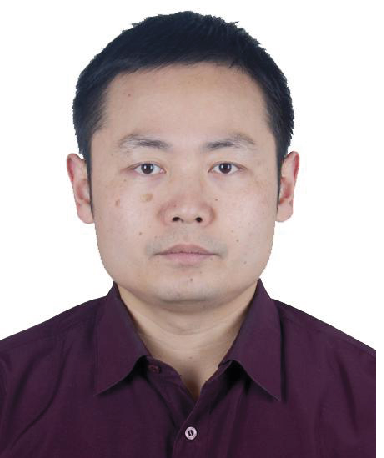}}]
{\bf Yongchao Wang}
received the B.E. degree in communication engineering, M.E. and Ph.D. degrees in information and communication engineering from
Xidian University, Xi¡¯an, China, in 1998, 2004, and 2006 respectively. From September 2008 to January 2010, he was a one-year visiting scholar and then
a post-doctoral in ECE department of University of Minnesota, USA. From March 2016 to March 2017, he was a visiting scholar in Purdue University, USA.
From 2012, he owns a full processor position of ISN key state Lab. in Xidian University.
His research interests mainly lie in the areas of signal processing for communications, machine learning algorithms and their applications. To date, he has published
more than 30 peer-reviewed papers as the first author or corresponding author and issued about 20 national patents. Moreover, he is a recipient of several
awards, such as prize in progress of science and technology from Ministry of Education of the People¡¯s Republic of China, Shaanxi government, and
Xidian University. His several research works have been applied to several real telecommunication systems.
\end{IEEEbiography}

\end{spacing}

\end{document}